\newcommand{\appcite}[1]{the appendix}
\begin{document}

\pgfplotstableread[col sep = comma]{data/hmm-data.csv}\hmmdata
\pgfplotstableread[col sep = comma]{data/ladder-long.csv}\ladderlongdata
\pgfplotstableread[col sep = comma]{data/ladder4.csv}\ladderfour
\pgfplotstableread[col sep = comma]{nested-pineappl/pineappl-nested.csv}\Nested
\pgfplotstableread[col sep = comma]{nested-pineappl/pineappl-nested-fit.csv}\Nestedfit

\title{Scaling Optimization over Uncertainty via Compilation}
\title{Scaling Optimization over Uncertainty via Compilation}

\author{Minsung Cho}
\orcid{0009-0006-6170-6033}
\affiliation{%
  \institution{Northeastern University}
  \city{Boston}
  \country{USA}
}
\email{minsung@ccs.neu.edu}

\author{John Gouwar}
\orcid{0000-0003-0494-7245}
\affiliation{%
  \institution{Northeastern University}
  \city{Boston}
  \country{USA}
}
\email{gouwar.j@northeastern.edu}

\author{Steven Holtzen}
\orcid{0000-0002-8190-5412}
\affiliation{%
  \institution{Northeastern University}
  \city{Boston}
  \country{USA}
}
\email{s.holtzen@northeastern.edu}

\begin{CCSXML}
<ccs2012>
   <concept>
       <concept_id>10002950.10003648.10003649.10003653</concept_id>
       <concept_desc>Mathematics of computing~Decision diagrams</concept_desc>
       <concept_significance>500</concept_significance>
       </concept>
   <concept>
       <concept_id>10002950.10003648.10003662</concept_id>
       <concept_desc>Mathematics of computing~Probabilistic inference problems</concept_desc>
       <concept_significance>500</concept_significance>
       </concept>
 </ccs2012>
\end{CCSXML}

\ccsdesc[500]{Mathematics of computing~Probabilistic inference problems}
\ccsdesc[500]{Mathematics of computing~Decision diagrams}

\keywords{probabilistic programming languages, maximum expected utility, maximum marginal a posteriori.}  


\begin{abstract}
Probabilistic inference is fundamentally hard, yet many tasks require
optimization on top of inference, which is even harder.  We present a new
\textit{optimization-via-compilation} strategy to scalably solve a certain
class of such problems.  In particular, we introduce a new intermediate
representation (IR), binary decision diagrams weighted by a novel notion of
\textit{branch-and-bound semiring}, that enables a scalable branch-and-bound
based optimization procedure. This IR automatically \textit{factorizes}
problems through program structure and \textit{prunes} suboptimal values via a
straightforward branch-and-bound style algorithm to find optima.
Additionally, the IR is naturally amenable to \textit{staged compilation},
allowing the programmer to query for optima mid-compilation to inform further
executions of the program.  We showcase the effectiveness and flexibility of
the IR by implementing two performant languages that both compile to it:
$\dappl$ and $\pineappl$.  $\dappl$ is a functional language that solves
maximum expected utility problems with first-class support for rewards,
decision making, and conditioning.  $\textsc{pineappl}$ is an imperative
language that performs exact probabilistic inference with support for nested
marginal maximum a posteriori (MMAP) optimization via staging.
\end{abstract}

\maketitle

\section{Introduction}\label{sec:introduction}

The Achilles' heel of probabilistic programming languages (PPLs) is \textit{scalability}.
The primary task of probabilistic programs,
probabilistic inference,
is \#P-hard~\citep{roth1996hardness} even when restricted to only Boolean random variables,
which amounts to counting accepting inputs
for an NP-complete problem.
Intuitively, this complexity stems from a \textit{state-space explosion}:
there are exponentially many probabilistic outcomes in the number of random variables,
and one must add up the probability of an
arbitrarily large subset of these outcomes to perform inference.

Monumental strides have been taken to make PPLs scalable.
One such stride is the development of the \textit{reasoning-via-compilation} scheme,
which is currently the
state-of-the-art approach for exact inference for many kinds of
probabilistic programs and graphical models~\citep{holtzen2020scaling,fierens2015inference,li2023scallop}.
The essence of reasoning-via-compilation is to identify \emph{tractable target
languages} that (1) support efficient reasoning, and (2) exploit
program structure to scale.
Tractable target languages capture a class of tractable problem
instances:
for example, in probabilistic
inference, knowledge compilation data-structures like binary decision diagrams (BDDs),
despite their inexpressiveness as a language~\citep{darwiche2002knowledge},
have proven very successful in practice as tractable targets because
they scale by exploiting conditional independence, a property that is abundant in many
real-world probabilistic programs~\citep{holtzen2020scaling}.

But, many practical real-world problems require additional reasoning \textit{on top of} inference,
increasing the complexity of an already hard problem.
In this paper, we focus on the additional task of \textit{optimization over inference},
in which an objective function over probabilistic inference must be optimized.
Such problems are ubiquitous and have been studied through the lenses of
game theory~\citep{osborne2004introduction},
probabilistic graphical models~\citep{howard2002comments,puterman1978modified,koller2009probabilistic},
reinforcement learning~\citep{busoniu2008comprehensive},
and beyond.
Often, such tasks require \textit{meta-reasoning}, or nested reasoning,
in which computed optimal values inform the next step of inference, which
serves to increase the complexity of reasoning about such problems~\citep{lew2023probabilistic,rainforth2018nesting,zhang2022reasoning}.
Despite their inherent difficulty,
optimization problems over inference have had broad applicability in
medical diagnosis~\citep{heckerman1992toward,lee2014applying},
image segmentation~\citep{bioucas2016bayesian},
and AI planning~\citep{kiselev2014policy}.

The high complexity of optimization over inference has two root causes.
The first is the state-space explosion as described before.
The second is \textit{search-space explosion}: to find the optimal value,
in the worst case one must traverse and compare all possible values the objective function can take,
which often causes significant blowup.
Indeed, the complexity of the two optimization problems we will study in this paper,
maximum expected utility (MEU) and marginal maximum a posteriori (MMAP),
is $NP^{PP}$-hard, so it is still $NP$-hard even with a probabilistic polynomial-time ($PP$) oracle
to perform fast inference~\citep{maua2016equivalences}.

MEU and MMAP are examples of \emph{discrete finite-horizon decision-making
problems with deterministic policies}.
Such decision-making problems are quite common in diagnosis and
planning, and have typically been represented using decision-theoretic Bayesian networks~\citep[Ch. 16]{russell2016artificial}
and influence
diagrams~\citep{howard2005influence,sanner2010relational}.
Despite their intractability, they are remarkably simple, lacking features such as loops and continuous random variables,
differentiating them from related decision-making problems under uncertainty such as
Markov decision processes (MDPs)~\citep{sutton2018reinforcement} or
optimal value-of-information problems where the goal
is to decide what kinds of events to observe~\citep[\S 16.6]{russell2016artificial}.

\textit{What is an effective target language to express problems such as MEU and MMAP}?
Generalizing the reasoning-via-compilation perspective,
we present \textit{optimization-via-compilation}, a compilation scheme
supporting efficient probabilistic inference and, additionally, \textit{efficient pruning} of non-optimal values,
at the cost of no builtin loops or continuous random variables \'a la BDDs.
Our new tractable target language, which we call the \emph{branch-and-bound
intermediate representation} (BBIR),
\textit{factorizes} the state space of a probabilistic program to manage state-space explosion
and \textit{prunes} the search space
via a branch-and-bound approach to manage search-space explosion.
The compilation in BBIR can also be \textit{staged}, in which a partially compiled BBIR can
be queried for optimal values to be used further along in compilation, allowing for
\textit{meta-optimization}.
This culminates in $\dappl$ and $\pineappl$,
simple discrete probabilistic languages with bounded loops
expressing MEU and MMAP problems,
that demonstrate the performance and generality of optimization-via-compilation, as laid out by~\cref{fig:bb-overview}.
\begin{figure}
  \centering
  \scalebox{0.8}{
  \begin{tikzpicture}[decoration=snake]
    \node[draw, above=3pt] (dappl) {\dappl{}};
    \node[draw, below=3pt] (pineappl) {\pineappl{}};
    \node[draw, right = 100pt] (wbf) {Semiring-weighted Boolean formula};
    \node[draw, right = of wbf] (bbir) {BBIR};
    \node[draw, right = of bbir, xshift=15pt] (out) {Output};
    \draw[->, decorate] (dappl) -- (wbf) node[midway, above=5pt] {\cref{sec:dappl}};
    \draw[->, decorate] (pineappl) -- (wbf) node[midway, below=5pt] {\cref{sec:pineappl}};
    \draw[->, decorate] (wbf) -- (bbir) node[midway, above] {};
    \draw[->] (bbir) edge [loop] (bbir) node[right, above=30pt] {\cref{sec:bbir}};
    \draw[->, decorate] (bbir) -- (out) node[midway, below=5pt] {\cref{sec:bbir}};
  \end{tikzpicture}
  }
  \caption{Overview of the optimization-via-compilation scheme and associated sections of the paper. }
  \label{fig:bb-overview}
\end{figure}
In sum, we make the following contributions:
\begin{itemize}[leftmargin=*]
  \item (\cref{sec:bbir}): We identify a new intermediate
  representation for solving max-over-sum problems called the
  \emph{branch-and-bound intermediate representation} (BBIR).  The key feature
  of BBIR is that it supports efficient (i.e., polynomial-time) computation of
  upper-bounds of partially computed values of the objective function,
  at the cost of lacking support for dynamically bounded loops, almost surely-terminating loops and continuous random variables.
  We show how the BBIR can represent optimization problems over inference, and how BBIR admits an algorithm that uses
  its efficient upper bounds to find optima via pruning.
  \item (\cref{sec:dappl}): We develop \dappl{}, a discrete-valued
  functional decision-theoretic probabilistic programming language with Bayesian
  conditioning. We give a semantics-preserving
  compilation scheme from \dappl{} to BBIR, and prove it correct.
  \item (\cref{sec:pineappl}): We develop $\pineappl$, a discrete-valued
  imperative probabilistic programming language in which MMAP queries, a meta-optimization query,
  are a first-class primitive
  on top of inference. This mid-program optimization is performed using
  staged compilation~\citep{chambers2002staged,rompf2010lightweight,devito2013terra} and
  querying of partially compiled BBIR, which we again prove sound with respect to
  the semantics of $\pineappl$.
  \item (\cref{sec:eval}): We empirically validate the effectiveness of
  our optimization-via-compilation strategy and show that it
  outperforms existing approaches to solving MEU and MMAP in discrete probabilistic programs
  while simultaneously supporting the novel feature of meta-optimization.
\end{itemize}

\section{Overview}\label{sec:overview}

First, we will define formally the MEU and MMAP problems
as well as demonstrate the core ideas behind BBIR via two illustrative examples.
The first example in $\dappl$ (\cref{subsec:dappl-overview}) will show
the generalization of the \textit{reasoning-via-compilation} scheme
to lattice semirings, BBIR's theoretical foundation.
The second example in $\pineappl$ (\cref{subsec:pineappl-overview}) will illustrate
how we can model mid-program optimization through the BBIR via staging.

\subsection{The Maximum Expected Utility Problem}\label{subsec:dappl-overview}
In this section, we first introduce the maximum expected utility (MEU) problem
through example (\cref{subsubsec:meu-example}). Then we describe our
approach to solving MEU via compilation (\cref{subsubsec:eu-via-compilation}
and \cref{subsubsec:optimization-via-compilation}).

\subsubsection{Defining MEU}\label{subsubsec:meu-example}

\begin{wrapfigure}{r}{0.3\linewidth}
  \begin{dapplcodeblock}[basicstyle=\tiny\ttfamily]
rainy <- flip 0.1;
// observe rainy ;
choose [Umb, No_umb]
| Umb -> if rainy then
    reward 10 else reward -5
| No_umb -> if rainy then
    reward -100 else ()
\end{dapplcodeblock}
\caption{Example $\dappl$ program.}
\label{fig:motivation-dappl}
\end{wrapfigure}
Consider the following simple decision-making scenario that we model as a $\dappl$ program
in~\cref{fig:motivation-dappl}.

\begin{quote}
\textit{``Today there is a 10\% chance of rain.
  If it rains and you have your umbrella, you are dry and happy.
  If it rains and you do not have your
  umbrella, you are very unhappy. However, you prefer not to carry your umbrella,
  so you are mildly annoyed if it does not rain and you brought your umbrella.
  Should you bring your umbrella?''}
\end{quote}

\Cref{fig:motivation-dappl} shows how we encode this scenario in \dappl{}.
On Line 1 (indicated on the right of~\cref{fig:motivation-dappl}),
we model the fact that there is a $10\%$ chance of rain via the syntax
\dapplcode{flip 0.1}, which outputs $\tt$ with probability $0.1$ and $\ff$
otherwise; in \dappl{}, all random variables are finite and discrete. The syntax
\dapplcode{choose [Umb, No\_umb]} on Line 3 denotes
a non-deterministic \emph{choice} about whether
or not to bring an umbrella; similar to random variables, all choices must be
finite and discrete. On Lines 5 and 7, we assign rewards to specific outcomes with
the \dapplcode{reward} keyword, which is an effectful operation that
accumulates a reward when it is executed: in this case, the outcome of ``it is raining and I brought
my umbrella'' is assigned a reward of $10$ and the outcome of
``it is not raining and I brought my umbrella'' is assigned a reward of $-5$.

The goal of a \dappl{} program is to compute the assignment to all choices --
i.e., the \emph{policy} -- that maximizes the expected accumulated reward.
In \dappl{}, all policies are deterministic.
For
the program in~\cref{fig:motivation-dappl}, there are two possible policies: $\pi_1
= \texttt{Umb}$, where the umbrella is taken, and $\pi_2 =
\texttt{No\_umb}$ where it is not.  Given a policy $\pi$, we let
$e|{_\pi}$ denote the \dappl{} program that results from substituting all
choices in the program $e$ for their corresponding policies $\pi$.  Then, we can
define an evaluation function $\EU(e|_{\pi}) = v$ for a \dappl{} program $e$
under a fixed specific policy $\pi$ that yields the expected utility $v$ of that
policy; we make this precise in Section~\ref{subsec:util}.
With this in mind, we can compute the maximum expected utility of our example,
call it $ex$, by comparing the expected utility of the two policies:

{\footnotesize
\begin{equation}\label{eqn:example}
  \MEUfn {ex} =
    \max \bigg\{k_1, k_2 :\begin{gathered}
    \EU(ex|_{\pi_1})= k_1, \\
    \EU(ex|_{\pi_2})= k_2
  \end{gathered}\bigg\} =
    \max\bigg\{\begin{gathered}
    \overbrace{0.1 \times 10}^{\texttt{rainy}=\tt}  + \overbrace{0.9 \times (-5)}^{\texttt{rainy}=\ff} \\
    \underbrace{0.1 \times (-100)}_{\texttt{rainy}=\tt} + \underbrace{0.9 \times 0}_{\texttt{rainy}=\ff}
    \end{gathered}\bigg\} =
  \max \{-3.5,-10\} =-3.5.
\end{equation}
}

However, if we choose to uncomment Line 2 of~\cref{fig:motivation-dappl},
we add to our scenario that we \textit{observe} that it is raining today.
Thus, to compute MEU we must compute the \textit{conditional expected utility}
of each of our policies given that it is raining.
If we say $ex_{\texttt{obs}}$ as our motivating example with the \dapplcode{observe},
then we can now compute the MEU \textit{conditional on the fact that it is raining},
which yields a different answer than that of~\cref{eqn:example}:

{\footnotesize
\begin{equation}\label{eqn:example-with-observe}
  \MEUfn {ex_{\texttt{obs}}} =
    \max \bigg\{k_1, k_2 :\begin{gathered}
      \EU(ex_{\texttt{obs}}|_{\pi_1})= k_1, \\
      \EU(ex_{\texttt{obs}}|_{\pi_2})= k_2
  \end{gathered}\bigg\} =
    \max\bigg\{\begin{gathered}
    \overbrace{1 \times 10}^{\texttt{rainy}=\tt}  + \overbrace{0 \times (-5)}^{\texttt{rainy}=\ff} \\
    \underbrace{1 \times (-100)}_{\texttt{rainy}=\tt} + \underbrace{0 \times 0}_{\texttt{rainy}=\ff}
    \end{gathered}\bigg\} =
  \max \{10,-100\} =10.
\end{equation}
}

\subsubsection{Expected Utility of Boolean Formulae}\label{subsubsec:eu-via-compilation}
Now we begin working towards our new approach to scaling MEU for \dappl{}
programs. The core of our approach is to compile a \dappl{} program into a data
structure for which computing upper-bounds on the expected utility of partial
policies is \emph{efficient in the size of the compiled representation}.  Our
approach is a generalization to the recent approaches to performing
probabilistic inference via knowledge compilation, which is currently the
state-of-the-art approach for performing exact discrete probabilistic
inference~\citep{holtzen2020scaling,fierens2015inference}.  The idea with
inference via knowledge compilation is to reduce the problem of inference to
performing a weighted model count of a Boolean formula, for which there exist
specialized scalable solutions. Formally, a \emph{weighted Boolean formula} is a
pair $(\varphi, w)$ where $\varphi$ is a logical formula and $w$ is a function
that maps literals (assignments to variables in $\varphi$) to real-valued
weights.  A \emph{model} $m$ is a total  assignment to variables in $\varphi$
that satisfies the formula.  The weight of a model $m$ is the product of the
weights of each literal.  Then, the \emph{weighted model count}
$\texttt{WMC}(\varphi, w)$ is defined to be the sum of weights of each model of
$\varphi$, i.e. \texttt{WMC}$(\varphi, w) \triangleq \sum_{m \models \varphi} w(m)$.

\citet{holtzen2020scaling} showed how to reduce probabilistic inference for a
small language similar to \dappl{} (but without decisions or rewards) to weighted model counting. However,
our problem is MEU, not probabilistic inference; to connect these ideas,
we leverage a well-known generalization of WMC that allows one to instead
perform weighted model counts where the weights come from an
arbitrary \emph{semiring}~\citep{kimmig2017algebraic,kimmig2011algebraic}:

\begin{definition}[Semiring]\label{def:semiring}
  A semiring is a tuple $\mathcal R = (R, \oplus, \otimes, \mathbf{1}, \mathbf{0})$
  where $R$ is a set, $\oplus$ is a commutative monoid on $R$ with
  unit $\mathbf{0}$, $\otimes$ is a monoid on $R$ with unit $\mathbf{1}$,
  $\mathbf{0}$ annihilates $R$ under $\otimes$, and $\otimes$ distributes over $\oplus$.
\end{definition}

This invites a natural definition of an algebraic model count where
literals are permitted to be weighted by elements of a semiring instead of the real numbers,
similar to \emph{weighted programming}~\citep{batz2022weighted}:
\begin{definition}[Algebraic model counting~\citep{kimmig2011algebraic,kimmig2017algebraic}]
\label{def:amc}
  Let $\varphi$ be a propositional formula, $\vars(\varphi)$ be the variables in $\varphi$,
  and $\lits(\varphi)$ denote the set of literals for variables in $\varphi$.
  Let $w : \lits(\varphi) \rightarrow \mathcal R$ be a weight function that
  maps literals to a weight in semiring $\mathcal R$.
  Then, the \emph{weight of a model of $\varphi$} is the product of the weights
  of the literals in that model: i.e., for some model $m$ of $\varphi$,
  we define $w(m) = \bigotimes_{\ell \in m} w(l)$.
  Then, the \emph{algebraic model count} is the weighted sum of models of $\varphi$:
  \begin{align}\label{eq:amc}
    \AMC(\varphi, w) \triangleq \bigoplus_{m \models \varphi} w(m).
  \end{align}
\end{definition}

Now we illustrate how we reduce computing the MEU of the \dappl{}
program in~\Cref{fig:motivation-dappl} to performing an algebraic
model count of a particular formula.
We
construct formulae with two kinds of Boolean variables: \emph{probabilistic variables}
and \emph{reward variables} that indicate
whether or not the agent receives a reward.
In~\cref{fig:motivation-dappl}, we have a single probabilistic variable $r$ that is true if and only if it is \texttt{rainy},
and three reward variables $R_v$ that are
true if and only if the agent receives a reward of $v$. Then, we can give a
Boolean formula $\varphi_{u}$ and $\varphi_{\overline{u}}$ for the two policies
of bringing and not bringing an umbrella respectively:\footnote{We write the
negation of a variable using an overline.}
\begin{align}
  \varphi_u &= (r \land R_{10} \land \overline{R_{-5}} \land \overline{R_{-100}}) \lor (\overline{r} \land \overline{R_{10}} \land {R_{-5}} \land \overline{R_{-100}}) \label{eq:formula-umbrella} \\
  \varphi_{\overline{u}} &= (r \land \overline{R_{10}} \land \overline{R_{-5}} \land {R_{-100}}) \lor (\overline{r} \land \overline{R_{10}} \land \overline{R_{-5}} \land \overline{R_{-100}})
  \label{eq:formula-no-umbrella}
\end{align}


Continuing with our reduction,
we can now encode expected utility computations as an algebraic model count
over a particular kind of semiring, the expectation semiring:

\begin{definition}[Expectation semiring~\citep{eisner2002parameter}]
\label{def:expectation semiring}
  The expectation semiring
  $\mathcal S$
  is a semiring on a base set $S = \R^{\geq 0} \times \R$, where the first component
  is a probability and the second represents expected utility.
  Addition is defined component-wise $(p,u) \oplus (q,v) \triangleq (p+q, u+v)$,
  multiplication defined as $(p,u) \otimes (q,v) \triangleq (pq, pv+qu)$,
  the multiplicative unit is $\mathbf{1} \triangleq (1,0)$,
  and the additive unit is $\mathbf{0} \triangleq (0,0)$.
\end{definition}

To continue the reduction, we want to design an algebraic model count for
$\varphi_u$ that computes the expected utility of the policy for bringing an
umbrella. To do this, we give weights to each literal:
\begin{align*}
  w(r) = (0.1, 0) \quad& w(R_{10}) = (1, 10) \quad& w(R_{100}) = (1, 100) \quad& w(R_{-5}) = (1, -5)\\
  w(\overline{r}) = (0.9, 0) \quad& w(\overline{R_{10}}) = (1, 0) \quad& w(\overline{R_{100}}) = (1, 0) \quad& w(\overline{R_{-5}}) = (1, 0)
\end{align*}
Intuitively, since $r$ represents the outcome of \texttt{flip 0.1} being true, it has a
probability component of $0.1$ and a reward component of 0.
These weights are carefully designed so that the algebraic model count computes
the expected utility of the policy:
\begin{align}
  \AMC(\varphi_u, w)
  &= \Big(\underbrace{(0.1,0)  \otimes (1,10) \otimes (1, 0) \otimes (1, 0)}_{r,\ R_{10},\ \overline{R_{-5}},\ \overline{R_{-100}}} \Big)
  \oplus \Big(\underbrace{ (0.9,0) \otimes  (1, 0) \otimes(1,-5) \otimes (1, 0)}_{\overline{r},\ \overline{R_{10}},\ R_{-5},\ \overline{R_{-100}}}\Big) \nonumber \\
  & = (0.1, 1) \oplus (0.9, -4.5) = (1, -3.5).
  \label{eq:ex-amc}
\end{align}

\begin{figure}

  \begin{subfigure}{0.4\linewidth}
    \centering
    \scalebox{1}{
      \begin{tikzpicture}
        \def\lvl{20pt}
      \node (rainy) at (0, 0) [bddnode] {$r$};

      \node (r1) at ($(rainy) + (-28bp, -\lvl-20)$) [bddformula] {$R_{10} \land \overline{R_{-5}}$};

      \node (r21) at ($(rainy) + (28bp, -\lvl-20)$) [bddformula] {$\overline{R_{10}} \land R_{-5}$};

      \begin{scope}[on background layer]
        \draw [highedge] (rainy) -- (r1);
        \draw [lowedge] (rainy) -- (r21);
      \end{scope}
      \end{tikzpicture}
    }
    \caption{A BDD representing the formula in Eq.~\ref{eq:formula-umbrella}. $\overline{R_{-100}}$ is omitted for clarity.}
    \label{fig:bdd umbrella}
  \end{subfigure}
  \hspace{2em}
  \begin{subfigure}{0.4\linewidth}
    \centering
    \scalebox{1}{
    \begin{tikzpicture}[every node/.style={inner sep=0,outer sep=0}]
        \def\lvl{16pt}

      \node (root) at (0, 0) [bddnode] {$\bigoplus$} node[xshift=0.8cm, yshift=0.2cm] {{\color{gray} (1, -3.5)}};


      \node[bddnode] (mul1) at ($(root) + (-40bp, -\lvl)$) {$\bigotimes$} ;
      \node at ($(mul1) + (-26bp, 0)$) {{\color{gray} (0.1, 1)}};
      \node[bddnode] (mul2) at ($(root) + (40bp, -\lvl)$) {$\bigotimes$};
      \node at ($(mul2) + (29bp, 0)$) {{\color{gray} (0.9, -4.5)}};

      \node[bddformula] (r1) at ($(mul1) + (-20bp, -\lvl)$) {$(0.1, 0)$};
      \node[bddformula] (r2) at ($(mul1) + (20bp, -\lvl)$) {$(1, 10)$};

      \node[bddformula] (r3) at ($(mul2) + (-20bp, -\lvl)$) {$(0.9, 0)$};
      \node[bddformula] (r4) at ($(mul2) + (20bp, -\lvl)$) {$(1, -5)$};

      \begin{scope}[on background layer]
        \draw [highedge] (root) --  (mul1);
        \draw [highedge] (root) -- (mul2);
        \draw [highedge] (mul1) -- (r1);
        \draw [highedge] (mul1) -- (r2);
        \draw [highedge] (mul2) -- (r3);
        \draw [highedge] (mul2) -- (r4);
      \end{scope}
      \end{tikzpicture}
    }
    \caption{A circuit that computes the expected utility of policy \texttt{umbrella}.}
    \label{fig:policy umbrella}
  \end{subfigure}
  \caption{Compiled Boolean circuit representations for $\varphi_u$.}
  \label{fig:motiv-a circuit 1}
\end{figure}
At this point in the reduction we are left with an arbitrary AMC, which in general is \#P-hard~\cite{kimmig2017algebraic}; it seems like we have not yet
made progress. This is where knowledge compilation comes into
play~\cite{darwiche2002knowledge,chavira2008probabilistic, sang2005performing}.
The key idea of knowledge compilation is to compile Boolean formulae into representations
that support particular queries: for instance, \dice{} compiles Boolean formulae into
binary decision diagrams (BDDs), which support linear-time weighted model counting,
in order to perform inference.
This compilation is expensive, but once performed, inference is efficient in the
size of the result; this amortization benefit will be crucial for our subsequent search strategy.
This process scales well because
BDDs naturally exploit repeated
sub-structure in the program such as conditional independence.
\citet{kimmig2017algebraic} showed that an analogous knowledge compilation strategy
can also be used to solve algebraic model counts.
This is visualized in Figure~\ref{fig:bdd umbrella}, which shows a
compiled representation of \cref{eq:formula-umbrella} (where we
have elided the negated reward variables for space).
Fig.~\ref{fig:policy umbrella} shows how to interpret the BDD in
Fig.~\ref{fig:bdd umbrella} as a circuit compactly representing $\AMC$.
The leaves of the circuit are elements of the
expectation semiring $\mathcal S$, and nodes are semiring operations $\oplus$
and $\otimes$, instead of the real-valued operations $+$ and $\times$.
The algebraic model
computation is shown in gray, and only requires a linear-time
bottom-up pass of the graph, mirroring the weighted model count.
In Section~\ref{sec:eval}, we will show that we
can compile very large \dappl{} programs into surprisingly compact circuits
due to the opportunities for structure sharing.

\subsubsection{Optimization-via-Compilation}\label{subsubsec:optimization-via-compilation}
At this point, we know how to use algebraic model counting to compute the expected utility
of a particular policy, but we do not yet know how to efficiently \emph{search for an optimal policy}.
We now return to our task of finding the \emph{optimal policy}
for a $\dappl$ program, which is our key new novelty.
A na\"{i}ve approach can be to
associate a Boolean formula to every policy as in~\cref{subsubsec:eu-via-compilation},
compute the expected utilities via $\AMC$, then find the maximum over this collection.
However, this approach is clearly exponential in the number of decisions and wasteful:
it unnecessarily recompiles the same sub-program into a BDD
numerous times, even if it is shared across the different policies.
What we
desire is a \emph{single compilation pass} on which to do repeated
efficient evaluation of different policies for \dappl{} programs.



\begin{figure}
  \begin{subfigure}{0.55\linewidth}
    \centering
    \scalebox{0.9}{
    \begin{tikzpicture}
        \def\lvl{20pt}
      \node (rainy) at (0, 0) [bddnode] {$r$};

      \node (u2) at ($(rainy) + (45bp, -\lvl)$) [bddnode] {$u$};
      \node (u) at ($(rainy) + (-45bp, -\lvl)$) [bddnode] {$u$};

      \node (r1) at ($(u) + (-28bp, -\lvl)$) [bddformula] {$R_{10} \land \overline{R_{-100}}\land \overline{R_{-5}}$};
      \node (r2) at ($(u) + (28bp, -\lvl-20)$) [bddformula] {$\overline{R_{10}} \land R_{-100}\land \overline{R_{-5}}$};

      \node (r21) at ($(u2) + (-28bp, -\lvl)$) [bddformula] {$\overline{R_{10}} \land \overline{R_{-100}}\land R_{-5}$};

      \node (r22) at ($(u2) + (28bp, -\lvl-20)$) [bddformula] {$\overline{R_{10}} \land \overline{R_{-100}} \land \overline{R_{-5}}$};

      \begin{scope}[on background layer]
        \draw [highedge] (rainy) -- (u);
        \draw [lowedge] (rainy) -- (u2);
        \draw [highedge] (u) -- (r1);
        \draw [lowedge] (u) -- (r2);
        \draw [highedge] (u2) -- (r21);
        \draw [lowedge] (u2) -- (r22);
      \end{scope}
      \end{tikzpicture}
    }
    \caption{State-space of~\cref{fig:motivation-dappl} as a partially-rendered BDD.}
    \label{fig:motiv-a-bdd}
    \end{subfigure}
\hfill
\begin{subfigure}{0.4\linewidth}
  \centering
  \scalebox{0.9}{
  \begin{tikzpicture}[every node/.style={inner sep=0,outer sep=0}]
      \def\lvl{20pt}

    \node (root) at (0, 0) [bddnode] {$\bigoplus$};
    \node at ($(root) + (20bp, 0)$) {{\color{gray} (1, 1)}};

    \node (mul1) at ($(root) + (-40bp, -\lvl)$) [bddnode] {$\bigotimes$};
    \node at ($(mul1) + (-23bp, 0)$) {{\color{gray} (0.1, 1)}};
    \node (mul2) at ($(root) + (40bp, -\lvl)$) [bddnode] {$\bigotimes$};
    \node at ($(mul2) + (23bp, 0)$) {{\color{gray} (0.9, 0)}};

    \node[bddformula] (term5) at ($(mul1) + (20bp, -\lvl)$) {$(0.1, 0)$};
    \node[bddformula] (term6) at ($(mul2) + (-20bp, -\lvl)$) {$(0.9, 0)$};

    \node (cup) at ($(mul1) + (-10bp, -\lvl)$) [bddnode] {\tiny{$\max$}};
    \node at ($(cup) + (-20bp, 0)$) {{\color{gray} (1, 10)}};
    \node (cup2) at ($(mul2) + (10bp, -\lvl)$) [bddnode] {\tiny{$\max$}};
    \node at ($(cup2) + (20bp, 0)$) {{\color{gray} (1, 0)}};

    \node[bddformula] (term1) at ($(cup) + (-20bp, -\lvl)$) {$(1, 10)$};
    \node[bddformula] (term2) at ($(cup) + (20bp, -\lvl)$) {$(1, -100)$};

    \node[bddformula] (term3) at ($(cup2) + (-20bp, -\lvl)$) {$(1, -5)$};
    \node[bddformula] (term4) at ($(cup2) + (20bp, -\lvl)$) {$(1, 0)$};

    \begin{scope}[on background layer]
      \draw [highedge] (root) -- (mul1) ;
      \draw [highedge] (root) -- (mul2);
      \draw [highedge] (mul1) -- (term5);
      \draw [highedge] (mul2) -- (term6);
      \draw [highedge] (mul1) -- (cup);
      \draw [highedge] (mul2) -- (cup2);
      \draw [highedge] (cup) -- (term1);
      \draw [highedge] (cup) -- (term2);
      \draw [highedge] (cup2) -- (term3);
      \draw [highedge] (cup2) -- (term4);
    \end{scope}
    \end{tikzpicture}
  }
  \caption{Faulty branch and bound circuit. The correct version uses $\sqcup$ instead of $\max$.}
  \label{fig:bb circuit example}
  \end{subfigure}
\caption{Branch-and-bound intermediate representation for the example program in~\cref{fig:motivation-dappl}.}
\label{fig:bbir-for-motivation-dappl}
\end{figure}

This leads to one of our main contributions: a new intermediate representation we call
the branch-and-bound intermediate representaion (BBIR). The example in
Figure~\ref{fig:motiv-a-bdd} already solves the problem of unnecessarily
repeatedly recompiling sub-programs: we can perform policy search directly on
the BDD by exhaustively enumerating all possible assignments to decision
variables and computing an expected utility using the method outlined in~\cref{subsubsec:eu-via-compilation}.
However, this enumeration strategy still suffers from
search-space explosion, and is exponential in the number of decision variables.
To avert this and scale to \dappl{} programs with a large number of choices, we
leverage the compiled BDD in Figure~\ref{fig:motiv-a-bdd} to efficiently compute
upper-bounds on the expected utility of \emph{partial policies},
defined formally in~\ref{def:partial policy}.
This lets us
design a branch-and-bound algorithm in Section~\ref{sec:bbir} to prune
policies during search.

Let us illustrate why a branch-and-bound algorithm is necessary and
a single bottom-up pass, such as the one in~\cref{subsubsec:eu-via-compilation},
is not sufficient.
Consider the circuit description of Figure~\ref{fig:motiv-a-bdd} that efficiently encodes a
solution to our decision scenario. A
straightforward approach to find MEU may be to associate every decision node in
the BDD with a $\max$ operation,
where $\max$ selects the higher utility node.
This circuit is visualized in Figure~\ref{fig:bb circuit example}.

However, there is a problem with the circuit in Figure~\ref{fig:bb circuit example}!
Recall the computations in~\cref{eqn:example}. The
maximum is the \emph{very last} operation
performed in the computation of MEU, performed over all decision variables.
In the bottom-up computation of the circuit in Figure~\ref{fig:bb circuit example},
the maximum is the \emph{very first} operation.
Thus this circuit will compute the wrong answer, as it is
\emph{not} generally the case that $\max_x \sum_y f(x,y) = \sum_y \max_x f(x,y)$
for an ordered semiring-valued function $f$, even in the real setting.
To solve this problem, we can force all
decision variables occur first in the top-down variable order
of the BDD, forcing maximums the final operations taken.
This is the approach
taken by the \emph{two-level algebraic model counting (2AMC)} approach of
\citet{derkinderen2020algebraic}.
As we will show in Section~\ref{sec:eval},
this order constraint can be catastrophic for performance, as
the size of a BDD is very sensitive to the variable order, and hence compiling
to order-constrained BDDs scales very poorly compared to compiling to BDDs where
the variables can be optimally ordered.

Our main contribution, in Section~\ref{sec:bbir}, gives a circuit representation
for upper-bounding the utility of a partially assigned policy
without constraining the variable order during BDD
compilation.
Our approach relies on the following intuition: for a \emph{real-valued} function $f$,
while it not generally the case that $\max_x \sum_y f(x,y) = \sum_y \max_x f(x,y)$, it
\emph{is the case} that $\max_x \sum_y f(x,y) \le \sum_y \max_x f(x,y)$:
commuting sums and maxes yields upper bounds for the real semiring $\R$.
This powerful \emph{commuting bound} holds for the reals,
and more broadly semirings with
a join-semilattice structure:
we verify this intuition via a lemma in~\cref{appendix:commute join}.


\begin{definition}[Lattice semiring]\label{def:latticed semiring}
  A \textbf{lattice semiring} is a semiring $\mathcal S = (S, \oplus, \otimes,
  \mathbf{0}, \mathbf{1})$ equipped with a partial order $\sqsubseteq$ on
  $S$ respecting $\oplus$ -- i.e., if $a \plt b$ and $c \plt d$, then
  $a \oplus c \plt b \oplus d$ -- that admits both meets (greatest lower bounds,
  denoted $\sqcap$) and joins (least upper bounds, denoted $\sqcup$).
\end{definition}

If $f$ is a lattice-semiring-valued function, $\bigsqcup_x \bigoplus_y
 f(x,y) \le  \bigoplus_y \bigsqcup_x f(x,y)$.  For the expectation
semiring, we define the partial order pointwise: $(a, b) \plt (c, d)$ if and
only if $a \le c$ and $b \le d$. This implies that $(a, b) \sqcup (c, d) =
(\max(a, c), \max(b, d))$, and similarly so for meets.

Returning to our goal of using BDDs to efficiently compute upper-bounds on
utilities, we can interpret decision variables as joins: this computation is visualized in
Figure~\ref{fig:bb circuit example}. The computed upper-bound is visualized in gray;
the final computed upper-bound $(1,1)$ is indeed an upper-bound
(with respect to $\sqsubseteq$)
on the expected
utility of the optimal policy, which we expect to be $(1, -3.5)$.
Ultimately, this insight allows us to give a branch-and-bound procedure
to solve both a general class of optimization problems over probabilistic
inference. Next, we will show another instantiation of this framework
for solving maximum marginal a-posteriori (MMAP) problems.

\subsection{First-Class Marginal Maximum A Posteriori}\label{subsec:pineappl-overview}


One of the key benefits of our algebraic approach to solving MEU in the previous
section is that it can be generalized to different semirings, and therefore applied
to a diverse set of reasoning problems.
Another powerful and common form of optimization-over-inference that is useful in
probabilistic reasoning is the \emph{marginal maximum a-posteriori problem} (MMAP),\footnote{The formal definition of MMAP is given
in~\cref{subsubsec:mmap}.}
which has historically had broad applications in diagnosis.
Consider the $\pineappl$ program
in \cref{fig:motiv-pineappl} presenting the following scenario:

\begin{quote}
\textit{``You are a doctor attempting to diagnose a patient. There is a 50\% chance that any given person has the disease.  If someone has the
disease, there is a 70\% percent chance that they have a headache. If
they do not have disease, there is still a 10\% chance that they have
a headache. You make the most likely diagnosis based on observing the
patient has a headache. There are consequences for misdiagnosis, either
diagnosing the patient when they do not have the disease or failing to
diagnose the patient when they do. What is the probability of complications
arising in a patient observing a headache?''}
\end{quote}

\begin{wrapfigure}{r}{0.44\linewidth}
\begin{pineapplcodeblock}[basicstyle=\tiny\ttfamily]
disease = flip 0.5;
if disease { headache=flip 0.7; }
  else { headache=flip 0.1; }
diagnosis = mmap(disease) with { headache }
if diagnosis && disease { complications=ff; }
  else if diagnosis && !disease { complications=flip 0.4; }
  else if !diagnosis && disease { complications=flip 0.9; }
  else { complications=ff; }
pr(complications)
\end{pineapplcodeblock}
\caption{Example $\pineappl$ program.}
\label{fig:motiv-pineappl}
\end{wrapfigure}

The key new element in this scenario is \emph{first-class optimization}: within
this example, a doctor wants to know the most likely symptom given a disease,
and then take some further action based on the outcome of that query.
\Cref{fig:motiv-pineappl} shows how this is encoded as a program.
On Line 1, we define our prior on whether a member of the population will have
the disease. Lines 2--3 model the conditional probability of a member of the
population having a headache based on whether they have the disease.
Then, on Line 4,
we bind \pineapplcode{diagnosis} to the \textit{most likely state
of} \pineapplcode{disease}, given the observation that \pineapplcode{headache}
is true. Lines 5--8 model the conditional probability of complications based on
the state of \pineapplcode{disease} and \pineapplcode{diagnosis}. Finally, on
Line 9 we calculate the probability of complications given the previous model.

The goal of a $\pineappl$ program is to perform probabilistic inference, much
like standard PPLs, but with the added complexity that random variables can
depend on the most likely state of previously defined variables.  For example,
the most likely state, or MMAP,  of \pineapplcode{disease} when
observing \pineapplcode{headache} is $\tt$. We can derive this by
computing the probability of \pineapplcode{disease} conditioned
on the observation of \pineapplcode{headache}:

{\footnotesize
  \begin{align}
    \Pr[\texttt{diagnosis} = \tt]
    &= \Pr[\texttt{disease} = \tt \mid \texttt{headache} = \tt] \nonumber
    = \frac{\Pr[\texttt{headache} = \tt \mid \texttt{disease} = \tt] \times \Pr[\texttt{disease} = \tt]}{\Pr[\texttt{headache} = \tt]} \nonumber \\
    &= \frac{0.7 \times 0.5}{(0.7 \times 0.5) + (0.3 \times 0.1)}
                                  = 0.92
                                    &\label{eq:mmap-motiv-pineappl}
  \end{align}
}
So, when computing \pineapplcode{Pr(complications)}, we need only consider
where \pineapplcode{diagnosis} is $\tt$:

{\footnotesize
\begin{equation}
  \Pr[\texttt{complications}] = \Pr [\texttt{disease} = \tt] \times 0 +
  \Pr[\texttt{disease} = \ff] \times 0.4 = 0 + 0.5 \times 0.4 = 0.2
\end{equation}}





\subsubsection{Staging BBIR Compilation for Meta-Optimization}
\label{subsubsec:staging}

In this section, we demonstrate that the BBIR's unrestricted variable order,
as addressed in~\cref{subsubsec:optimization-via-compilation},
paves the way for a \textit{compositional, staged approach} to
efficiently compiling programs with meta-optimization such as MMAP.

Attempting to emulate the methodology in~\cref{subsubsec:eu-via-compilation}
quickly leads to blowup.
We would have to create two Boolean formulae and compare their $\AMC$ over $\R$:
one for when \texttt{diagnosis} is $\tt$ and one for when \texttt{diagnosis} is $\ff$.
Then, the two formulae will have a duplicated subformula--the subformula
that declares the variables \texttt{disease} and \texttt{headache}.
As the number of MMAP queries increase, this quickly becomes intractable --
the number of times needed to recompile subformulae grows exponentially
with respect to the number of input variables.

To combat this blowup, we apply the idea of \textit{staged compilation}.
In traditional staging, the idea is to accelerate expensive and/or
repeated computation
by precompiling it into an optimized representation (see \citet{taha1999multistage}).
Such computations must be identified in the code and compiled first,
reaping performance benefits by avoiding repeated compilation.

In the case of MMAP, the expensive computation is determined
entirely by the input variables to a \pineapplcode{mmap} query.
By compiling the subformula representing the input variables into BBIR
first, we can then use the
branch-and-bound over BBIR to find the most likely state
(in this case, \texttt{diagnosis} being $\tt$)
and continue compilation of the program with that assignment in mind.

Let us see this idea in action.
Drawing another analogy to staging~\citep{devito2013terra},
we can pre-compile the first three lines of our program:

\begin{equation}\label{eq:partial-pineappl}
  \underbrace{\texttt{disease} \leftrightarrow f_{0.5}}_{\text{Line 1}}
  \land
  \underbrace{\texttt{headache} \leftrightarrow (\texttt{disease} \land f_{0.7} \lor \neg\texttt{disease} \land f_{0.1})}_{\text{Lines 2-3}}.
\end{equation}

We insert auxiliary variables \texttt{disease} and \texttt{headache} to maintain
sequentiality of the program: if we were to simply say
$f_{0.5} \land \left(f_{0.5} \land f_{0.7} \lor \neg f_{0.5} \land f_{0.1}\right)$,
then the program will return $\bot$ once any of the sampled values returned $\ff$,
which is incorrect.

BBIR allows \cref{eq:partial-pineappl} to be compiled as a branch-and-bound
circuit that can be efficiently queried, in a manner
similar to $\dappl$. Thus, we can deduce that the most likely assignment to
\texttt{diagnosis} is $\tt$, and then extend~\cref{eq:partial-pineappl} as such:
\begin{equation}
  \texttt{disease} \leftrightarrow f_{0.5} \land
  \texttt{headache} \leftrightarrow (\texttt{disease} \land f_{0.7} \lor \neg\texttt{disease} \land f_{0.1}) \land
  \texttt{diagnosis} \leftrightarrow T
\end{equation}
\noindent
at which point the compilation of the program can resume
by reusing the precompiled BBIR for~\cref{eq:partial-pineappl} and
\textit{without having computed a separate formula for when \texttt{diagnosis} is $\ff$.}

This is achieved without significant blowup because the variable order within the BDD
has no restrictions.
If the variable order were to be restricted
as per the approach of~\citet{derkinderen2020algebraic},
then at every call to \pineapplcode{mmap}
we would need to sift the queried variables to the top,
which is known to be expensive and can blow up
the size of the BDD~\citep{holtzen2020scaling,holtzen2021model}.

To summarize, we have demonstrated the key insights that make BBIR an ideal
compilation target for PPLs performing optimization:
\begin{enumerate}
  \item BBIR generalizes knowledge compilation beyond the real numbers,
  allowing for more general
  optimization problems over inference such as MEU to be expressed.
  \item BBIR does not enforce any variable order, which allows us to
  express probabilistic programs with meta-optimization queries
  through staged compilation of the BBIR.
\end{enumerate}
The next section will delve into technical details of how we achieve both objectives.



\section{Optimization-via-Compilation}\label{sec:bbir}

In this section, we give a formal account of the intuitions
reflected in Section~\ref{subsubsec:optimization-via-compilation}
and~\cref{subsubsec:staging}.
We will describe the
branch-and-bound semiring (Section~\ref{subsec:bb semiring}),
a class of lattice semirings (recall Definition~\ref{def:latticed semiring})
equipped with an additional total order that is \textit{compatible} with the existing lattice.
Afterwards, we introduce the BBIR
how it represents MEU and MMAP
(Section~\ref{subsec:msp jsp}),
and how it admits a polynomial time upper- and lower-bound
algorithm (Section~\ref{subsec:bounds}),
lending itself well to a branch-and-bound
approach (Section~\ref{subsec:meu with evidence}).

\subsection{The Branch-and-Bound Semiring}
\label{subsec:bb semiring}

In Section~\ref{subsubsec:optimization-via-compilation},
we introduced the definition of a lattice semiring
(Definition~\ref{def:latticed semiring}) and how it generalizes the
interchange law between max and sum
($\max_{x} \sum_y f(x,y) \leq \sum_y \max_x f(x,y)$) in the reals.
However, in a lattice semiring,
$\sqsubseteq$ is a partial order, so in general elements may not be able to be compared:
for example, in the expectation semiring, we cannot compare the values $(0.5, 1)$ and $(1,0)$
as $0.5 \leq 1$ but $1 \not\leq 0$.
However, if we were to compare the values $(0.5, 1)$ and $(1,0)$ as
values of
$\AMC{}$ corresponding to total (as opposed to partial) policies,
then the comparison is
obvious: we select $(0.5, 1)$ as it has the higher utility.
To reflect this intuition, we enrich lattice semirings with a total order,
which gives the definition of a branch-and-bound semiring:

\begin{definition}[Branch-and-Bound Semiring]\label{def:branch-and-bound semiring}
  A branch-and-bound semiring is a lattice semiring
  $(\mathcal R, \oplus, \otimes, \mathbf 0, \mathbf 1, \sqsubseteq)$
  equipped with an additional total order
  $\leq$ such that for all $a,b \in \mathcal R$,
  $a \sqsubseteq b$ implies $a \leq b$,
  which we henceforth call \emph{compatibility}.
\end{definition}

The real semiring $\R$ is a branch-and-bound semiring
in which the two orders are identical: the usual total order on the reals.
However, the intuition above is reflected most prominently in the expectation semiring:

\begin{proposition}\label{prop:S is bb semiring}
  The expectation semiring $\mathcal S$,
  as seen in Definition~\ref{def:expectation semiring},
  forms a branch-and-bound semiring with:
  \begin{enumerate*}
    \item $(p,u) \sqsubseteq (q,v)$ iff $p \leq q$ and $u \leq v$,
    with join $\bigsqcup$ being a coordinatewise max and
    meet $\bigsqcap$ being a coordinatewise min, and
    \item $(p,u) \leq (q,v)$ iff $u < v$ or $u=v$ and $p \leq q$.
  \end{enumerate*}
\end{proposition}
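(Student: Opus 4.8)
The plan is to verify the three ingredients demanded by \cref{def:branch-and-bound semiring}: that $(\mathcal S, \sqsubseteq)$ is a lattice semiring in the sense of \cref{def:latticed semiring}, that the proposed relation $\leq$ is a total order, and that the two orders are compatible, i.e. $a \sqsubseteq b \implies a \leq b$. Since $\mathcal S$ is already known to be a semiring by \cref{def:expectation semiring}, none of the semiring axioms need re-checking; the entire argument is order-theoretic and proceeds coordinatewise on the base set $\R^{\geq 0} \times \R$.

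For the lattice-semiring structure, I would first observe that $\sqsubseteq$ is precisely the product of the usual total order on $\R^{\geq 0}$ with that on $\R$, so reflexivity, antisymmetry, and transitivity are all inherited componentwise from the reals. To see that $\sqsubseteq$ respects $\oplus$, note that $\oplus$ acts as componentwise addition; hence if $(p,u) \sqsubseteq (q,v)$ and $(p',u') \sqsubseteq (q',v')$ then $p+p' \leq q+q'$ and $u+u' \leq v+v'$ by monotonicity of real addition, which is exactly $(p,u)\oplus(p',u') \sqsubseteq (q,v)\oplus(q',v')$. Finally, for any two elements the coordinatewise maximum $(\max(p,q),\max(u,v))$ is an upper bound and lies below every upper bound, since $\max$ is the least upper bound in each totally ordered coordinate, so it is the join $\sqcup$; dually the coordinatewise minimum is the meet $\sqcap$. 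Thus $(\mathcal S, \sqsubseteq)$ is a lattice semiring.

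Next I would show that $\leq$ is a total order. The relation compares the utility component first and breaks ties by the probability component; concretely it is the lexicographic product of the total order on the $u$-coordinate with the total order on the $p$-coordinate. Totality, reflexivity, and transitivity then follow from the corresponding properties of $\R$ by the standard case split on whether the utility components are equal. The only point requiring attention is antisymmetry: if $(p,u)\leq(q,v)$ and $(q,v)\leq(p,u)$, then we can have neither $u<v$ nor $v<u$, since either would contradict the other inequality, forcing $u=v$; the tie-breaking clause then gives $p\leq q$ and $q\leq p$, so $p=q$ and the two pairs coincide.

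Finally, compatibility is a direct consequence of how the tie-breaking clause was chosen. Suppose $(p,u)\sqsubseteq(q,v)$, so that $p\leq q$ and $u\leq v$. If $u<v$ then $(p,u)\leq(q,v)$ immediately; otherwise $u=v$, and the surviving hypothesis $p\leq q$ is exactly the tie-breaking condition, again yielding $(p,u)\leq(q,v)$. I expect no serious obstacle here, as every step reduces to a fact about the reals; the only genuinely delicate moment is the interlocking of the two orders in the equality case $u=v$. This is precisely where a carelessly chosen total order would violate compatibility, and it is the reason the secondary comparison on probabilities must agree with, rather than oppose, the probability component of the partial order.
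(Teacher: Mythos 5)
Your proof is correct, and its core — the compatibility argument casing on $u<v$ versus $u=v$ — coincides exactly with the paper's entire proof, which treats compatibility as the only nontrivial claim. The additional verifications you supply (that $\sqsubseteq$ is a product order respecting $\oplus$ with coordinatewise joins and meets, and that $\leq$ is a lexicographic total order) are routine facts the paper leaves implicit, so this is essentially the same approach, just carried out more completely.
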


\begin{proof}
  Let $(p,u), (q,v) \in \mathcal S$ such that
  $(p,u) \sqsubseteq (q,v)$. Then $u \leq v$;. if $u <v$ we are done.
  If $u=v$ then $p \leq q$ and we are done.
\end{proof}

The distinction between $\sqsubseteq$ and $\leq$ is required when
comparing partial and total policies in~\cref{subsec:bounds}.
Compatibility will be required when we know, for $p \sqsubseteq q$,
that $p$ and $q$ are associated with total policies as opposed to partial.


\subsection{The Branch-and-Bound IR}\label{subsec:msp jsp}
Now that we have defined the branch-and-bound semiring, we are ready
to reconstruct the branch-and-bound circuits in the
motivating examples in Section~\ref{sec:overview}.
What additional information should the BDD in Figure~\ref{fig:motiv-a-bdd} have to fully represent
a decision scenario?
Of course we should specify which variables
to optimize over and which to not, and weights for all variables
present. But additionally we need to incorporate potential \textit{evidence}
showing the events to condition on as we evaluate the program.
We represent exactly this set of information in the BBIR.

\begin{definition}[Branch-and-bound IR]\label{def:bbir}
  A branch-and-bound intermediate representation (BBIR)
  over a branch-and-bound semiring $\mathcal B$
  is a
  tuple $\BBIR$ in which:
  \begin{itemize}
    \item $\{\varphi_i\}$ are propositional formulae
    in the factorized representation of a
    multi-rooted BDD~\citep{darwiche2002knowledge,clarke2018handbook},
    \item $X \subseteq \bigcup_i vars(\varphi_i)$ a selection of variables
    on which to branch over,
    \item $w : \bigcup_i lits(\varphi_i) \to \mathcal B$ a weight function.
  \end{itemize}
\end{definition}




We demonstrate below the definition of MEU and MMAP over BBIR below.

\subsubsection{The MEU Problem with Evidence}\label{subsubsec:meu}

Here, we give a formulation of the
MEU problem with evidence, a generalization of the MEU problem
addressed in~\cref{subsec:dappl-overview}
which allows us to eventually handle
\dapplcode{observe} statements in \dappl{}.
In particular we introduce an additional $\AMC$
in the denominator of the optimization function.
This additional model count can be handled by efficient
computation of bounds;
see Section~\ref{subsec:meu with evidence}
for full detail.

We represent this problem as a BBIR
$(\{\varphi \land \gamma_{\pi} : \pi \in \mathcal A\}, A, w)$, in which:

\begin{enumerate}[leftmargin=*]
  \item $\varphi$ is the Boolean formula detailing the control and data flow
  of the decision making model,
  \item $\gamma_{\pi}$ represent witnessed evidence for
  each policy $\pi \in \mathcal A$, where $\mathcal A$ is the collection of
  all possible policies (i.e., complete instantiations of choices)
  \item $A$ is the collection of variables representing choices.
  \item $w$ a weight function to denote rewards.
\end{enumerate}
On which the MEU problem reduces to the following optimization problem:
{\footnotesize\begin{equation}\label{eq:bbir-meu}
  \MEUfn{(\{\varphi \land \gamma_{\pi} : \pi \in \mathcal A\}, A, w)}
    \triangleq \max_{\pi \in \mathcal A} \frac{\AMC(\varphi|_{\pi} \land \gamma_{\pi},w)_{\EU}}{\AMC(\gamma_{\pi}, w)_{\Pr}},
\end{equation}}
where division is the normal division in $\R$  with the additional property that
division by $0$ is defined as $-\infty$.
The subscript $\EU$ and $\Pr$ denote the first and second projections over the expectation semiring,
referring to the $\AMC$ invariant proven in~\cref{appendix:amc invariant}.

To give a concrete example of this optimization problem, consider the example of~\cref{fig:motivation-dappl},
with the \dapplcode{observe} statement uncommented.
We can define
\begin{equation}
  \varphi = (u \land \varphi_u) \lor (\overline u \land \varphi_{\overline u}),
  \qquad
  \gamma_u = \gamma_{\overline u} = r,
  \qquad
  A = \{u\},
\end{equation}
where $\varphi_u$ and $\varphi_{\overline u}$ are
defined in~\cref{eq:formula-umbrella,eq:formula-no-umbrella} and $w$ are the weights
as defined in~\cref{sub@fig:bb circuit example}. Then we observe that
{\footnotesize
\begin{equation}
  \MEUfn{\{\varphi \land \gamma_i \mid i \in \{u, \overline u\}\}, A, w}
  = \max \left\{
  \frac{\AMC(\varphi_u \land r)_{\EU}}{\AMC(r)_{\Pr}},
  \frac{\AMC(\varphi_{\overline u} \land r)_{\EU}}{\AMC(r)_{\Pr}}
  \right\}
  = 10,
\end{equation}
}
validating the computations in~\cref{eqn:example-with-observe}.
\subsubsection{The Marginal Maximum A Posteriori (MMAP) Problem}\label{subsubsec:mmap}

We conclude with a formulation of the MMAP problem in full generality over a BBIR.
$\pineappl$ supports a limited form of conditioning, where observations can only occur
with a call to MMAP or a query (see~\cref{subsec:pineappl-sem} for details),
but we present a formulation of the MMAP problem which supports global conditioning.
We do so by defining the
BBIR $(\{\varphi, \gamma\}, M, w)$ where:

\begin{enumerate}[leftmargin=*]
  \item $M$ are our \emph{MAP variables} to compute the most likely state of,
  a subset of the variables of $\varphi$,
  \item $\varphi$ is our probabilistic model and $\gamma$ is our evidence to condition on,
  with $vars(\varphi) = M \cup V \cup E$ disjoint sets of variables where $E$
  is some set of variables representing priors and $V$ are probabilistic variables, and
  \item $w$ is a weight function with codomain in the real branch-and-bound semiring
  $\R$ where $\sqsubseteq, \leq$ are the usual total order.
\end{enumerate}

Then we can solve the following optimization problem for some priors $e \in inst(E)$,
where $inst$ denotes the set of all instantiations to a set of variables and $\varphi|_m$
denotes the formula derived by applying the literals of $m$ to $\varphi$:
{\footnotesize
\begin{align}\label{eq:bbir-mmap}
  \mathrm{MMAP}{(\{\varphi, \gamma\}, M, w, e)}
  &= \argmax_{m \in inst(M)} \sum_{\substack{v \in inst(V), \\ m \cup v \cup e \models \varphi}}
  \Pr[m \cup v \cup e \mid \gamma|_e]
  = \argmax_{m \in inst(M)} \frac{\AMC_\R(\varphi|_{m,e} \land \gamma|_e)}{\AMC(\gamma|_e)}.
\end{align}
}

When there are no priors, we elide $e$ in the arguments.
To give a concrete example of this problem, consider the example
given in the first four lines of~\cref{fig:motiv-pineappl}.
We can define:

{\footnotesize
\begin{equation*}
  \varphi =  \texttt{disease} \leftrightarrow f_{0.5} \land
  \texttt{headache} \leftrightarrow (\texttt{disease} \land f_{0.7} \lor \overline{\texttt{disease}} \land f_{0.1}).
  \qquad
  \gamma = \texttt{headache},
  \qquad
  M = \texttt{disease},
\end{equation*}
}
where $w(f_{n}) = 1 - w(\overline{f_{n}})$,
and the weight is 1 for all other literals.
\noindent Then we observe that with $V = \{f_{0.5}, f_{0.7}, f_{0.1}\}$,

{\footnotesize
\begin{align*}
  \mathrm{MMAP}{(\{\varphi, \gamma\}, \{\texttt{disease}\},w)}
  &=\max \left\{
    \sum_{\substack{v \in inst(V), \\
    v \cup \texttt{disease} \models \varphi}} \Pr\left[\texttt{disease} \cup v \mid \gamma\right],
    \sum_{\substack{v \in inst(V), \\
    v \cup \overline{\texttt{disease}} \models \varphi}} \Pr\left[\overline{\texttt{disease}} \cup v \mid \gamma\right],
  \right\} \\
  &=\max\{0.92, 0.08\} = 0.92,
\end{align*}
}
validating the computations in~\cref{eq:mmap-motiv-pineappl}.

Prior work, such as that of~\citet{huang2006solving} and~\citet{lee2016exact},
have leveraged
techniques in knowledge compilation to solve the MMAP problem via
a branch-and-bound algorithm.
Our method, to the best of our knowledge,
is the first method to generalize this approach beyond MMAP.

\subsection{Efficiently Upper-Bounding Algebraic Model Counts on BBIR}
\label{subsec:bounds}

We have demonstrated how the BBIR can represent important optimization problems
over probabilistic inference, as promised in~\cref{fig:bb-overview}.
However, a new problem representation
is moot without
gains in efficiency. Where does that happen?

Recall from Definition~\ref{def:bbir}
that the BBIR is over a branch-and-bound semiring, on which the
partial order $\sqsubseteq$ allowed the comparison of partially computed
algebraic model counts.
This is where the BBIR comes into play: it allows us to give an
upper- or lower-bound of partially computed algebraic model counts
on \textit{any} formula defined within the BBIR.
This is efficient--in particular, polynomial in the size of BBIR, more specifically
the BDD within. Thus, we can fully take advantage of the factorization of the
BDD while maintaining a way to compare partially computed values of AMC:


\begin{definition}[Partial policies and completions]\label{def:partial policy}
  Let $\BBIR$ be a BBIR.
  Then, we can define $P$ a \emph{partial policy} of
  $X$ as instantiation of a subset of variables in $X$.
  A \textit{completion} $T$ of $P$ is an instantiation of variables of $X$
  such that $P \subseteq T$.
\end{definition}

\begin{figure}
  \begin{subfigure}[t]{0.37\linewidth}
    \begin{mdframed}{\footnotesize\begin{algorithmic}[1]
      \Procedure{$ub$}{$\BBIR, \varphi, P$}
      \State $pm \gets \bigotimes_{p \in P} w(p)$
      \State $acc \gets h(\varphi|_P, X,w)$
      \State \textbf{return} $pm \otimes acc$
      \EndProcedure
    \end{algorithmic}}
    \end{mdframed}
    \caption{The upper bound algorithm $ub$ takes in a BBIR, $\varphi \in \{\varphi\}$,
    and a $P$ a partial policy of $X$
    to find an upper bound of $\AMC(\varphi|_{T},w)$ for any completion $T$ of $P$.}
    \label{algorithm:ub}
  \end{subfigure}
  \hfill
  \begin{subfigure}[t]{0.6\textwidth}
    \begin{mdframed}{\footnotesize\begin{algorithmic}[1]
      \Procedure{$h$}{$\varphi, X, w$}
      \If {$\varphi = \top$} \textbf{return 1}
      \ElsIf {$\varphi = \bot$} \textbf{return 0}
      \Else{ \textbf{let }$v \gets \mathrm{root}(\varphi)$}
        \If{$v \in X$}  \textbf{return}
        $w(v) \otimes h(\varphi|_v) \sqcup w(\overline v) \otimes h(\varphi|_{\overline v}$)
        \Else{ \textbf{return}
        $w(v)\otimes h(\varphi|_v) \oplus w(\overline v)\otimes h(\varphi|_{\overline v}$)}
        \EndIf
      \EndIf
      \EndProcedure
    \end{algorithmic}}
    \end{mdframed}
    \caption{The helper function $h$ as seen on Line 3 in Fig.~\ref{algorithm:ub}. }
    \label{algorithm:h}
  \end{subfigure}
  \caption{A single top-down pass upper-bound function. The function $\mathrm{root}$
  returns the topmost variable in the BDD. In order to scale efficiently, these
  procedures must be memoized; we omit these details.}
  \label{fig:h}
\end{figure}

With this definition in mind, we can give the pseudocode for our
upper bound algorithm in \Cref{fig:h}.
Algorithm~\ref{algorithm:h} runs in polynomial-time
in the size of the BBIR, as it is known
conditioning takes polynomial time in a
binary decision diagram~\citep{darwiche2002knowledge}.
However, it is not clear what \Cref{algorithm:ub} is
upper-bounding.
The key is observing that, at any choice variable, taking the join $\sqcup$
greedily chooses the best possible value,
without caring about whether it is associated to a policy or not.
This allows us to upper-bound all completions $T$ of $P$,
as we demonstrate in the following theorem, proven in~\cref{appendix:proof ub correctness}.

\begin{theorem}\label{thm:ub correctness}
  Let $\BBIR$ be a BBIR and let $\varphi \in \{\varphi_i\}$. Let $P$ be a partial policy of $X$.
  Then for all completions $T$ of $P$ we have
  \begin{equation}\label{eq:ub correctness}
    ub(\BBIR, \varphi, P)
      \sqsupseteq \bigoplus_{m \models \varphi|_T} \bigotimes_{\ell \in m \cup T} w(\ell)
      = \AMC(\varphi|_T) \bigotimes_{\ell \in T} w(\ell).
  \end{equation}
\end{theorem}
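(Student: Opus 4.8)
The plan is to prove the statement by structural induction on the binary decision diagram $\varphi|_P$, after first peeling off the factor $pm = \bigotimes_{p \in P} w(p)$ so that the induction can concentrate on the helper $h$. Fix a completion $T$ of $P$ and write $T = P \cup Q$ with $Q = T \setminus P$. Since $\varphi|_T = (\varphi|_P)|_Q$, $\bigotimes_{\ell \in T} w(\ell) = pm \otimes \bigotimes_{\ell \in Q} w(\ell)$, and $ub(\BBIR, \varphi, P) = pm \otimes h(\varphi|_P, X, w)$ by definition, it suffices to establish the key lemma: for every sub-BDD $\psi$ of $\varphi|_P$ and every instantiation $Q$ of the branch variables of $X$ occurring in $\psi$, $h(\psi, X, w) \sqsupseteq \AMC(\psi|_Q) \otimes \bigotimes_{\ell \in Q} w(\ell)$. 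The full theorem then follows by left-multiplying both sides by $pm$ and invoking monotonicity of $\otimes$; the trailing equality in $\eqref{eq:ub correctness}$ is the routine observation that $\bigotimes_{\ell \in T} w(\ell)$ is constant over the $\AMC$ sum and factors out by distributivity.

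First I would dispatch the base cases: when $\psi$ is $\top$ or $\bot$, both sides collapse to $\mathbf 1$ or $\mathbf 0$ and the bound is an equality. For the inductive step, let $v = \mathrm{root}(\psi)$ and split on whether $v \in X$. When $v \notin X$, the helper performs exactly the Shannon expansion that $\AMC$ does, $h(\psi) = w(v) \otimes h(\psi|_v) \oplus w(\overline v) \otimes h(\psi|_{\overline v})$, and because $Q$ does not mention $v$ the target value expands the same way; applying the induction hypothesis to each child and closing under $\oplus$ and under left-multiplication by $w(v), w(\overline v)$ yields the bound. When $v \in X$, the helper instead takes a join, $h(\psi) = w(v) \otimes h(\psi|_v) \sqcup w(\overline v) \otimes h(\psi|_{\overline v})$, whereas the completion $Q$ selects exactly one branch, say $v$, so the target is $w(v) \otimes \bigl[\AMC((\psi|_v)|_{Q'}) \otimes \bigotimes_{\ell \in Q'} w(\ell)\bigr]$ with $Q' = Q \setminus \{v\}$. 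The induction hypothesis bounds the bracket by $h(\psi|_v)$, and since a join dominates each of its arguments, $h(\psi) \sqsupseteq w(v) \otimes h(\psi|_v) \sqsupseteq$ the target. This is where the branch-and-bound insight pays off: replacing the sum over policies by a join at branch variables over-approximates any single completion, which is the combinatorial content of the commuting bound $\bigsqcup_x \bigoplus_y f \sqsubseteq \bigoplus_y \bigsqcup_x f$ verified in the appendix.

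The main obstacle is the monotonicity bookkeeping that both inductive cases quietly invoke: to push the per-child bounds back up the diagram I need $a \sqsubseteq b \Rightarrow a \oplus c \sqsubseteq b \oplus c$ and $a \sqsubseteq b \Rightarrow w(v) \otimes a \sqsubseteq w(v) \otimes b$. The first is immediate from the lattice-semiring axiom that $\sqsubseteq$ respects $\oplus$. The second — monotonicity of left-multiplication by a literal weight — is the subtle step, since in a general lattice semiring $\otimes$ need not be $\sqsubseteq$-monotone, and in the expectation semiring a weight carrying a negative utility component can reverse the order once the probability components of $a$ and $b$ differ. I would therefore isolate $\otimes$-monotonicity as a separate property required of the weights at hand (it holds outright for the reals, and for the expectation semiring for the nonnegative-probability weights emitted by compilation) and prove the key lemma relative to it. Pinning down precisely which weight and ordering conditions make $\otimes$ monotone — equivalently, which side conditions on the BBIR guarantee that the greedy join at branch variables never strands utility on an unsatisfiable completion — is where I expect the genuine difficulty of the proof to lie.
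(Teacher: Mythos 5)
Your inductive skeleton is essentially the same as the paper's: the paper inducts on $|X|-|P|$, peels the root variable of the BDD, and splits on whether it lies in $X$ (join case, where the join dominates the branch selected by $T$ and the induction hypothesis is applied once) or not (sum case, where the Shannon expansion is matched and the hypothesis is applied to both children); your structural induction with the key lemma $h(\psi, X, w) \sqsupseteq \AMC(\psi|_Q) \otimes \bigotimes_{\ell \in Q} w(\ell)$ is the same argument in different clothing. The genuine difference is that you flag monotonicity of left-multiplication as an unproved obligation, whereas the paper's proof uses it silently: in the step it marks (IH) the induction hypothesis is pushed under a factor $w(x) \otimes (\cdot)$, and in the non-branch case the proof just says ``continue recursing,'' which again multiplies the children's bounds by $w(v)$ and $w(\overline v)$ before summing. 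Since Definition~\ref{def:latticed semiring} only requires $\sqsubseteq$ to respect $\oplus$, these steps are unjustified, so the obstacle you isolated is a real hole in the paper's own proof, not an artifact of your formulation.

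However, your proposed repair does not close the hole. You assert that $\otimes$-monotonicity ``holds \ldots{} for the expectation semiring for the nonnegative-probability weights emitted by compilation,'' but \dappl{} compilation emits reward weights $(1,k)$ with $k<0$, and multiplication by these is not $\sqsubseteq$-monotone: $(0.5,1) \sqsubseteq (1,1)$, yet $(1,-5)\otimes(0.5,1) = (0.5,-1.5) \not\sqsubseteq (1,-4) = (1,-5)\otimes(1,1)$. Worse, the theorem as stated---quantified over arbitrary BBIRs---is actually false once such a weight sits above a join whose branches carry unequal probability mass. Take $X=\{d\}$, $\varphi = (d \land f \land r) \lor (\overline d \land \overline r \land g)$ with BDD order $r,d,f,g$ and weights $w(r)=(1,-5)$, $w(\overline r)=(1,0)$, $w(f)=(0.1,0)$, $w(\overline f)=(0.9,0)$, $w(g)=(0.9,0)$, $w(\overline g)=(0.1,0)$, $w(d)=w(\overline d)=(1,0)$. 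Then $h(\varphi) = (1,-5)\otimes\bigl((0.1,0)\sqcup(0,0)\bigr) \oplus (1,0)\otimes\bigl((0,0)\sqcup(0.9,0)\bigr) = (0.1,-0.5)\oplus(0.9,0) = (1,-0.5)$, while the completion $T=\{\overline d\}$ has $\AMC(\varphi|_T)\otimes w(\overline d) = (0.9,0)$, and $(1,-0.5) \not\sqsupseteq (0.9,0)$. The mechanism is exactly the one you predicted: under $r$ the join reports probability $0.1$ although the completion $\overline d$ contributes $0$ on that branch, and the factor $(1,-5)$ above converts that inflated probability into a utility deficit. So the missing step is not bookkeeping that can be discharged for the weights at hand: the statement needs an added hypothesis (e.g.\ every weight acts $\sqsubseteq$-monotonically, which for the expectation semiring forces nonnegative utility components) or must be restricted to BBIRs enjoying a structural invariant of compiled programs (such as every choice's branches carrying equal probability mass under any conditioning, so joins never inflate the $\Pr$ component); neither your argument nor the paper's proves the theorem as written.
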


Importantly, we can define a dual \textit{lower bound} algorithm $lb$
by taking Algorithm~\ref{algorithm:h}
and changing the join $\sqcup$ in line 5 to a meet $\sqcap$.
This proves vital when achieving full generality of the branch-and-bound,
as a simultaneous
lower and upper bound is required to maintain sound pruning in the presence of evidence.
We also state an important Lemma
that holds for both upper-and lower-bounds,
whose proof amounts to observing that for
total policies, no join is ever done when bounding,
leading to an exact value.

\begin{lemma}\label{lemma:ub on total policy is amc}
  For any BBIR $\BBIR$ and $\varphi \in \{\varphi\}$, for any total policy $T$
  of $X$, we have
  \begin{equation}
    ub(\BBIR, \varphi, T) =lb(\BBIR, \varphi, T)= \AMC(\varphi|_T, w).
  \end{equation}
\end{lemma}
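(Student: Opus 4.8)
The plan is to reduce the statement to a single structural fact: a \emph{total} policy $T$ instantiates every variable of $X$, so that conditioning removes all branching variables from the formula. First I would observe that $\varphi|_T$ contains no variable of $X$, and that this property is hereditary: conditioning a BDD only deletes variables, so for every subformula $\psi$ encountered during the evaluation of $h(\varphi|_T, X, w)$ we also have $\mathrm{root}(\psi) \notin X$. Hence the guard $v \in X$ on Line 5 of \Cref{algorithm:h} is false at every recursive call, and the join branch is never executed.

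Two consequences then follow. First, the equality $ub(\BBIR, \varphi, T) = lb(\BBIR, \varphi, T)$ is immediate: $lb$ is obtained from $h$ by replacing $\sqcup$ with $\sqcap$ on exactly the line that is never reached, so the two procedures run identical code on the input $\varphi|_T$. Second, with the join branch inert, $h(\varphi|_T, X, w)$ is literally the Shannon-expansion recursion for the algebraic model count: the base cases send $\top \mapsto \mathbf 1$ and $\bot \mapsto \mathbf 0$, and each internal node returns $w(v)\otimes h(\psi|_v) \oplus w(\overline v)\otimes h(\psi|_{\overline v})$. A routine structural induction on the BDD, using the decomposition $\psi \equiv (v \land \psi|_v) \lor (\overline v \land \psi|_{\overline v})$ together with the distributivity of $\otimes$ over $\oplus$, shows that this returns $\AMC(\varphi|_T)$; this is exactly the $\AMC$ invariant of \cref{appendix:amc invariant}. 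Finally, the factor $pm = \bigotimes_{p \in T} w(p)$ computed by $ub$ reinstates the weights of the policy literals, yielding $\bigoplus_{m \models \varphi|_T}\bigotimes_{\ell \in m \cup T} w(\ell)$, i.e.\ the claimed value $\AMC(\varphi|_T, w)$.

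I expect the only delicate point to be the hereditary claim — that no $X$-variable survives anywhere in the recursion and not merely at the root — since this is precisely where the totality of $T$ (rather than partiality) is used, and precisely what fails for a genuine partial policy, where a surviving branch variable would force a join and turn the $\sqsupseteq$ of \Cref{thm:ub correctness} into a strict inequality. Accordingly, the cleanest packaging may be to present this lemma as the equality case of \Cref{thm:ub correctness}: its bound holds with equality exactly when no join is ever taken, which for $ub$ (and dually, with meets, for $lb$) happens precisely on total policies.
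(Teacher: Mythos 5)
Your proof is correct and takes essentially the same approach as the paper's: the paper's entire argument is the single observation that for a total policy no join (or meet) is ever taken, so $h$ degenerates into the exact Shannon-expansion computation of the algebraic model count. Your write-up simply makes explicit what the paper leaves implicit — the hereditary ``no $X$-variables survive conditioning'' claim, the consequent identity of $ub$ and $lb$, and the structural induction recovering $\AMC$ — and your closing remark that the lemma is the equality case of the upper-bound theorem matches the paper's framing.
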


\subsection{Upper Bounds in Action: a General Branch-and-Bound Algorithm}
\label{subsec:meu with evidence}

We have, so far, demonstrated some of the theory and intuition that leads into
the BBIR, and the efficient upper- and lower-bound operation it supports.
Now,
we can use it to our full advantage to implement a general branch-and-bound style
algorithm to solve optimization problems expressed over BBIR.
This subsumes a previous algorithm for MMAP by~\citet{huang2006solving}
and generalizes it to MEU and to any other branch and bound semiring.

The algorithm is given in Algorithm~\ref{algorithm:bb}. It finds the maximum of an
objective function $f$
(for example, the problems of~\cref{eq:bbir-meu,eq:bbir-mmap})
given an upper-bound $\mathsf{UB}_f$ for $f$ over partial policies,
which we describe for MEU and MMAP in~\cref{appendix:ub_f}.
$\mathsf{UB}_f$ for MEU and MMAP take full advantage of
Algorithm~\ref{algorithm:ub}, and are completed in polynomial time.


\begin{figure}
  \begin{mdframed}{\footnotesize\begin{algorithmic}[1]
    \Procedure{$bb$}{$\BBIR,
    R,
    b,
    P_{curr}$}
    \If {$R = \eset$}
      \State $n = f(P_{curr})$
      \Comment{$P_{curr}$ will be a total policy of $X$}
      \State \textbf{return } $\max(n, b)$\Comment{max uses the total order.}
    \Else
       \State $r = pop(R)$
       \For {$\ell \in \{r, \overline r\}$}
        \State $m = \mathsf{UB}_f((\{\varphi|_{\ell}\}, X, w), P_{curr} \cup \{\ell\})$\label{line:join}
        \If {$m \not\sqsubseteq b$} \label{line:prune}
          \State $n = bb(\{\varphi|_{\ell}, \gamma|_{\ell}\}, R, b, P_{curr} \cup \{\ell\})$
          \Comment{$n$ will always be from a policy}
          \State $b = \max(n,b)$
        \EndIf
       \EndFor
       \State \textbf{return } $b$
    \EndIf
  \EndProcedure
  \end{algorithmic}}\end{mdframed}
  \caption{The branch-and-bound style algorithm
  calculating the optimum of a function $f$ admitting an upper-bound function
  $\mathsf{UB}_f$ for every partial policy.
  The tuple $\BBIR$ is a BBIR,
  $R$ is the remaining search space (initialized to $X$),
  $b$ is a lower-bound,
  and $P_{curr}$ is the current partial policy (initialized to $\emptyset$).}
  \label{algorithm:bb}
\end{figure}

We give a quick walkthrough of \Cref{algorithm:bb}. If $R = \eset$,
we hit a base case, in which our accumulated policy, $P_{curr}$ is
a total policy. We evaluate the expected utility and update our upper bound
accordingly. If $R \neq \eset$, then we let $r$ be some variable in $R$ and
$\ell \in \{r, \overline r\}$ a literal.
Then we consider the extension of partial policy $P_{curr}$ with $\{\ell\}$,
which is still a partial policy. We compute an upper bound for the BBIR
conditioned on this partial policy to form $m$.

The pruning is at Line~\ref{line:prune}; if $m \not\sqsubseteq b$, then
there is no recursion, pruning any policies containing
$P_{curr} \cup \{\ell\}$. This pruning is sound, as shown by the
following theorem, proven in~\cref{appendix:soundness of bb proof}.

\begin{theorem}\label{thm:soundness of bb}
  Algorithm~\ref{algorithm:bb}
  solves the MEU and MMAP problems of~\cref{subsubsec:mmap,subsubsec:meu}.
\end{theorem}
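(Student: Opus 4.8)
The plan is to prove a single order-theoretic statement that specializes to both MEU and MMAP, since \cref{algorithm:bb} is generic in the objective $f$ and its upper-bound operator $\mathsf{UB}_f$: the only facts it uses are the ones established for MEU and MMAP in \cref{appendix:ub_f}, namely that $\mathsf{UB}_f$ is a genuine $\sqsubseteq$-upper bound over completions --- for every partial policy $P$ and every completion $T$ of $P$, $f(T) \sqsubseteq \mathsf{UB}_f((\{\varphi|_P\}, X, w), P)$ --- and that the leaf evaluation $f(T)$ on a total policy is the intended objective value (\cref{eq:bbir-meu} or \cref{eq:bbir-mmap}); the latter uses that iterated conditioning $\varphi|_{\ell_1}|_{\ell_2}\cdots = \varphi|_{P_{curr}}$ is order-independent, together with \cref{lemma:ub on total policy is amc}. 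I would route the whole proof through the following invariant, where $\mathcal C(P,R)$ denotes the set of total policies $T$ of $X$ extending $P$ by an assignment of the as-yet-unbranched variables $R$:
\[
  bb(\BBIR, R, b, P) \;=\; \max\Bigl(\,b,\ \max\{\,f(T) : T \in \mathcal C(P,R)\,\}\Bigr),
\]
with both maxima taken with respect to the total order $\le$. The theorem is then the instance $R = X$, $P = \emptyset$, $b = -\infty$, for which $\mathcal C(\emptyset, X)$ is the set of all total policies and the right-hand side is exactly the $\max_\pi$ of \cref{eq:bbir-meu,eq:bbir-mmap}.

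I would establish the invariant by induction on $|R|$. In the base case $R = \emptyset$ the only completion is $P$ itself, and the algorithm returns $\max(f(P), b)$, matching. For the inductive step, popping $r$ splits $\mathcal C(P,R)$ into the completions extending $P \cup \{r\}$ and those extending $P \cup \{\overline r\}$, so the target right-hand side is $\max(b, M_r, M_{\overline r})$ with $M_\ell \triangleq \max\{f(T) : T \in \mathcal C(P \cup \{\ell\}, R \setminus \{r\})\}$. The algorithm processes the two literals sequentially, threading the running bound; the key sub-claim is that after processing literal $\ell$ with incoming bound $b'$ the updated bound equals $\max(b', M_\ell)$. If the branch is not pruned, the recursive call $bb(\{\varphi|_\ell,\gamma|_\ell\}, R\setminus\{r\}, b', P \cup \{\ell\})$ returns $\max(b', M_\ell)$ by the induction hypothesis, and the subsequent $b \gets \max(n, b')$ leaves this unchanged since $n \ge b'$.

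The one remaining case --- pruning --- is the crux. Here $m = \mathsf{UB}_f((\{\varphi|_\ell\}, X, w), P \cup \{\ell\})$ satisfies $m \sqsubseteq b'$, the branch is skipped, and the bound stays $b'$; I must show $\max(b', M_\ell) = b'$, i.e.\ that no completion under $\ell$ beats $b'$. By upper-bound correctness every $T \in \mathcal C(P \cup \{\ell\}, R \setminus \{r\})$ has $f(T) \sqsubseteq m$, and transitivity of $\sqsubseteq$ with $m \sqsubseteq b'$ gives $f(T) \sqsubseteq b'$. This is exactly where \textbf{compatibility} of the branch-and-bound semiring (\cref{def:branch-and-bound semiring}) is indispensable: it upgrades the partial-order bound $f(T) \sqsubseteq b'$ to the total-order bound $f(T) \le b'$, so that $\max(b', M_\ell) = b'$ under the total order used by the algorithm's $\max$. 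Combining the pruned and unpruned cases across both literals yields the updated bound $\max(b, M_r, M_{\overline r})$, closing the induction.

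I expect the main obstacle to be precisely this bridge between the two orders: the bound $m$ is computed and compared via the lattice order $\sqsubseteq$ (so that \cref{thm:ub correctness} applies and the join-based bound is sound), whereas the optimum is selected via the total order $\le$, and soundness of pruning is the statement that these are reconciled by compatibility. Two secondary points require care but are routine: that the BBIR threaded through the recursion, $\{\varphi|_\ell, \gamma|_\ell\}$, keeps $\mathsf{UB}_f$ and the leaf evaluation referring to the same conditioned problem --- handled by order-independence of conditioning --- and that, for MMAP, where \cref{eq:bbir-mmap} asks for the $\argmax$ rather than the optimal value, the maximizing total policy is recovered by standard bookkeeping alongside $b$, which does not affect the value-level invariant above.
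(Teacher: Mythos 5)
Your overall architecture is sound, and your induction invariant is tidier than the paper's own argument: the paper proves soundness by contradiction, showing the optimal total policy can never be pruned, whereas you track the exact return value through the recursion; both arguments hinge on the same bridge you identify, namely compatibility upgrading $f(T) \sqsubseteq m \sqsubseteq b$ to $f(T) \le b$. (The paper also simply defers the MMAP case to \citet{huang2006solving} rather than reproving it, which your generic invariant would subsume.)

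However, there is a genuine gap: the premise you take as ``established in \cref{appendix:ub_f}'' --- that $\mathsf{UB}_f$ is a $\sqsubseteq$-upper bound on $f(T)$ over all completions $T$ --- is not established there. That appendix only \emph{defines} the procedure (\cref{algorithm:ub_f meu}); the proof of the property you need is the main technical content of the paper's proof, namely \cref{lemma:soundness helper}, and it is not a routine corollary of \cref{thm:ub correctness}. For MEU with evidence the objective is a ratio, $\AMC(\varphi \land \gamma|_T, w)/\AMC(\gamma|_T, w)_{\Pr}$, so bounding it requires upper-bounding the numerator via $ub$ \emph{and} bounding the denominator from both sides via the dual lower-bound algorithm $lb$ and $ub$, followed by casework on the sign of the expected-utility component: when the utility is nonnegative one must divide by the lower bound $\ell$ of the evidence probability, and when it is negative by the upper bound $u$; the join $\sqcup(t/\ell, t/u)$ returned by \cref{algorithm:ub_f meu} exists precisely to cover both cases, and the inequalities only go through because probabilities lie in $(0,1]$ and the division convention handles the zero-evidence case. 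Without supplying this lemma (or citing a result that actually contains it), the pruning step of your induction has no foundation for the MEU-with-evidence problem of \cref{subsubsec:meu}, which is exactly the instance the theorem is about.
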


\textit{Remark.}
It should be noted that, although we have put in hard work to take advantage of
the factorized representation of the BBIR as much as possible,
\Cref{algorithm:bb} can run in possibly exponential time with respect
to the size of $A$ in the worst case. This is because in the worst case we still
face the \textit{search-space explosion} discussed in
Section~\ref{sec:overview}.  The worst case will happen when there is no
pruning: if the guard of Line~\ref{line:prune} is always satisfied, we will
iterate through all possible partial models, which is of size $2^{|A|}$.

However,
we ensured that the inner-loop of partial and total policy evaluation (\cref{line:join}
of \Cref{algorithm:bb}) runs in polynomial time \textit{with respect
to the size of the already factorized representation of the BBIR}.
So, even though we have a search-space explosion,
we can much more efficiently search through that policy space than an approach
that does not leverage compilation.


\section{$\dappl$: A Language for Maximum Expected Utility}\label{sec:dappl}

In the next two sections we will showcase the flexibility of our new
branch-and-bound IR by using it to implement two languages that support
very different kinds of reasoning over optimization.  By design we keep these
languages small so that they can be feasibly compiled into BBIR: in particular,
they will both support only bounded-domain discrete random variables and
statically bounded loops. These two restrictions are common in existing
compiled PPLs such as \texttt{Dice}~\citep{holtzen2020scaling}.

In this section we describe the syntax and semantics of \dappl{}.
In order to do this we describe first a small sublanguage of \dappl{},
named \util{}, in Section~\ref{subsec:util}.
Our goal for the semantics of \util{} is to yield the expected utility
of a policy, akin to the computations
via expectations done
in~\cref{eqn:example}.
Then, in Section~\ref{subsec:dappl},
we give \dappl{}'s syntax as an extension of that of \util{},
and its semantics as an evaluation function $\mathsf{MEU}$, a maximization
over \util{} programs derived from
applying a policy to a \dappl{} program,
The compilation rules to BBIR are given in~\cref{subsec:compiling-dappl},
concluding with an example compilation of~\cref{fig:motivation-dappl} to BBIR.

\subsection{The Syntax and Semantics of~\util}\label{subsec:util}

\util{} is a small functional first-order probabilistic programming language
with support for Bayesian conditioning, if-then-else,
and \texttt{flip}s of a biased coin with bias in the interval $[0,1]$.
We augment the syntax with the additional syntactic form, $\reward k e$,
to specify a utility of $k$ awarded before evaluating expression $e$.

\begin{wrapfigure}{r}{0.6\linewidth}
  \begin{mdframed}
  {\footnotesize\begin{align*}
    \text{Atomic expressions } \texttt{aexp} ::= \ & x \mid \tt \mid \ff &\\
    \text{Logical expressions } P ::= \ & \texttt{aexp} \mid P \land P \mid P \lor P \mid \neg P &\\
    \text{Expressions } \texttt{e} ::=\ & \return P  \mid \flip{\theta} \mid  \reward{k}{e} &\\
    &\mid  \ite x e e &\\ &\mid \bind{x}{e}{e} \mid \observe{x}{e} &
  \end{align*}}
  \end{mdframed}
  \caption{Syntax of \util{}, our core calculus for computing expected utility without decision-making.}
  \label{fig: util syntax}
\end{wrapfigure}

The syntax of $\util$ is given in Figure \ref{fig: util syntax}.
Programs are expressions without free variables.
We distinguish between pure computations $P$, which
take the form of logical operations as the only values
are Booleans, and impure computations $e$, which
represent probablistic \texttt{flip}s, \texttt{reward} accumulation,
and their control flow.
Observed events take the form
of exclusively pure computations.
We enforce such restrictions via the
more general \dappl{} type system given in~\cref{appendix:typesystem}.
There are only two types in
\util{}: the Boolean type \Bool{} and distributions over $\Bool$, $\Giry \Bool$,
constructed via the Giry monad~\citep{giry2006categorical}.

The semantics follows the denotational approach
of~\citet{staton2020probabilistic} or~\citet{li2023lilac}.
Expressions $\Gamma \proves e : \Giry \Bool$\footnote{
all $\util$ expressions are of type $\Giry \Bool$, proven in~\citet{cho2025scaling}.}
are interpreted as
a function $\denote e$ from assignments of free variables to Booleans ($\denote \Gamma$)
to a distribution over either pairs of Booleans and reals or $\bot$:
$\mathcal D ((\Bool \times \R) \cup \{ \bot \})$.
The intuition is that utilities are attached to successful program executions--that is,
programs that do not encounter a falsifying \texttt{observe}.
A successful $\util$ program execution will either end in $\tt$ or $\ff$; the
\texttt{reward}s encountered along the way are summed up and weighted by the
probability of the successful trace.
For details see~\cref{appendix:util semantics}.

Using this definition, we can define the expected utility of a $\util$ program.

\begin{definition}\label{def:eu util}
  Let $\cdot \proves e : \Giry \Bool$ be a $\util$ program.
  Let $\mathcal D = \denote{e}\bullet$, where $\denote{e}$ is the map taking
  the empty assignment $\bullet \in \denote{\cdot}$
  to a distribution $\mathcal D$ over either pairs of Booleans and reals or $\bot$.
  The \emph{expected utility} of $e$ is defined to be the
  conditional expected value of the real values in $\mathcal D$
  attached to a successful program execution returning $\tt$
  conditional on not achieving $\bot$:
  \begin{equation}
    \EU(e) = \sum_{r \in \R} r \times \mathcal D ((\tt,r) \mid \text{not }\bot).
    \footnote{The sum is computable because there can only be a finite number of program traces evaluating to true.}
  \end{equation}
\end{definition}

\subsection{The Syntax and Semantics of \dappl}\label{subsec:dappl}

\dappl~augments the syntax of \util{}
(as shown in Figure \ref{fig: util syntax}) with two new expressions:
\begin{itemize}
  \item $[\alpha_1, \cdots, \alpha_n]$,
  where $\alpha_1,\cdots, \alpha_n$ are a nonzero number of fresh names,\footnote{
  We style
  the capitalization of names of $\alpha_1, \cdots \alpha_n$,
  in a manner consistent with how variant names are capitalized in ML.}
  to construct a \textit{choice} between
  binary \textit{alternatives} $\alpha_1, \cdots, \alpha_n$, and
  \item $\choose e {\alpha_i \implies e_i}$ to destruct a choice
  in a syntax akin to ML-style pattern matching.
\end{itemize}

However, writing a semantics for $\dappl{}$ in the same fashion as $\util{}$
is not as simple as it looks.
The problem lies in the type of optimization problem being solved:
recall that MEU takes the maximum over expected utilities
(see \cref{subsubsec:meu-example}).
In particular, we are not nesting maxima and expected utility calculation,
of the form $\max\sum\max\sum \cdots \sum f(x)$, which is not equal to,
in general, to the general form of an MEU computation $\max \sum f(x)$,
a phenomenon we noticed in~\cref{sec:overview}.

To avoid this,
we use \util's already established semantics to our advantage.
For a \dappl~program $e$ with $m$ many choices,
let $C_k$ denote the $k$-th choice in some arbitrary ordering.
Then we say $\mathcal A = C_1 \times C_2 \times \cdots \times C_m$
is the \textit{policy space} for the expression
in which elements $\pi \in \mathcal A$ are \textit{policies}.
In essence, each $\pi$ denotes a sequence of
valid alternatives that can are chosen in a \dappl~program.

Given a \dappl~program $e$ and a policy $\pi$ for the program,
we can reduce $e$ into a \util~program
by (1) removing any syntax constructing choices $[\alpha_1, \cdots, \alpha_n]$, and
(2) reducing each choice destructor $\choose e {\alpha_i \implies e_i}$ to the $e_i$
corresponding to the name $\alpha_i$ present in $\pi$.
We make formal this transformation in~\citet{cho2025scaling},
as well as prove it sound for well-typed \dappl~programs.
We refer as $e|_{\pi}$ the \util~program derived by applying policy $\pi$ to \dappl~program $e$.

With this in mind, we can introduce an \emph{evaluation function} $\mathrm{MEU}:
\dappl{} \to \R$ which computes the maximum expected utility, completing our
semantics. This evaluation function is proved total for all well-typed \dappl{}
programs in~\cref{appendix:util semantics}.

\begin{definition}\label{def:meu for dappl}
  For a well-typed \dappl~program $e$, define
  \begin{equation}\label{eq:meu for dappl}
    \MEUfn{e} \triangleq \max_{\pi \in \mathcal A} \EU(e|_{\pi}),
  \end{equation}
  in which $\mathcal A$ is the policy space defined by all of the decisions in $e$.
\end{definition}

We endow
\dappl{} with significant syntactic sugar, including
discrete random variables and statically-bounded loops.
\subsection{Compiling \dappl}
\label{subsec:compiling-dappl}

In Section~\ref{subsec:dappl} we described
the syntax and semantics of \dappl.
In Section~\ref{sec:bbir} we described the BBIR
and how it admits an algorithm to solve MEU with evidence.
Now we discuss \dappl{}'s compilation to BBIR, formalizing
our intuition from
computing the example in
Figure~\ref{fig:motivation-dappl} into
the BDD in Figure~\ref{fig:motiv-a-bdd}.

We compile \dappl~expressions into a tuple $(\varphi, \gamma, w, R)$, where:
\begin{itemize}[leftmargin=*]
  \item $\varphi$ is an \textit{unnormalized formula},
  representing the control and data flow without observations,
  \item $\gamma$ is an \textit{accepting formula},
  representing observations,
  \item $w: vars(\varphi) \to \mathcal S$ is a weight function, and
  \item $R$ is a set of reward variables.
\end{itemize}

We write $e \leadsto \target$
to denote that a \dappl~program compiles to the tuple $\target$.
Then we apply the map $\target \mapsto (\{\varphi \land R,\gamma\},  D(\varphi), w)$,
where $D(\varphi)$ is the set of Boolean variables representing choices in $\varphi$,
to transform it into a BBIR for Algorithm~\ref{algorithm:bb}.

Selected compilation rules are given in Figure~\ref{fig:dappl bc},
and full compilation rules are given in~\citet{cho2025scaling}.
Many rules are
influenced by similar compilation schemes found in the
literature~\citep{holtzen2020scaling,de2007problog,saad2021sppl}.
We use $T,F$ to denote true and false in propositional logic,
distinguishing it from the $\tt,\ff$ Boolean values in \dappl.
We write $\conjneg{R}$ to denote the conjunction of all
negations of variables in $R$. To remark on the intution behind
several rules:

\begin{figure}
\begin{mdframed}
{\footnotesize
\begin{align*}
  \infer[\texttt{bc/reward}]
  {
    \reward k  e \leadsto
    (\varphi, \gamma, R \cup \{r_k\},
      w \cup \{r_k \mapsto (1,k), \overline{r_k} \mapsto (1,0)\})}
  {
    \text{fresh } r_k
    & e \leadsto (\varphi, \gamma, R,w)
  }
\end{align*}
\begin{align*}
  \infer[\texttt{bc/[]}]
  {[a_1, \cdots, a_n] \leadsto
  (\exactlyone{(v_1,\cdots, v_n)}, \top, \{v_i \mapsto (1,0), \overline{v_i} \mapsto (1,0)\}_{i \leq n}, \eset )}
  {\text{fresh }v_1, \cdots, v_n}
\end{align*}
\begin{align*}
  \infer[\texttt{bc/ite}]
  {
    \ite{x}{e_T}{e_E} \leadsto
    \begin{gathered}
      \big(
        (x \land \varphi_T \land R_T \land \conjneg{R_E})
          \lor (\overline{x} \land \varphi_E \land R_E \land \conjneg{R_T}), \\
        (x \land \gamma_T) \lor (\overline{x} \land \gamma_E),
        w_T \cup w_E,
        \eset \big)
    \end{gathered}
  } {
    x \leadsto
    (x, \top, \eset, \eset, \eset)
    &
    e_T \leadsto
    (\varphi_T, \gamma_T, w_T, R_T)
    &
    e_E \leadsto
    (\varphi_E, \gamma_E, w_E, R_E)
  }
\end{align*}
\begin{align*}
  \infer[\texttt{bc/choose}]
  {\choose x {a_i \implies e_i}
  \leadsto
  \begin{gathered}
    \Big(x \land \bigvee_i (a_i \land e_i
    \land \bigwedge_{j \neq i} \conjneg{R_j}),
    x \land \bigvee_i (a_i \land \gamma_i), \\
    \bigcup_i w_i, \bigcup_i R_i \Big)
  \end{gathered}
}
  {x \leadsto
  (x, \top, \eset, \eset, \eset)
  &
  \forall \ i.  \ e_i \leadsto (\varphi_i, \gamma_i, w_i, R_i)}
\end{align*}
}
\end{mdframed}
\caption{Selected Boolean compilation rules of \texttt{dappl}.
For complete rules see~\cref{appendix:dappl bc}.
}
\label{fig:dappl bc}
\end{figure}

\begin{enumerate}[leftmargin=*]
  \item The union of weight functions $w \cup w'$ is
  non-aliased -- there will never be
  $x \in \mathrm{dom}(w) \cap \mathrm{dom}(w')$ such that
  $w(x) \neq w'(x)$ or $w(\overline x) \neq w' (\overline x)$.
  \item The \texttt{bc/[]} enforces an ExactlyOne constraint
  on the introduced fresh Boolean variables $v_1,\cdots, v_n$.
  This is to disallow the behavior of evaluating multiple
  patterns in a \texttt{choose} statement.
  \item \texttt{bc/ite}
  enforces the condition that one cannot
  incorporate the rewards of one branch while branching into
  another by conjoining $\conjneg{R_E}$ and $\conjneg{R_T}$
  onto the disjuncts.
  We did this implicitly in the examples of
  Section~\ref{sec:overview} --
  without this constraint, we would get the incorrect
  expected utility for the policy \texttt{Umbrella},
  as the model $\{u, r, R_{10}, R_{-5}, R_{100}\}$
  would be a valid model.
  The $\bigwedge_{j \neq i} \conjneg{R_j}$ in \texttt{bc/choose}
  imposes the same restriction for choice pattern matching.
  \item We reset the accumulated rewards in \texttt{bc/ite}, as the rewards
  need to be scaled by the probability distribution defined by the value to be
  substituted into $x$. Thus, we start discharge our accumulated rewards to scale them
  appropriately and start anew.
\end{enumerate}

\begin{figure}
{\footnotesize
\begin{align*}
  \infer
  {
    \begin{array}{l}
      \dapplcode{s <- flip 0.5;} \\
      \dapplcode{choose [u,n]} \\
      \dapplcode{| u -> if r then reward 10 else reward -5} \\
      \dapplcode{| n -> if r then reward 100 else ()}
    \end{array}
    \leadsto
    \begin{array}{l}
      \mathrm{ExactlyOne}(u,n) \\
      \land (u \land ((f_{0.5} \land r_{10} \land \overline{r_5})
      \lor (\overline{f_{0.5}} \land r_{10} \land \overline{r_5})) \land \overline{r_{-100}})\\
      \land (n \land (f_{0.5} \land r_{-100}) \land \overline{r_{10}} \land \overline{r_5})\\
    \end{array}
  }
  {
    \infer
    {
      \dapplcode{flip 0.5} \leadsto f_{0.5}
    }
    {
      \text{fresh }f_{0.5}
    }
    &
    \infer
    {
      \begin{array}{l}
        \dapplcode{choose [u,n]} \\
        \dapplcode{| u ->} \\
        \dapplcode{  if s then reward 10 else reward -5} \\
        \dapplcode{| n ->} \\
        \dapplcode{  if s then reward 100 else ()}
      \end{array}
      \leadsto
      \begin{array}{l}
        \mathrm{ExactlyOne}(u,n) \\
        \land (u \land ((s \land r_{10} \land \overline{r_5})
        \lor (\overline{s} \land r_{10} \land \overline{r_5})) \land \overline{r_{-100}})\\
        \land (n \land (s \land r_{-100}) \land \overline{r_{10}} \land \overline{r_5})\\
      \end{array}
    }
    {
      \infer
      {
        \dapplcode{[u,n]} \leadsto \mathrm{ExactlyOne}(u,n)
      }
      {
        \text{fresh } u,n
      }
      &
      \infer
      {
        \begin{array}{l}
        \dapplcode{if s} \\
        \dapplcode{then reward 10} \\
        \dapplcode{else reward 5}
        \end{array}
        \leadsto
        \begin{array}{l}
          (s \land r_{10} \land \overline{r_5}) \\
          \lor (\overline{s} \land r_{5} \land \overline{r_{10}})
        \end{array}
      }
      {
        \vdots
      }
      &&&&
      \vdots
    }
  }
\end{align*}
}
\caption{Partial compilation tree of the code in Figure~\ref{fig:motivation-dappl},
showing the compiled unnormalized formula. We omit the accepting formula
as it evalutes to $\top$ as there is no evidence. We give only $\varphi$
for visual clarity.}
\label{fig:compilation tree}
\end{figure}

The following theorem
connects the \dappl{} semantics of Section~\ref{sec:dappl}
to the branch-and-bound algorithm discussed in
Section~\ref{subsec:meu with evidence}.
For proofs see~\cref{appendix:dappl correctness}:

\begin{theorem}\label{thm:compiler correctness}
  Let $e$ be a well-typed \dappl{} program.
  Let $e \leadsto \target$. Then we have
  \begin{equation}
    \MEUfn{e} = \mathrm{bb}(\{\varphi \land R, \gamma\}, w, D(\varphi)).
  \end{equation}
\end{theorem}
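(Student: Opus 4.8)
The plan is to factor the statement through the soundness of the branch-and-bound algorithm and a per-policy correctness lemma for the compiler. First I would invoke Theorem~\ref{thm:soundness of bb}, which guarantees that running Algorithm~\ref{algorithm:bb} on the compiled BBIR $(\{\varphi \land R, \gamma\}, D(\varphi), w)$ (initialized with search space $D(\varphi)$, empty partial policy, and bottom lower bound) computes the optimum of the MEU-with-evidence problem of \cref{eq:bbir-meu}. This reduces the right-hand side $\mathrm{bb}(\{\varphi \land R, \gamma\}, w, D(\varphi))$ to
\[
  \max_{\pi \in \mathcal A} \frac{\AMC((\varphi \land R)|_{\pi} \land \gamma|_{\pi}, w)_{\EU}}{\AMC(\gamma|_{\pi}, w)_{\Pr}},
\]
where $\mathcal A$ is the policy space, i.e. the total instantiations of the choice variables $D(\varphi)$. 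Since $\MEUfn{e} = \max_{\pi \in \mathcal A}\EU(e|_{\pi})$ by Definition~\ref{def:meu for dappl}, and both maxima range over the same finite policy space under the same real total order, it suffices to establish the per-policy identity
\[
  \EU(e|_{\pi}) = \frac{\AMC((\varphi \land R)|_{\pi} \land \gamma|_{\pi}, w)_{\EU}}{\AMC(\gamma|_{\pi}, w)_{\Pr}} \qquad \text{for every } \pi \in \mathcal A.
\]

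To prove this identity I would combine two supporting lemmas. The first is a \emph{commutation lemma}: restricting the compiled artifacts of $e$ by a policy $\pi$ yields, up to logical equivalence preserving the algebraic model count, the compiled artifacts of the \util{} program $e|_{\pi}$. This follows by structural induction on the compilation relation $\leadsto$; the only nontrivial cases are \texttt{bc/choose} and \texttt{bc/[]}, where instantiating the choice literals of $\pi$ annihilates every disjunct except the one guarded by the selected alternative $a_i$ --- precisely because of the $\mathrm{ExactlyOne}$ constraint introduced by \texttt{bc/[]} --- leaving exactly the compilation of the chosen branch $e_i$ (relying on the soundness of the $e \mapsto e|_{\pi}$ reduction of \citet{cho2025scaling}). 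The second is a \emph{\util{}-compilation-correctness lemma}: for any \util{} program $u \leadsto (\varphi_u, \gamma_u, w_u, R_u)$, the projections of the algebraic model count recover $\denote{u}$, namely that the $\Pr$-component of $\AMC(\varphi_u \land R_u \land \gamma_u, w_u)$ is the unnormalized probability mass of successful traces returning $\tt$, its $\EU$-component is the corresponding unnormalized expected reward, and $\AMC(\gamma_u, w_u)_{\Pr}$ is the normalizing probability of the evidence. Taking the ratio then matches the conditional expectation of Definition~\ref{def:eu util} exactly, using the algebraic-model-count invariant of \cref{appendix:amc invariant} to identify the $\EU$ and $\Pr$ projections.

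The bulk of the work, and the main obstacle, is the structural induction for the \util{}-compilation-correctness lemma, specifically the correct bookkeeping of rewards. The expectation semiring product $(p,u)\otimes(q,v) = (pq,\,pv+qu)$ is engineered so that a reward variable with weight $(1,k)$ contributes $k$ scaled by the probability of the trace on which it fires; the delicate step is the \texttt{bc/ite} rule, where accumulated rewards are discharged and reset ($\conjneg{R_T}$, $\conjneg{R_E}$) so that each branch's utility is weighted by the correct conditional probability rather than double-counted or leaked into the untaken branch. I would verify in each inductive case that the first model-count component tracks probability mass and the second tracks probability-weighted reward, paying particular attention to (i) the \texttt{reward} case, where a fresh variable injects $(1,k)$; (ii) the \texttt{ite} case, whose mutual-exclusion constraints $\conjneg{R_T}, \conjneg{R_E}$ prevent rewards of the untaken branch from contaminating the count; and (iii) the \texttt{observe} case, where conditioning shifts mass into $\gamma$ and hence into the denominator. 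The base cases (\texttt{flip}, \texttt{return}) are immediate from the weight assignment, and sequencing (\texttt{bind}) follows from distributivity of $\otimes$ over $\oplus$ together with the non-aliasing of the weight-function union. Once these cases are discharged, the two lemmas compose to give the per-policy identity and therefore the theorem.
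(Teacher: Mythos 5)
Your proposal is correct and takes essentially the same route as the paper's own proof: your ``commutation lemma'' is the paper's commuting-square result (Lemma~\ref{lemma:util square}, built on the policy-space correspondence of Lemma~\ref{lemma:policy space}), your \util{}-compilation-correctness lemma is exactly the paper's Theorem~\ref{thm:util correspondence} (proved there by the same structural induction, with the same delicate cases you flag: \texttt{reward}, the reward-resetting in \texttt{bc/ite}, \texttt{observe} feeding the denominator, and \texttt{bind} via the conjunction/inclusion-exclusion lemmas), and the final assembly through Theorem~\ref{thm:soundness of bb} matches the paper's concluding step.
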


To see this theorem in action, we return to our original example code in
Figure~\ref{fig:motivation-dappl}. It compiles to the Boolean formula seen in
Figure~\ref{fig:compilation tree}. Let the compiled formula be $\varphi$.
Then we see that
\begin{align}
  \varphi|_u &= (f_{0.5} \land r_{10} \land \overline{r_5})
  \lor (\overline{f_{0.5}} \land r_{10} \land \overline{r_5}) \land \overline{r_{-100}} \\
  \varphi|_n &= (f_{0.5} \land r_{-100}) \land \overline{r_{10}} \land \overline{r_5}.
\end{align}
The $\AMC$ of $\varphi|_u$ and $\varphi|_n$ exactly match that of
$\varphi_u$ and $\varphi_{\overline u}$ in Equation~\ref{eq:formula-umbrella},
which completes the picture.

\section{$\pineappl$: A Language for MMAP}\label{sec:pineappl}

In this section, we describe the syntax, semantics, and boolean compilation of
$\pineappl$. $\pineappl$ is
different from $\dappl$ in the fact that it is
a first-order, imperative probabilistic programming
language with support for first-class MMAP computation, along with
marginal probability computations.
Much like the organization of~\cref{sec:dappl},
we will first introduce the syntax and semantics (\cref{subsec:pineappl-sem}),
then outline the Boolean compilation (\cref{subsec:pineappl-compl}).

\subsection{Syntax and Semantics of $\pineappl$}
\label{subsec:pineappl-sem}

\begin{wrapfigure}{r}{0.61\linewidth}
  \begin{mdframed}
    {\footnotesize\begin{align*}
    \text{Expressions } \texttt{e} ::= \ & \texttt{x} \mid \tt \mid \ff \mid
    \texttt{e} \land \texttt{e} \mid \texttt{e} \lor \texttt{e} \mid
    \neg \texttt{e} &\\
    \text{Statements } \texttt{s} ::= \ & \texttt{x} = e \mid x = \flip{\theta} \mid \texttt{s} ; \texttt{s}&\\
    &\mid
    \texttt{if e \{s\} else \{s\}} &\\
    &\mid (\texttt{m}_1, \hdots, \texttt{m}_n) = \texttt{mmap}(\texttt{x}_1, \hdots, \texttt{x}_n) &\\
    &\mid (\texttt{m}_1, \hdots, \texttt{m}_n) = \texttt{mmap}(\texttt{x}_1, \hdots, \texttt{x}_n) \; \texttt{with} \; \{ \texttt{e} \} &\\
    \text{Programs } P ::= \ & \texttt{s} ; \Pr(\texttt{e}) \mid \texttt{s} ; \Pr(\texttt{e}) \; \texttt{with} \; \{ \texttt{e}  \}
    \end{align*}}
  \end{mdframed}
    \caption{Full $\pineappl$ syntax.}
    \label{fig:pineappl-syntax}
\end{wrapfigure}

The full syntax of pineappl is in~\cref{fig:pineappl-syntax}.  A $\pineappl$
program is made of two parts: statements and a query.  Statements consist of
(1) variables bound to either \pineapplcode{flip}s or expressions over them,
(2) a \pineapplcode{mmap} statement for binding a set of variables
$\texttt{x}_1, \hdots, \texttt{x}_n$
to the MMAP state of variables $\texttt{m}_1, \hdots, \texttt{m}_n$, or
(3) a sequence of the above.
A query asks for the marginal probabilty of an
expression.
We assume all variables have unique names.
Note that
\pineapplcode{mmap} and \pineapplcode{Pr} can be followed by \pineapplcode{ with \{e\}},
denoting the \textit{observation} of expression \pineapplcode{e}.
We impose the additional restriction that no variables referenced in the observed
expression have been bound by \pineapplcode{mmap}.
We endow more sugar in the full language, including support for
multiple queries, categorical discrete random variables, and bounded loops
in~\citet{cho2025scaling}.

\begin{figure}
\begin{mdframed}
  {\footnotesize
  \begin{align*}
    \infer[\texttt{s/flip}]
    {
      (\texttt{x = flip} \ \theta, \mathcal{D}) \Downarrow
      \{\sigma \cup [x \mapsto \top] \mapsto \theta \times \mathcal{D}(\sigma)\} \cup
      \{\sigma \cup [x \mapsto \bot] \mapsto (1 - \theta) \times \mathcal{D}(\sigma)\}
    }
    {
      \mathrm{fresh} \; x
    }
  \end{align*}
  \begin{align*}
    \infer[\texttt{s/assn}]
    {
      (\texttt{x = e}, \mathcal{D}) \Downarrow
      \begin{gathered}
      \{\sigma \cup [x \mapsto \top] \mapsto p \times\mathcal{D}(\sigma) \mid e[\sigma] = \top, \sigma \in \dom(\mathcal{D})\}
      \\
      \cup
      \{\sigma \cup [x \mapsto \bot] \mapsto (1-p) \times \mathcal{D}(\sigma) \mid e[\sigma] = \bot, \sigma \in \dom(\mathcal{D})\}
      \end{gathered}
    }
    {
      \mathrm{fresh} \; x
      &
      p = \Pr_{\mathcal D}[\texttt e]
    }
  \end{align*}
  \begin{align*}
    \infer[\texttt{s/seq}]
    {
        (s_1; s_2, \mathcal{D}) \Downarrow \mathcal{D}''
    }
    {
        (s_1, \mathcal{D}) \Downarrow \mathcal{D}'
        & (s_2, \mathcal{D}') \Downarrow \mathcal{D}''
    }
  \end{align*}
  \begin{align*}
    \infer[\texttt{s/if}]
    {
      (\texttt{if e \{s$_1$\} else \{s$_2$\}}, \mathcal{D}) \Downarrow
      \{\sigma \mapsto p \times \mathcal{D}_1(\sigma) +
      (1-p) \times \mathcal{D}_2(\sigma)| \sigma \in \dom(\mathcal{D}_1)\}
    }
    {
      (\texttt s_1, \mathcal{D}) \Downarrow \mathcal{D}_1
      & (\texttt s_2, \mathcal{D}) \Downarrow \mathcal{D}_2
      & \Pr_{\mathcal D}[\texttt e] = p
    }
  \end{align*}
  \begin{align*}
    \infer[\texttt{s/mmap}]
    {
      (\vec{\texttt{m}}\texttt{ = mmap }\vec{\texttt{x}}, \mathcal{D}) \Downarrow
      \{\sigma \cup \sigma_m \mapsto \mathcal{D}(\sigma) \mid \sigma \in \dom(\mathcal{D}) \}
    }
    {
      \vec{A} = MMAP_{\mathcal D}(\vec{\texttt{x}})
      & \sigma_m = \{m_i \mapsto A_i \mid i \in [1,n]\}
    }
  \end{align*}
  \begin{align*}
    \infer[\texttt{s/mmap/with}]
    {
      (\vec{\texttt{m}}\texttt{ = mmap }\vec{\texttt{x}} \ \pineapplcode{ with \{e\}}, \mathcal{D})
      \Downarrow
      \{\sigma \cup \sigma_m \mapsto \mathcal{D}(\sigma) \mid \sigma \in \dom(\mathcal{D}) \}
    }
    {
      \vec{A} = MMAP_{\mathcal D}(\vec{\texttt{x}} \mid e)
      & \sigma_m = \{m_i \mapsto A_i \mid i \in [1,n]\}
    }
  \end{align*}
  \begin{align*}
    \infer[\texttt{p/pr}]
    {
      \texttt{s; Pr(e)}  \Downarrow_P \Pr_{\mathcal D} [\texttt{e}]
    }
    {
      (\texttt s, \eset) \Downarrow \mathcal{D}
    } \;\;
    \qquad
    \infer[\texttt{p/pr/with}]
    {
      \texttt{s; Pr(e$_1$) with \{e$_2$\}} \Downarrow_P \frac{\Pr_{\mathcal{D}}[\texttt e_1 \land \texttt e_2]}{\Pr_{\mathcal{D}}[\texttt e_2]}
    }
    {
    (\texttt s, \eset) \Downarrow \mathcal{D}
    }
  \end{align*}
  }
\end{mdframed}
\caption{Operational semantics for $\pineappl$.
  All variable names in a $\pineappl$ program are assumed unique.
  $x_i$ denotes the $i$-th component of a vector $\vec{x} = (x_1, \cdots, x_n)$.}
\label{fig:pineappl-os}
\end{figure}

$\pineappl{}$'s semantics are given by two relations: $\Downarrow$ and $\Downarrow_P$,
described in~\cref{fig:pineappl-os}.
The
$\Downarrow$ relation is a big-step operational semantics
relating pairs of statements and distributions $(s, \mathcal D)$
to a new distribution $\mathcal D'$.
These distributions are over assignments of variables.
The
$\Downarrow_P$ relation relates a $\pineappl{}$ program $P = s; q$
to a real number correponding to the probability of query $q$.

To remark on the notation behind several rules:

\begin{enumerate}[leftmargin=*]
  \item The $\Pr_{\mathcal D}[e]$ notation used in
  \texttt{s/assn}, \texttt{s/if}, \texttt{p/pr}, \texttt{p/pr/with}
  denotes the probability of the event in $\mathcal D$ that the Boolean expression $e$ is satisfied.
  \item $MMAP_{\mathcal D}$, as used in \texttt{s/mmap} and \texttt{s/mmap/with},
  is the marginal MAP operator of some vector of variables $\vec{\mathtt{x}}$
  over a distribution $\mathcal D$, potentially conditioned on an expression $e$. More precisely we can define $MMAP_{\mathcal D}$ as follows:
  \begin{equation}
    MMAP_{\mathcal D}(\vec{\mathtt x} \mid e) = \max_{\sigma \in inst(\vec{\mathtt x})} \mathcal D(\sigma \mid e),
  \end{equation}
  where $\mathcal D(\sigma \mid e)$ is the probability of the instantiation $\sigma$ in $\mathcal D$ conditional on $e$.
\end{enumerate}

Finally, we can query the probability of an expression \texttt{e} over the
compiled distribution via $\Downarrow_P$. To handle observation, \pineapplcode{Pr(e) with \{o\}},
as with the rule \texttt{p/pr/with},
we first compute the unormalized probability of the observation being
true jointly with the query, $\Pr_{\mathcal{D}}[e \land o = \tt]$, and then divide by the
normalizing constant, $\Pr_{\mathcal{D}}[o = \tt]$; this is Bayes' rule.

\subsection{Boolean Compilation of $\pineappl{}$}
\label{subsec:pineappl-compl}
\begin{figure}
\begin{mdframed}
  {\footnotesize
  \begin{align*}
    \infer[\texttt{bc/mmap}]
    {
      (\vec{\texttt{m}}\texttt{ = mmap }\vec{\texttt{x}}, \mathcal{F}, w) \leadsto
      (\mathcal{F} \cup \{(m_i, k_i) \}, w \cup w_{M})
    }
    {
      \mathrm{fresh} \; k_i
      & \vec A = MMAP(\{\bigwedge_{(x, \varphi) \in \mathcal{F}} x \leftrightarrow \varphi, \eset\}, \vec{\texttt{x}}, w)
      &
      w_{M} = \{m_i \mapsto (1,1), k_i \mapsto A_i \}
    }
  \end{align*}
  \begin{align*}
    \infer[\texttt{bc/mmap/with}]
    {
      (\vec{\texttt{m}}\texttt{ = mmap }\vec{\texttt{x}} \ \pineapplcode{ with \{e\}}, \mathcal{F}, w) \leadsto
      (\mathcal{F} \cup \{(m_i, k_i) \}, w \cup w_{M})
    }
    {
      \mathrm{fresh} \; k_i
      & \texttt e \leadsto_E \psi
      & \vec A = MMAP(\{\bigwedge_{(x, \varphi) \in \mathcal{F}} x \leftrightarrow \varphi, \psi\}, \vec{\texttt{x}}, w)
      &
      w_{M} = \{m_i \mapsto (1,1), k_i \mapsto A_i \}
    }
  \end{align*}
  \begin{align*}
    \infer[\texttt{bc/pr}]
    {
      \texttt{s; Pr(e)} \leadsto_P (\chi \land \left( \bigwedge_{(x, \varphi) \in \mathcal{F}} x \leftrightarrow \varphi \right), \top,w)
    }
    {
      (\texttt{s}, \eset, \eset) \leadsto (\mathcal F, w)
      &
      \texttt{e} \leadsto_E \chi
    }
  \end{align*}
  \begin{align*}
    \infer[\texttt{bc/pr/with}]
    {
      \pineapplcode{s; Pr(e$_1$) with \{e$_2$\}} \leadsto_P (\chi \land \left(\bigwedge_{(x, \varphi) \in \mathcal{F}} x \leftrightarrow \varphi \right), \psi, w)
    }
    {
      (s, \eset, \eset) \leadsto (\mathcal F, w)
      &
      \texttt{e$_1$} \leadsto_E \chi
      &
      \texttt{e$_2$} \leadsto_E \psi
    }
  \end{align*}
  }
\end{mdframed}
\caption{Selected Boolean compilation rules for $\pineappl$. As shorthand, we write
$w \cup \{x \mapsto (a, b)\}$ instead of $w \cup \{(x \mapsto \top) \mapsto a, (x \mapsto \bot) \mapsto b\}$.
The $\leadsto_E$ relation translates expressions into Boolean formulae; explicit rules are given in~\cref{appendix:pineappl bc}. The symbol $\leftrightarrow$ denotes logical if-and-only-if.}
\label{fig:pineappl-compl}
\end{figure}


Like $\dappl{}$, we compile $\pineappl$ programs to Boolean formulae as a tractable
representation. Key rules are in~\cref{fig:pineappl-compl} and full rules are in~\cref{appendix:pineappl-compl-complete}.
The BBIR is used in the \texttt{bc/mmap} and \texttt{bc/mmap/with} rule, where
the premise $MMAP$ is identical to that defined in~\cref{subsubsec:mmap}, and is solved via
Algorithm~\ref{algorithm:bb}. We define three relations:
\begin{itemize}[leftmargin=*]
  \item $e \leadsto_E \varphi$ compiles a $\pineappl$ expression to a Boolean formula,
  \item $(s, \mathcal{F}, w) \leadsto (\mathcal{F}', w)$ compiles a $\pineappl$ statement $s$,
  a set of pairs of identifers and formulae $\mathcal F$, and a weight map of literals $w$
  into a set $\mathcal{F'}$ and weight map $w'$, and
  \item $s ; q \leadsto_P (\varphi, \psi, w)$ with an unnormalized formula $\varphi$, an accepting formula $\psi$, and a weight map $w$.
\end{itemize}

To conclude the section, we give a correctness theorem, akin to~\cref{thm:compiler correctness}, proven in~\cref{appendix:proof-pineappl-correctness}.

\begin{theorem}\label{thm:pineappl correctness}
  For a $\pineappl{}$ program $s;q$, let $s ; q\Downarrow_P p$ and
  $s ; q \leadsto_P (\chi \land \paren{\bigwedge_{(x, \varphi) \in \mathcal{F}} x \leftrightarrow \varphi},\psi, w)$.
  Then
  {\footnotesize\begin{equation}
    p = \frac{\AMC_{\R}\left(\chi \land \paren{\bigwedge_{(x, \varphi) \in \mathcal{F}} x \leftrightarrow \varphi} \land \psi,w\right)}{\AMC_{\R}(\psi \land \paren{\bigwedge_{(x, \varphi) \in \mathcal{F}} x \leftrightarrow \varphi}, w)}.
  \end{equation}}
\end{theorem}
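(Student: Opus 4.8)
The plan is to reduce this program-level statement to a single statement-level invariant relating the big-step distribution semantics $\Downarrow$ to the Boolean compilation $\leadsto$, and then to discharge the two query rules \texttt{p/pr} and \texttt{p/pr/with} against the compilation rules \texttt{bc/pr} and \texttt{bc/pr/with} using that invariant together with Bayes' rule. Everything is driven by induction on the structure of statements, with the \pineappl{} \texttt{mmap} construct carrying the only real content.

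First I would establish two auxiliary lemmas. The first, on expressions, states that if $\texttt{e} \leadsto_E \chi$ then for every assignment $\sigma$ to the program variables, $\chi$ is satisfied by $\sigma$ exactly when $\texttt{e}$ evaluates to true under $\sigma$; this is a routine induction over the Boolean connectives. The second is the main invariant: if $(s, \emptyset) \Downarrow \mathcal{D}$ and $(s, \emptyset, \emptyset) \leadsto (\mathcal{F}, w)$, then for every assignment $\sigma$ over $\dom(\mathcal{D})$,
$$\mathcal{D}(\sigma) = \AMC_\R\left(\left(\textstyle\bigwedge_{(x,\varphi)\in\mathcal{F}} x \leftrightarrow \varphi\right)\Big|_\sigma, w\right),$$
i.e.\ marginalizing the weighted model count over the auxiliary flip variables recovers $\mathcal{D}$ on the program variables. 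In particular, summing over all $\sigma$ gives $\AMC_\R(\bigwedge_{(x,\varphi)\in\mathcal{F}} x \leftrightarrow \varphi, w) = 1$, since $\mathcal{D}$ is a genuine probability distribution: none of the statement rules (in particular not \texttt{s/mmap/with}, which only binds the $m_i$ and never reweights $\mathcal{D}$) lose any mass.

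I would prove this invariant by induction on the derivation of $\Downarrow$. The cases \texttt{s/flip}, \texttt{s/assn}, \texttt{s/seq}, and \texttt{s/if} follow the standard knowledge-compilation correspondence: each fresh variable's weight $\theta/(1-\theta)$ accounts for a branching probability, and the constraint $x \leftrightarrow \varphi$ ties a bound variable to the formula it denotes, invoking the expression lemma for \texttt{s/assn} and \texttt{s/if}. The delicate cases are \texttt{s/mmap} and \texttt{s/mmap/with}. Here the rule \texttt{bc/mmap} issues a staged call $\vec{A} = MMAP(\{\bigwedge_{(x,\varphi)\in\mathcal{F}} x \leftrightarrow \varphi, \emptyset\}, \vec{\texttt x}, w)$ solved by Algorithm~\ref{algorithm:bb}. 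By the induction hypothesis the formula $\bigwedge_{(x,\varphi)\in\mathcal{F}} x \leftrightarrow \varphi$ encodes exactly the current distribution $\mathcal{D}$, so by soundness of branch-and-bound (Theorem~\ref{thm:soundness of bb}, specialized to MMAP of \cref{eq:bbir-mmap}) the computed $\vec{A}$ is precisely the argmax instantiation $MMAP_{\mathcal{D}}(\vec{\texttt x})$ used in the semantic rule \texttt{s/mmap}. It then remains to check that adding the pairs $(m_i, k_i)$ with weights $k_i \mapsto A_i$ and $m_i \mapsto (1,1)$ pins each $m_i$ to its MMAP value while contributing an overall factor of $1$, matching the deterministic extension $\sigma \mapsto \sigma \cup \sigma_m$ and leaving the distribution on the old variables unchanged. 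The \texttt{with} variant is identical once the observed expression is compiled via the expression lemma and threaded through the conditioned MMAP.

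The main theorem then follows directly. For \texttt{p/pr} we have $\psi = \top$, and the query rule gives $p = \Pr_{\mathcal{D}}[\texttt{e}] = \sum_{\sigma : \texttt{e}[\sigma] = \top} \mathcal{D}(\sigma)$; rewriting the index set as $\{\sigma : \chi[\sigma] = \top\}$ by the expression lemma and applying the invariant summand-wise yields the numerator $\AMC_\R(\chi \land \bigwedge x \leftrightarrow \varphi, w)$, while the denominator is the total mass $1$, matching the claimed ratio. For \texttt{p/pr/with} the rule gives $p = \Pr_{\mathcal{D}}[\texttt{e}_1 \land \texttt{e}_2] / \Pr_{\mathcal{D}}[\texttt{e}_2]$, and the same two lemmas turn numerator and denominator into $\AMC_\R(\chi \land \psi \land \bigwedge x \leftrightarrow \varphi, w)$ and $\AMC_\R(\psi \land \bigwedge x \leftrightarrow \varphi, w)$ respectively, which is exactly Bayes' rule realized as a ratio of algebraic model counts. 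The main obstacle is the \texttt{mmap} case of the invariant: one must argue both that the formula compiled so far faithfully represents the intermediate distribution and that the staged branch-and-bound query returns the same argmax as the semantic $MMAP_{\mathcal{D}}$, which is precisely where the unrestricted variable order of the BBIR and Theorem~\ref{thm:soundness of bb} do the real work.
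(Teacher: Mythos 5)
Your proposal is correct and follows essentially the same route as the paper: the paper defines your invariant as a simulation relation $D \sim (\mathcal{F},w)$ (identical to yours up to a factor $\prod_{\ell \in \sigma} w(\ell)$ that is always $1$ because program variables carry weight $(1,1)$), proves an expression-level lemma, shows by induction on statements that $\Downarrow$ and $\leadsto$ preserve $\sim$ --- with the \texttt{mmap} cases reduced, exactly as you argue, to the identity $MMAP_{\mathcal D}(\vec{x} \mid e) = MMAP(\{\bigwedge_i x_i \leftrightarrow \varphi_i, \psi\}, \vec{x}, w)$ --- and then concludes via Bayes' rule as in your final paragraph. One minor refinement: state the invariant for arbitrary related pairs $D \sim (\mathcal{F}, w)$ rather than only for executions started from the empty state, since the \texttt{s/seq} case needs that stronger induction hypothesis, which is how the paper phrases its statement-level theorem.
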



\section{Empirical Evaluation of $\dappl$ and $\pineappl$}\label{sec:eval}

\cref{sec:bbir} outlined how BBIR can both \textit{factorize} program
structure and \textit{prune} ineffective strategies over such a representation.
But, the question still remains: does this translate into a fast
language for optimization in practice?
To answer this question we compare \dappl{} and \pineappl{} against
existing languages to express and solve MEU and MMAP problems.\footnote{\emph{Evaluation and implementation details:} All timings of
benchmarks were run on a single thread,
on a server with 512GB of RAM and two AMD EPYC 7543
CPUs.
The BBIR and and associated algorithms are written in Rust.
$\pineappl$ was written in Rust, while $\dappl$ was written in OCaml.
When feasible, the output by ProbLog and its variants
were verified to match the policies output by \dappl{} and $\pineappl$.
}

\subsection{Empirical Evaluation of $\dappl$}\label{subsec:dappl-eval}
We compared $\dappl$'s MEU evaluation via BBIR to two existing approaches:

\begin{itemize}[leftmargin=*]
  \item \textit{Enumeration.}
  Every possible policy is enumerated, then evaluated according to the expected utility.
  We compare against \problog~2 as a representative of this strategy~\citep{de2007problog}.
  \item \textit{Order-constrained 2AMC approaches}.
  \citet{derkinderen2020algebraic} introduced a state-of-the-art decision-theoretic \problog{}
  implementation that
  compiles programs into an order-constrained representation;
  we use this implementation as a representative strategy from this category.
\end{itemize}

Thus, we generated several benchmarks as both \dappl{} and \problog{} programs
to test the performance of the IRs.  As of yet there is no standard suite of
benchmarks for evaluating the MEU task, so we generated a new set of benchmarks
for validating performance.  Throughout our experiments we made a
best-effort attempt to write the most efficient programs in all languages.

\subsubsection{Bayesian Network Experiments}\footnote{Bayesian networks were selected
from \url{https://www.bnlearn.com/bnrepository/}.}
\label{sec:bn-eval}
  Bayesian networks are a well-established source of difficult, realistic,
  and useful probabilistic inference problems. It is straightforward to
  translate a Bayesian network into a \dappl{} or \problog{} program.
  However, Bayesian networks only represent probabilistic inference, and not decision making.
  We generated a standardized suite of challenging decision-theoretic problems on
  Bayesian networks by following the process in~\citet{derkinderen2020algebraic}.
  First, we transformed the root nodes of a Bayesian network into a decision.
  Then, if there were less than four decisions made through this process,
  each node of the Bayesian network was converted into a decision
  with probability 0.5. Utilities were added via one of two random methods:
  \begin{enumerate}[leftmargin=*]
    \item For each node in the Bayesian network, a utility of an integer
    between 0 and 100 was assigned with probability 0.8 for when the node yielded true,
    and assigned with probability 0.3 for when the node yielded false.
    \item We introduced five new ``reward nodes'' in the Bayesian network,
    on which rewards were assigned whether it was true or not.
    The reward nodes are true if and only if at least one of five randomly generated
    assignments to the existing nodes of the Bayesian network are true.
  \end{enumerate}
  We call the first utility assignment strategy ``Existing'',
  and the second strategy ``New nodes''.
  The Bayesian networks studied were Asia, Earthquake, and Survey,
  as they were the ones studied in previous work~\citep{derkinderen2020algebraic}.
Table \ref{table : bn} reports the performance of \dappl{} in comparison with
\dtproblog{}. We observe that \dappl~excels at computing the MEU over all three
Bayesian networks, across both methods to add utilities.
It is not surprising to
see an improvement over the enumerative strategy, but it is surprising to see that
the cost of
constraining the variable order to have choices-first is burdensome to the point
of timeout.
This is most likely because moving each choice to the
top of the order can lead to blowup, and this happens multiple times.

\subsubsection{Scaling Experiments}

In these experiments, we generate a family
of progressively larger examples to study how \dappl{} and \dtproblog{} scale
as the size of the example grows.
\begin{itemize}[leftmargin=*]
  \item \textit{Diminishing Returns (DR).}
  The scenario goes as this: we flip a coin with some bias.  If heads, we choose
  between 2-6 utilities, uniformly distributed between 0 and 100.  If tails, we
  flip another coin with another bias, but enter the same scenario.  This
  example scales in $n$ coin flips.
  This behavior is nicely modeled in \dappl{}: see the supplementary
  materials for example programs.
  The decision scenario has a simple solution to us: since each decision is independent
  of each other, it suffices to pick the choice maximizing the utility for each coin flip.
  \item \textit{One-shot faulty network ladder diagnosis (One-shot ladder).}
  We adapt a ladder network model
  as outlined in~\citet{holtzen2020scaling} into a decision--theoretic scenario.
  The network topology is visualized as follows: \begin{tikzpicture}[node distance=0.2cm, baseline=-1em]
  \node[] (init1) {$\dots$};
  \node[below = of init1] (init2) {$\dots$};
  \node[draw, circle, right=of init1] (S1) {};
  \node[draw, circle, right=of init2] (S2) {};
  \node[draw, circle, right=of S1] (S3) {};
  \node[draw, circle, right=of S2] (S4) {};
  \node[right=of S3] (final1) {$\dots$};
  \node[right=of S4] (final2) {$\dots$};

  \draw[->] (init1) -- (S1);
  \draw[->] (init2) -- (S2);

  \draw[->] (S1) -- (S3);
  \draw[->] (S1) -- (S4);

  \draw[->] (S2) -- (S3);
  \draw[->] (S2) -- (S4);

  \draw[->] (S3) -- (final1);
  \draw[->] (S4) -- (final2);
\end{tikzpicture}. Each circle represents a router, and each arrow
represents a link.
  We construct a ladder network with $2n$ routers, where $n$ is the scaling parameter. We observe that
  an incoming packet does not make it to the end of the network. Then,
  the task is to find a faulty router. If we choose a faulty router,
  then we obtain a reward uniformly distributed between 0 and 100;
  otherwise we receive a reward of 0.
  This benchmark is difficult as performing inference on the network is
  already quite difficult~\citep{holtzen2020scaling}
  but we additionally introduce a choice with $2n$ many alternatives.
  \item \textit{$k$-shot faulty network ladder diagnosis ($k$-shot ladder).}
  We keep the same ladder network as above,
  but if we fail to find a faulty router the first try,
  we can continue up until $k$ tries,
  where $k$ is less than the total number of nodes in the network ladder.
  This benchmark is the hardest, as the number of possible policies is factorial
  with respect to the number of nodes.
\end{itemize}
\begin{figure}
  \begin{subfigure}[t]{0.33\linewidth}
    \centering
     \begin{tikzpicture}
    \begin{axis}[
		height=3.5cm,
    ymin=0,
    ymax=1000000,
		grid=major,
    width=4cm,
    ymode=log,
  xtick={1,2,3,4,5,6,7,8},
    ytick={1,10,100,1000,10000,100000},
    xlabel={\# Columns in DR},
    ylabel={Time (ms)},
    legend columns=6,
    legend style={at={(13em,7em)},anchor=north}
    ]
    \addplot[mark=triangle, thick] table [x index={0}, y index={1}] {\hmmdata};
    \addlegendentry{\dappl{}};
    \addplot[mark=x, thick, red] table [x index={0}, y index={2}] {\hmmdata};
    \addlegendentry{ProbLog 2};
    \addplot[mark=square, thick, blue] table [x index={0}, y index={3}] {\hmmdata};
    \addlegendentry{\citet{derkinderen2020algebraic}};
  \end{axis}
  \end{tikzpicture}

  \caption{DR Benchmark.}
  \label{fig : dr}
  \end{subfigure}
  ~
  \begin{subfigure}[t]{0.33\linewidth}
     \begin{tikzpicture}
    \begin{axis}[
		height=3.5cm,
    ymin=0,
    ymax=1000000,
		grid=major,
    width=4cm,
    ymode=log,
    xtick={1,4,6,8,10},
    ytick={1,10,100,1000,10000,100000},
    xlabel={\# nodes in ladder},
    ylabel={Time (ms)},
    ]
    \addplot[mark=triangle, thick] table [x index={0}, y index={1}] {\ladderlongdata};
    \addplot[mark=x, thick, red] table [x index={0}, y index={2}] {\ladderlongdata};
    \addplot[mark=square, thick, blue] table [x index={0}, y index={3}] {\ladderlongdata};
  \end{axis}
  \end{tikzpicture}

  \caption{One-shot ladder Benchmark.}
  \label{fig : 1shotladder}
  \end{subfigure}
  ~
  \begin{subfigure}[t]{0.33\linewidth}
     \begin{tikzpicture}
    \begin{axis}[
		height=3.5cm,
    ymin=0,
    ymax=1000000,
		grid=major,
    width=4cm,
    ymode=log,
    xtick={1,2,3,4},
    ytick={1,10,100,1000,10000,100000},
    xlabel={Number of tries $k$},
    ylabel={Time (ms)},
    ]
    \addplot[mark=triangle, thick] table [x index={0}, y index={1}] {\ladderfour};
    \addplot[mark=x, thick, red] table [x index={0}, y index={2}] {\ladderfour};
    \addplot[mark=square, thick, blue] table [x index={0}, y index={3}] {\ladderfour};
  \end{axis}
  \end{tikzpicture}
  \caption{8-node $k$-shot ladder benchmark.}
  \label{fig : ladder4}
  \end{subfigure}
  \caption{Scaling results comparing \dappl{}, ProbLog 2, and \citet{derkinderen2020algebraic} on
  MEU tasks.
  The average number of choices in DR is $4 \times \text{\# of columns}.$
  The number of choices in one-shot ladder is twice the number of nodes.
  The number of choices in one-shot ladder is $\prod_{i \leq \text{\# of tries}} 8-(1-i)$.}
  \label{fig:scaling}
\end{figure}
{\footnotesize
\begin{table}
\caption{Comparison of different MEU tools on Bayesian network benchmarks.
  Time is in milliseconds (ms), with timeout 5 minutes = 300000ms.
  All reported times are the average over several runs;
  see the text for details.
  ``Avg. Times Pruned'' is the average number of times a partial policy (of any size)
  was not traversed in our randomly generated experiments.
  }
\begin{tabular}{lllll|l}
\toprule
\multicolumn{2}{c}{}  & $\dappl$ & \problog~2 & 2AMC &\\
Bayesian Network & Utility Method &  &  &  & Avg. Times Pruned\\
\midrule
\multirow[m]{2}{*}{Asia} & Existing & \textbf{1.4±0.3} & 28.6±11.4 & 86.5±40.8 & 7.6\\
 & New nodes & \textbf{6.0±0.3} & 53.4±5.5 & 119.2±20.7 & 4.7\\
\cline{1-6}
\multirow[m]{2}{*}{Earthquake} & Existing & \textbf{1.0±0.2} & 15.2±4.8 & 19.4±5.6 &3.0\\
 & New nodes & \textbf{2.4±0.2} & 33.2±2.3 & 24.6±4.5 & 3.7\\
\cline{1-6}
\multirow[m]{2}{*}{Survey} & Existing & \textbf{8.3±0.8} & 319.1±194.3 & 16532.8±1096.3 &3.4\\
 & New nodes & \textbf{103±0.8} & 182.3±43.2 & 19485.3±8173.8 & 2.7\\
\cline{1-6}
\bottomrule
\end{tabular}
\label{table : bn}
\end{table}
}
The results of these scaling experiments are reported in~\cref{fig:scaling}.
We observe that in the DR and one-shot ladder benchmarks,
\dappl{} feels the effects of its theoretical worst-case performance,
performing marginally better or worse than its competitors.
We believe that the primary reason \dappl{} scales poorly for these
examples is because
as the number of policies grow, there are many policies that are similar in expected reward yet incompatible,
decreasing opportunities for pruning.
On the contrary, we see that the order-constrained approach of 2AMC IR
is particularly performant on this task.
We believe this is because the structure of this problem is particularly
amenable to a constrained approach: the decision problem and the
ladder network can be defined almost entirely separately from each other,
resulting in an easier constraint on the order. Furthermore, bringing
all choice variables to the front of the order mitigates much of the treewidth
blowup faced in \dappl. In the future, we hope to synthesize the strengths
of order-constrainedness and our branch-and-bound approach to scale to
these examples that require exploiting this form of structure.


Next, we consider the sequential decision-making task of
diagnosing a faulty router, the $k$-shot ladder benchmark.  For a ladder with
eight nodes (four columns), we see that \dappl{} outscales 2AMC, although
neither were able to go past 3-shot ladder within the timeout. This example
was particularly challenging and performance was dependent
on our randomized
strategy for creating rewards
and heuristics for selecting where to branch first; due to this variability, \dappl{} timed out on
3 tries but successfully computed the MEU for 4.

\subsubsection{Gridworld: Scaling on Markov Decision Processes}
\label{sec:gridworld}
{\footnotesize
\begin{table}
\caption{Comparison of finite unrolling of Gridworld MDP benchmarks.
  The grid was an $n\times n$ grid of dimension $n$ with randomly generated start, finish, and obstacles.
  Time is in milliseconds (ms), with timeout 5 minutes = 300000ms.
  All reported times are the average over several runs;
  see \Cref{sec:gridworld} for details.}
\begin{tabular}{l|llll|llll|llll}
\toprule
Grid dim.  & \multicolumn{4}{c}{3} & \multicolumn{4}{c}{4} & \multicolumn{4}{c}{5} \\
\midrule
Horizon     & 1 & 2 & 3 & 4         & 1 & 2 & 3 & 4         & 1 & 2 & 3 & 4 \\
\midrule
$\dappl$    & \textbf{0.30} & \textbf{0.31} & \textbf{0.52} & \textbf{19.36} & \textbf{0.29} & \textbf{1.12} & \textbf{811.38} & \textbf{24511.99} & \textbf{0.30} & \textbf{0.81} & 21012.71 & \tiny{TO} \\
\problog~2  & 2.07 & 6.17 & 839.36 & 3904.34 & 2.95 & 41.51 & 7988.60 & \tiny{TO} & 2.92 & 176.95 & \tiny{TO} & \tiny{TO} \\
2AMC        & 0.49 & 5.20 & 61.96 & 183.10 & 0.88 & 50.36 & 12009.15 & 82836.61 & 1.40 & 41.03 & 24596.71 & \tiny{TO}\\
\bottomrule
\end{tabular}

\label{table : mdp}
\end{table}
}

Next we evaluate \dappl{}'s scalability on a the \emph{grid world} task, a
standard example commonly used to introduce Markov decision processes
(MDPs)~\citep{russell2016artificial}. The grid world task is defined as follows:
\emph{  A robot is in an $n \times n$ grid and starts at location $(0, 0)$. Some grid
  cells are \emph{traps}: if the robot enters these, it receives 0 utility and can
  no longer move, ending the simulation.
  Some grid cells are \emph{obstacles}: the robot cannot pass through these.
  One grid cell is a \emph{goal}: if the robot enters this cell, it receives a
  fixed positive utility. On each time step, the robot picks a direction (up,
  down, left, or right) to make a move. There is some probability that
  this move goes wrong: with probability $p$, the robot will accidentally move in a
  random wrong direction.}

\Cref{table : mdp} shows the results that compare \dappl{}, \problog{}~2, and
2AMC~\citep{derkinderen2020algebraic} on encodings of this example: \dappl{}
significantly outperforms these existing PPL-based approaches.

An alternative approach to solving the grid world example is to explicitly model
the problem as an MDP and solve for the optimal policy using a specialized MDP
solution method such as value iteration or policy
iteration~\citep{sutton2018reinforcement}. These MDP-specific approaches scale
much better than PPL-based approaches on this example: using value iteration,
the optimal policy can be solved on these small-scale MDPs in only a few
iterations, taking microseconds~\citep[Ch.  17]{russell2016artificial}.
However, like all inference strategies, MDP-specific solution methods have
tradeoffs that make them better for some problem instances and worse for others.
Value iteration and policy iteration excel at long-horizon low-dimensional
problems like the grid-world problem. For example, during value iteration, it is
only necessary to keep track of the expected utility of $n \times n$ states for
the grid world; this is quite feasibly represented as a matrix. However, MDPs struggle with
high-dimensional short-horizon decision-making
problems like those encoded by large Bayesian networks~\citep{holtzen2021model}:
in these problems, it is difficult for MDPs to efficiently reason about the
high-dimensional probability distribution on many random variables.
Additionally, the branch-and-bound approach is guaranteed to compute an
\emph{exact} optimal policy, while MDP solution strategies are not guaranteed to
produce the optimal policy unless they are run to a fixpoint, which can take an
unbounded number of iterations.

It is possible to use PPLs that do not support first-class decision-making as
part of an inner-loop for an MDP solving algorithm: this strategy is showcased
by WebPPL, where a probabilistic program computing the expected
utility of a fixed policy can be then used as an inner-loop for policy
evaluation during policy iteration~\citep{agentmodeling,dippl}. Such
specialized MDP-solutions, this approach scales quite well on the grid world
examples, completing in milliseconds.  However, WebPPL struggles to perform
inference on Bayesian networks (see \citet[Fig. 10]{holtzen2020scaling}), and so
this strategy cannot scale to the high-dimensional decision-making problems
considered in \Cref{sec:bn-eval}.

\subsection{Empirical Evaluation of $\pineappl$}
\begin{figure*}[]
\centering
\begin{subfigure}[b]{0.25\linewidth}
 \centering
 \begin{tikzpicture}
  \begin{axis}[
  height=3.5cm,
  width=3.5cm,
  grid=major,
  ymode=log,
  xlabel={\# Solved Cancer},
  legend style={at={(0em,5em)},anchor=west},
  legend columns=-1,
  ylabel={Time (s)},
  xmin=0,
  ]
  \addplot[mark=none, thick, blue] table [x index={0}, y index={1}] {\CancerPNCactus};
  \addlegendentry{\pineappl{}}
  \addplot[mark=none, dashed, thick, red] table [x index={0}, y index={1}] {\CancerPLCactus};
  \addlegendentry{\problog{}}
  \end{axis}
  \end{tikzpicture}
\end{subfigure}
~~~
\begin{subfigure}[b]{0.25\linewidth}
 \centering
 \begin{tikzpicture}
  \begin{axis}[
  height=3.5cm,
  width=4cm,
  grid=major,
  ymode=log,
  xlabel={\# Solved Sachs},
  ymin=0.001,
  ]
  \addplot[mark=none, thick, blue] table [x index={0}, y index={1}] {\SachsPNCactus};
  \addplot[mark=none, dashed, thick, red] table [x index={0}, y index={1}] {\SachsPLCactus};
  \end{axis}
  \end{tikzpicture}
\end{subfigure}
~~~
\begin{subfigure}[b]{0.2\linewidth}
 \centering
 \begin{tikzpicture}
  \begin{axis}[
  height=3.5cm,
  width=3.5cm,
  grid=major,
  ymode=log,
  xlabel={\# Solved Insurance},
  ymin=0.001,
  ]
  \addplot[mark=none, thick, blue] table [x index={0}, y index={1}] {\InsurancePNCactus};
  \end{axis}
  \end{tikzpicture}
\end{subfigure}
~
\begin{subfigure}[b]{0.2\linewidth}
 \centering
 \begin{tikzpicture}
  \begin{axis}[
  height=3.5cm,
  width=3.5cm,
  grid=major,
  ymode=log,
  xlabel={\# Solved Alarm},
  ymin=0.001,
  ]
  \addplot[mark=none, thick, blue] table [x index={0}, y index={1}] {\AlarmPNCactus};
  \end{axis}
  \end{tikzpicture}
\end{subfigure}
\caption{Cactus plots visualizing the number of solved benchmarks for \problog{} and \pineappl{}.
Plots without \problog{} results indicate that \problog{} failed to complete a single MMAP query.
}
\label{fig:cactus}
\end{figure*}

Here we aim to establish that the BBIR
is an effective target for scalably solving MMAP. First we note that, when specialized
to the real semiring, our approach specializes to the
approach in \citet{huang2006solving} for solving MMAP for Bayesian networks:
hence, we focus our evaluation instead on comparing against existing PPL implementations of MMAP and do elide comparing against Bayesian network baselines.
We compared \pineappl{} against
\problog{}, which uses an
enumerative strategy to solve MMAP, much like MEU.
MMAP in \problog{} is not first-class and can only be performed once every
program run, thus there is no possibility for meta-optimization.
There is no standard set of probabilistic programming problems to
benchmark the performance of MMAP, let alone meta-optimization.
Thus, we introduce a simple, illustrative
selection of benchmarks based on discrete Bayesian networks and compiled these
networks into equivalent \pineappl{} and \problog{} programs. The ``Cancer'' and ``Sachs''
networks are small enough to run MMAP queries over the entire powerset of
possible variables. For the ``Alarm'' and ``Insurance'' networks, we selected 5
variables uniformly at random, and ran the powerset of possible queries over
those 5 variables.

Figure~\ref{fig:cactus} gives a cactus plot showing the
relative performance of these two MMAP inference algorithms on four selected
Bayesian networks. To our
knowledge, these are by far the largest probabilistic programs that exact MMAP
inference has been performed on. On two of the examples (Insurance and Alarm),
\problog{} failed to complete a single MMAP query within the time limit,
mirroring the results of~\cref{subsec:dappl-eval}.

\subsubsection{Scalability of Meta-Optimization}
\label{subsubsec:meta-optimization-eval}
\begin{figure}[t!]
  \begin{subfigure}{0.45\textwidth}
    \begin{pineapplcodeblock}[basicstyle=\tiny\ttfamily]
m = true;
loop n {
  if m {
    x = flip 0.5; y = flip 0.5;
    if x && y { z = flip 0.5; }
    else { z = flip 0.5; }
  } else {
    x = flip 0.5; y = flip 0.5;
    if !x && !y { z = flip 0.5;}
    else { z = flip 0.5; }
  }
  (m) = mmap(z);
}
pr(z)\end{pineapplcodeblock}
    \caption{Template for $\pineappl$ program to demonstrate scaling of MMAP calls.}
    \label{fig:mmap-scaling-program-pineappl}
  \end{subfigure}\hfill
  \begin{subfigure}{0.45\textwidth}
    \begin{tikzpicture}
      \begin{axis}[
        height=5cm,
        ymin=0,
        ymax=5,
        grid=major,
        width=6.5cm,
        xlabel={\# Nested MMAP Queries},
        ylabel={Time (s)},,
        legend columns=6,
        legend style={at={(4.3em,9.2em)},anchor=north, font=\footnotesize}
        ]
        \addplot[mark=none, thick, blue] table [x index={0}, y index={1}] {\Nested};
        \addlegendentry{\pineappl{}};
        \addplot[mark=none, thick, orange, dashed] table [x index={0}, y index={1}] {\Nestedfit};
        \addlegendentry{$O(x^2)$};
      \end{axis}
    \end{tikzpicture}
   \caption{Performance of nested MMAP (blue).
   The plot fits to $O(x^2)$ (orange) with $r^2 = 0.996$.}
   \label{fig:pineappl-nested}
  \end{subfigure}
  \caption{Evaluating the scalability of nested calls to MMAP in $\pineappl$
    programs}
\end{figure}

To demonstrate the utility of staged compilation of BBIR, we construct
$\pineappl$ programs with sequential nested calls to \pineapplcode{mmap}. In
particular, we instantiate line 2 of the program in
\cref{fig:mmap-scaling-program-pineappl} with values ranging from 2 to 140,
corresponding to the number of loop iterations.
Bounded
loops are a hygenic macro in $\pineappl$ that expand to their unrolling with
fresh names (the details of this expansion are described in~\citet{cho2025scaling}).
Since $\pineappl$ programs compile to circuits,
inference performance is not parameter sensitive, hence the use of
\pineapplcode{flip 0.5} for all randomness in the program.

Recall from the motivating example of~\cref{subsec:pineappl-overview} that
evaluating of each call to \pineapplcode{mmap} at the end of compilation
will cause exponential blowup in performance.
This is because we will need to compute and compare marginal probabilities over
Boolean formulae
exponential in the number of variables that we \pineapplcode{mmap} over.
However, BBIR allows for staged compilation, which is reflected in~\cref{fig:pineappl-compl},
which drastically reduces such blowup, as seen in~\cref{fig:pineappl-nested}.
With staged compilation, we face quadratic-time scaling in
the number of calls to \pineapplcode{mmap} despite the exponential blowup in assignment to variables,
as there are only a fixed number of variables defined
before each subsequent \pineapplcode{mmap} call.


\section{Related Work}\label{sec:related-work}

\subsubsection*{Languages for Optimization and Decision-Making.}~~There have been many
proposed languages for modeling decision-making and optimization from both the
artificial intelligence and programming languages communities.
Influence
diagrams~\citep{khaled2013solving,maua2016equivalences},
planning languages like PDDL and RDDL~\citep{sanner2010relational},
\dtproblog{}~\citep{van2010dtproblog}, and DT-Golog~\citep{boutilierdecision} give
a declarative or graphical description language for describing decision-making
scenarios. The typical approach to performing MEU on in this setting is
order-constrained variable elimination, which has the same worst-case complexity
as order-constrained knowledge compilation.  The problem of solving MEU has been
well-studied on influence diagrams, and branch-and-bound is a common approach in
this setting~\citep{yuan2012solving}; however, we believe our approach here is
the first to leverage knowledge compilation in conjunction with branch-and-bound
for solving MEU.  In the programming languages community, the problem of
designing languages for decision-making has been increasingly of interest and
sparked several recent languages and
systems~\citep{abadi2021smart,lago2022reinforcement}. These listed systems
support more sophisticated language features than \dappl{}, but no
implementation is provided for us to compare performance against.  There are a
number of existing approaches describing programs that model computations over semirings, such as
aProbLog~\citep{kimmig2011algebraic,kimmig2017algebraic} and weighted
programming~\citep{batz2022weighted}; these approaches do not aim to solve
semiring optimization problems such as what we propose here.

\subsubsection*{Knowledge Compilation for Optimization Problems.}
Broadly there are two main approaches within the literature for leveraging
knowledge compilation during optimization: branch-and-bound and
order-constrained approaches. The branch-and-bound approach was
originally proposed by \citet{huang2006solving} for solving the MMAP
problem in Bayesian networks.  Since then the approach has been refined and
improved, but remains the state-of-the-art approach for solving
MMAP on many problem
instances~\citep{choi2022solving,conaty2017approximation}.
\citet{kimmig2011algebraic} introduced AMC, and
\citet{kiesel2022efficient} introduced two-level AMC
to show how to solve MEU by combining the
expected utility and tropical semiring
for computing the MEU of \dtproblog{} programs, generalizing the work of~\citet{derkinderen2020algebraic}.
The primary limitation of two-level AMC is that it requires a fixed
variable order, which can lead to blowup, as we have seen in Section~\ref{sec:eval}.
Seen from this perspective, our branch-and-bound IR can be thought of as a
generalization of the branch-and-bound approach of \citet{huang2006solving}
to work over a much broader class of semirings than just the real semiring,
enabling it to be applied to problems such as MEU.

\subsubsection*{Meta-Reasoning in PPLs.}
Some PPLs today contain some support for forms of \emph{meta-inference}: the
ability to evaluate a marginal query while running a program.  Concretely,
languages with meta-inference typically include an \texttt{infer e} or
\texttt{normalize e} keyword that queries for the probability that a (closed)
program \texttt{e} evaluates to a particular value.
Examples include
\texttt{Church}~\citep{goodman2008church},
Anglican~\citep{tolpin2015probabilistic}, \texttt{Gen}~\cite{cusumano2019gen},
meta-\problog{}~\citep{mantadelis2011nesting},
\texttt{Venture}~\citep{mansinghka2014venture}, and
\texttt{Omega}~\citep{tavares2019random}.
The difference between MMAP and nested inference is that MMAP is finding the
optimal assignment to free variables.

It is possible to use meta-inference
to solve MMAP by enumerating over assignments to free variables,
and selecting the assignment that has the greatest marginal probability.
However, this runs into a clear state-space explosion challenge:
exhaustively enumerating the space of possible assignments during meta-inference
is infeasible for many of the examples we showed in our experiments (for
instance, the examples in our Bayesian network benchmarks query the MMAP state
of over 100 variables in some instances). Hence, for scalability reasons, we
argue that an MMAP query is an invaluable first-class citizen in addition to
meta-inference, and that staging is a useful framework for leveraging
compilation in order to scale. Anglican supports first-class MAP (maximum a posteori) inference, but does not
support MMAP queries~\cite{tolpin2015maximum}.

\subsubsection*{Probabilistic Model Checking and MDPs.} Probabilistic model checkers
such as Storm~\citep{dehnert2017storm} and PRISM~\citep{kwiatkowska2002prism} give
a specification language and query language for describing, solving, and
verifying Markov decision processes, and hence are capable of solving MEU
problems. These languages can scale quite well, and are especially useful for
verifying complex temporal queries. However, these systems require describing
the probabilistic system as an MDP, which can be very expensive; as shown in~\citet{holtzen2021model}, MDP-based
representations can scale poorly when compared with approaches that leverage
factorization on problem instances that exhibit independence structure.
Concretely, the Bayesian network examples given in our experiments would pose
significant scaling challenges to these systems, especially the large
hidden-Markov-model in Figure~\ref{fig : dr}. Additionally, MDPs do not
support first-class conditioning on evidence, which \dappl{} and many
other probabilistic programming languages support.


\section{Conclusion and Future Work}

We presented the BBIR, a new intermediate representation for optimization problems
over discrete probabilistic inference. The BBIR can represent important optimization
problems such as MEU and MMAP with additional features such as staged compilation,
conditioning, and reasoning beyond probabilities. The flexibility of the BBIR
was showcased through two very different programming languages: $\dappl$,
a function decision-theoretic PPL with support for Bayesian conditioning,
and $\pineappl$, an imperative PPL with first-class meta-optimization support
via MMAP.

Our efforts in this paper focused on designing a new scalable intermediate
representation to support a broad class of optimization problem; hence, we
simplified the design of our surface-level languages to simplify this
compilation.  In the future we aim to provide more expressive surface-level
languages that compile to BBIR. The most tractable would be to add support
for language features like top-level functions and dynamically-bounded
surely-terminating loops; languages like \dice{} and \texttt{ProbLog} support
these features.
Next, it would also be interesting to explore adding features to support applications in game theory,
such as multiple decision-making agents or stochastic policies,
as our current framework is limited to one decision-making agent and deterministic policies.
Finally, it would be interesting to explore
the extent to which we can provide more ergonomic and unified surface languages
for efficiently programming with decision-making, for instance by developing
efficient implementations of selection monad~\citep{abadi2021smart,lago2022reinforcement}.



\section*{Data Availability Statement}
The software that supports~\cref{sec:eval} is available on Zenodo~\citep{artifact}.

\section*{Acknowledgments}
We thank the anonymous reviewers for their helpful guidance.
This project was supported by the National Science Foundation under grant \#2220408.

\bibliography{meu.bib}

\appendix

\newpage
\section{Supplementary Material for Section \ref{sec:overview}}

\subsection{The $\AMC{}$ invariant}\label{appendix:amc invariant}

We can define the expected utility of a Boolean formula as an expectation:

\begin{definition}[Expected Utility of a Boolean Formula]\label{def:eu-of-formula}
  Let $\varphi$ be a Boolean formula that consists of reward variables
  $\mathcal{R} = \{R_i\}$ and probabilistic variables $\mathcal{P} = \{P_j\}$; 
  we will denote literals -- assignments to Boolean variables -- using lower-case letters.
  Assume we are given a \emph{utility map} $U$ that maps
  reward literals to a real-valued reward, and \emph{probability map} $\Pr$ that maps
  probabilistic literals to probabilities.
  Then, the \emph{probability of model} $\{r_i,p_j\}$
  is the product of probabilities of each probabilistic variable:
    $\Pr(\{r_i, p_j\}) \triangleq \prod_{j} \Pr(p_j).$
  The expected utility of $\varphi$ is the probability-weighted sum of
  utilities that satisfy the formula:
  \begin{align}
    \EU[\varphi] \triangleq \sum_{\{r_i, p_j\} \models \varphi} \Pr(\{r_i, p_j\}) \paren{\sum_{i} U(r_i)}.
  \end{align}
\end{definition}

The relation is as follows:

\begin{theorem}\label{thm:amc invariant}
  Let $\varphi$ be a Boolean formula consisting of probabilistic variables $P_i$
  and reward variables $R_i$, with utility map $U$ and probability map $\Pr$.
  Let $w: \lits(\varphi) \rightarrow \mathcal{S}$ be a weight function that maps
  literals to elements of the expectation semiring, where for probabilistic 
  variables we assign $w(P_i) = (\Pr(P_i), 0), w(\overline{P_i})$ and
  $w(\overline{P_i}) = (\Pr(\overline{P_i}), 0)$, and for reward variables we
  assign $w(R_i) = (1, U(R_i))$ and $w(\overline{R_i}) = (1, 0)$. Then
  $\AMC(\varphi, w)_{\EU} =  \EU[\varphi]$.
\end{theorem}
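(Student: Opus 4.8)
The plan is to reduce the claim to a single per-model identity and then sum over all models. Concretely, I would first show that for any model $m \models \varphi$, the semiring product $w(m) = \bigotimes_{\ell \in m} w(\ell)$ evaluates to the pair $(\Pr(m),\ \Pr(m) \cdot U(m))$, where $\Pr(m) = \prod_j \Pr(p_j)$ is the probability of $m$ and $U(m) = \sum_i U(r_i)$ is the total reward accrued by the reward literals of $m$ (recalling that $U$ of a negated reward literal is $0$). Once this local identity is in hand, the global statement follows almost immediately, since $\oplus$ is componentwise addition.

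The technical heart is establishing this per-model identity, which I would do by a short induction on the number of literals, using the multiplication rule $(p,u)\otimes(q,v) = (pq,\ pv+qu)$. The first thing to record is that $\otimes$ on the expectation semiring is \emph{commutative}, since $(p,u)\otimes(q,v) = (pq, pv+qu) = (qp, qu+pv) = (q,v)\otimes(p,u)$; this lets me freely reorder the product so that all probabilistic literals are grouped first and all reward literals last. The probabilistic literals, each weighted $(\Pr(p_j),0)$, multiply to $(\prod_j \Pr(p_j),\ 0) = (\Pr(m),0)$, using $(P,0)\otimes(Q,0) = (PQ,0)$. Then folding in the reward literals one at a time gives $(P,0)\otimes(1,u) = (P,\ Pu)$ and, inductively, $(P,\ P\cdot s)\otimes(1,u) = (P,\ P(s+u))$, so that after absorbing all $k$ reward literals the accumulator becomes $(\Pr(m),\ \Pr(m)\sum_i U(r_i)) = (\Pr(m),\ \Pr(m) U(m))$. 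The only place requiring care is tracking the cross-term $pv+qu$ in the multiplication, so that the utility coordinate accumulates as $\Pr(m)$ times a \emph{sum} of individual rewards rather than some entangled product; this is the step I expect to be the main obstacle, and it is precisely what the weight assignments $w(P_i)=(\Pr(P_i),0)$ and $w(R_i)=(1,U(R_i))$ are engineered to make work.

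Finally, I would sum over models to finish. Because $\bigoplus$ acts componentwise, $\AMC(\varphi,w) = \bigoplus_{m \models \varphi} (\Pr(m),\ \Pr(m) U(m)) = \big(\sum_{m \models \varphi} \Pr(m),\ \sum_{m \models \varphi} \Pr(m) U(m)\big)$. Reading off the utility (second) coordinate and unfolding the definitions of $\Pr(m)$ and $U(m)$ recovers exactly $\EU[\varphi] = \sum_{\{r_i,p_j\} \models \varphi} \Pr(\{r_i,p_j\})\big(\sum_i U(r_i)\big)$, which establishes $\AMC(\varphi,w)_{\EU} = \EU[\varphi]$ as claimed.
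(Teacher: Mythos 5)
Your proof is correct, and it follows the same high-level plan as the paper's: evaluate the semiring product literal-by-literal for each model, then exploit the fact that $\oplus$ is componentwise to sum over models. The difference lies in the key technical ingredient. The paper first proves a general Leibniz-style lemma for arbitrary elements, $\left[\bigotimes_i (p_i,v_i)\right]_{\EU} = \sum_i v_i \prod_{j\neq i} p_j$, together with additivity of the $\EU$ projection, and then reaches this form by rewriting $\EU[\varphi]$ as $\sum_{m\models\varphi}\prod_{\ell\in m} w(\ell)_{\Pr}\big(\sum_{\ell\in m} w(\ell)_{\EU}/w(\ell)_{\Pr}\big)$ --- a step that divides by probability components and thus implicitly assumes $w(\ell)_{\Pr}\neq 0$. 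Your argument instead computes the per-model product in closed form, $w(m) = (\Pr(m),\ \Pr(m)\,U(m))$, by using commutativity of $\otimes$ to group probabilistic literals before reward literals and then folding in rewards by induction; this exploits the specific weight structure ($\EU$-component zero for probabilistic literals, $\Pr$-component one for reward literals) rather than a general product formula. What each buys: the paper's lemma is reusable for arbitrary weights and is the natural statement if one wants the product rule in isolation; your route is more elementary, tracks both coordinates of the pair at once, and entirely avoids the division step, so it remains valid even when some literal has probability $0$ (e.g.\ a \texttt{flip 0} or \texttt{flip 1}). Both conclude identically by reading off the second coordinate of the componentwise sum.
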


The proof relies on the following lemmata, 
whose proofs are straightforward inductions:

\begin{lemma}\label{lemma:distrib.eu}
  Let $\{(p_i, v_i)\}\subseteq \mathcal S$. Then
  \begin{equation*}
    \left[\bigotimes_i (p_i, v_i)\right]_{\EU} = \sum_{i} v_i\paren{\prod_{j \neq i} p_j}.
  \end{equation*} 
\end{lemma}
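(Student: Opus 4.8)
The plan is to prove the identity by induction on the number $n$ of factors, exploiting the shape of the semiring multiplication $(p,u)\otimes(q,v) = (pq,\, pv+qu)$. The essential observation is that the two components of $\mathcal S$ are \emph{coupled} under $\otimes$: the utility (second) component of a product depends on \emph{both} components of its factors. Consequently one cannot induct on the $\EU$ component in isolation. I would therefore first record the auxiliary fact that the first (probability) component is multiplicative,
\[
  \left[\bigotimes_i (p_i, v_i)\right]_{\Pr} = \prod_i p_i,
\]
which is an immediate induction using that the first coordinate of $(p,u)\otimes(q,v)$ is exactly $pq$. This fact is what supplies the coefficient $\prod_{j\ne i} p_j$ appearing in the claimed formula.

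For the main induction, the base case $n=1$ is direct: the product is $(p_1,v_1)$, whose $\EU$ component is $v_1$, matching the right-hand side $v_1\paren{\prod_{j\ne 1} p_j} = v_1$ with the empty product. (If one prefers to begin at $n=0$, the empty product is $\mathbf 1 = (1,0)$, whose $\EU$ component is $0$, matching the empty sum.) For the inductive step, assuming the claim for $n$ factors, I would write $\bigotimes_{i=1}^{n+1}(p_i,v_i) = (P,V)\otimes(p_{n+1},v_{n+1})$ where $(P,V) = \bigotimes_{i=1}^{n}(p_i,v_i)$. By the auxiliary fact $P = \prod_{i\le n} p_i$, and by the induction hypothesis $V = \sum_{i\le n} v_i \prod_{j\le n,\ j\ne i} p_j$. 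The $\EU$ component of the full product is $P v_{n+1} + p_{n+1} V$; substituting both expressions and distributing $p_{n+1}$ across the sum yields $v_{n+1}\prod_{i\le n} p_i + \sum_{i\le n} v_i \prod_{j\le n+1,\ j\ne i} p_j$, which reindexes to $\sum_{i\le n+1} v_i \prod_{j\le n+1,\ j\ne i} p_j$, exactly the formula for $n+1$ factors.

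The only step demanding genuine care—and hence the main obstacle, such as it is—is the coupling: the recurrence $P v_{n+1} + p_{n+1} V$ forces the probability component to be tracked alongside the utility component, which is precisely why the auxiliary multiplicativity fact is not optional but load-bearing. Once it is in hand, the remainder is routine reindexing that folds the new factor $p_{n+1}$ into each product $\prod_{j\ne i} p_j$ and adjoins the new summand for $i = n+1$.
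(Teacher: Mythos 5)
Your proof is correct and takes exactly the approach the paper intends: the paper dispatches this lemma (together with Lemma~\ref{lemma:pull.eu}) by remarking that its proof is a ``straightforward induction,'' and your argument is precisely that induction spelled out, with the multiplicativity of the probability component serving as the natural auxiliary fact. Nothing is missing; the coupling issue you flag is handled correctly by the recurrence $[\,(P,V)\otimes(p_{n+1},v_{n+1})\,]_{\EU} = P v_{n+1} + p_{n+1} V$.
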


\begin{lemma}\label{lemma:pull.eu}
  Let $\{(p_i, u_i)\}\subseteq \mathcal S$. Then
  \begin{equation*}
    \left[\bigoplus_i (p_i, u_i)\right]_{\EU} = \sum_{i} (p_i, u_i)_{\EU}.
  \end{equation*} 
\end{lemma}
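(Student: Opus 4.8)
The plan is to exploit the fact that $(\cdot)_{\EU}$ is simply the projection onto the utility component of $\mathcal S$ --- i.e., the second coordinate, as fixed by Definition~\ref{def:expectation semiring} (``the second represents expected utility'') and as is consistent with the companion Lemma~\ref{lemma:distrib.eu} --- together with the fact that $\oplus$ on $\mathcal S$ is defined coordinatewise. Recall that $(p,u) \oplus (q,v) = (p+q, u+v)$; projecting onto the utility coordinate gives $\paren{(p,u)\oplus(q,v)}_{\EU} = u + v = (p,u)_{\EU} + (q,v)_{\EU}$. In other words, $(\cdot)_{\EU}$ is a monoid homomorphism from the additive monoid $(\mathcal S, \oplus, \mathbf 0)$ to $(\R, +, 0)$, since additionally $\mathbf 0 = (0,0)$ has $(\mathbf 0)_{\EU} = 0$. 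The lemma is exactly the statement that this homomorphism commutes with a finite iterated sum.

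First I would observe that, because $\oplus$ is a commutative monoid operation (Definition~\ref{def:semiring}), the iterated sum $\bigoplus_i (p_i, u_i)$ is well-defined independently of the order and bracketing in which the finitely many terms are combined, so it suffices to prove the claim for any one fixed bracketing. I would then proceed by induction on the number $n$ of terms. The base case $n = 0$ reads $(\mathbf 0)_{\EU} = 0$, which is the empty real sum and holds by definition of $\mathbf 0$; the case $n = 1$ is the identity $u_1 = u_1$. For the inductive step I would write $\bigoplus_{i \le n+1}(p_i,u_i) = \paren{\bigoplus_{i \le n}(p_i,u_i)} \oplus (p_{n+1}, u_{n+1})$, apply the homomorphism property established above to split the $\EU$ projection across the outermost $\oplus$, and then invoke the induction hypothesis on the first summand to conclude $\sum_{i \le n} (p_i,u_i)_{\EU} + (p_{n+1},u_{n+1})_{\EU} = \sum_{i \le n+1}(p_i,u_i)_{\EU}$.

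There is essentially no hard step here: the entire content is that coordinatewise addition makes the utility projection additive, so the proof reduces to a one-line homomorphism observation followed by a routine induction. The only point requiring a small amount of care is being explicit that the finite fold $\bigoplus_i$ is well-formed, which is guaranteed precisely because $\oplus$ is part of a commutative monoid structure on $\mathcal S$; associativity and commutativity are what license peeling off a single term in the inductive step. I would state this once and then carry out the induction without further comment.
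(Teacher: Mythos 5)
Your proof is correct and matches the paper's approach: the paper dispatches this lemma as a ``straightforward induction,'' and your argument --- noting that $(\cdot)_{\EU}$ is a monoid homomorphism from $(\mathcal S, \oplus, \mathbf 0)$ to $(\R, +, 0)$ because $\oplus$ is coordinatewise, then inducting on the number of terms --- is exactly that induction, spelled out. No gaps; your care about well-definedness of the finite fold via commutativity and associativity is a fine (if optional) touch.
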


Unfolding, we see that
  \begin{align*}
    \EU[(\varphi, w)]
      &=\sum_{m \models \varphi} \prod_{\ell \in m} w(\ell)_{\Pr} \paren{\sum_{\ell \in m} \frac{w(\ell)_{\EU}}{w(\ell)_{\Pr}}} = \sum_{m \models \varphi}  \sum_{\ell \in m}\paren{ \frac{w(\ell)_{\EU}}{w(\ell)_{\Pr}}\prod_{j \in m} w(j)_{\Pr}}\\
      &=\sum_{m \models \varphi} \sum_{\ell \in m} \paren{ w(\ell)_{\EU}\paren{\prod_{j \neq \ell} w(j)_{\Pr}}}\\
      &=\sum_{m \models \varphi} \left[\bigotimes_{\ell \in m} w(\ell)\right]_{\EU}&(\star)\\&=\left[\bigoplus_{m \models \varphi} \bigotimes_{\ell \in m} w(\ell)\right]_{\EU} = [\AMC_{\mathcal S}(\varphi, w)]_{\EU}&(\dagger)
  \end{align*}
  where $(\star)$ is the usage of Lemma \ref{lemma:distrib.eu} and $(\dagger)$ denotes usage of Lemma \ref{lemma:pull.eu}.
\subsection{Join-Sum is lower bounded by Sum-join}
\label{appendix:commute join}

\begin{lemma}\label{lemma:commute join}
  Let $\mathcal R$ be a lattice semiring with partial order $\sqsubseteq$.
  Let $f : X \times Y \to \mathcal R$ be a function with codomain 
  $\mathcal R$ with $X, Y$ finite sets. Then
  \begin{equation}
    \sqcup_{x \in X} \sum_{y \in Y} f(x,y) \sqsubseteq 
    \sum_{y \in Y} \sqcup_{x \in X} f(x,y).
  \end{equation}
\end{lemma}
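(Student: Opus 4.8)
The plan is to exploit the universal property of the join $\bigsqcup$ as a \emph{least} upper bound, together with the monotonicity of $\oplus$ with respect to $\sqsubseteq$ that is built into Definition~\ref{def:latticed semiring}. Writing $\sum$ for the iterated semiring addition, the left-hand side $\bigsqcup_{x \in X}\sum_{y \in Y} f(x,y)$ is by definition the least element of $\mathcal R$ lying above every member of the family $\{\sum_{y \in Y} f(x,y) : x \in X\}$. Hence it suffices to show that the right-hand side is merely \emph{an} upper bound for that family, i.e. that for each fixed $x_0 \in X$,
\begin{equation}
  \sum_{y \in Y} f(x_0, y) \;\sqsubseteq\; \sum_{y \in Y} \bigsqcup_{x \in X} f(x,y);
\end{equation}
the claimed inequality then follows immediately from leastness of the join on the left.

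To establish this inequality for each fixed $x_0$, I would first observe that for each individual $y \in Y$ the join $\bigsqcup_{x \in X} f(x,y)$ is by definition an upper bound of $\{f(x,y): x \in X\}$, so in particular $f(x_0, y) \sqsubseteq \bigsqcup_{x \in X} f(x,y)$. Summing these pointwise-in-$y$ inequalities then yields the desired bound, \emph{provided} we know that $\oplus$ preserves $\sqsubseteq$ across a whole finite sum and not merely across a single binary addition.

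The main (minor) obstacle is exactly this: Definition~\ref{def:latticed semiring} states compatibility of $\oplus$ with $\sqsubseteq$ only in binary form, namely that $a \sqsubseteq b$ and $c \sqsubseteq d$ imply $a \oplus c \sqsubseteq b \oplus d$. I would therefore lift this to the finite case by a short induction on $|Y|$: the base case $|Y| = 0$ requires $\mathbf 0 \sqsubseteq \mathbf 0$, which holds by reflexivity of the partial order, and the inductive step combines the binary monotonicity law with the inductive hypothesis, using associativity and commutativity of $\oplus$ (which make $\sum_{y \in Y}$ well defined independently of the summation order). Applying this lifted monotonicity to the family $\{f(x_0,y) \sqsubseteq \bigsqcup_{x \in X} f(x,y)\}_{y \in Y}$ closes the argument. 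If $X$ is allowed to be empty one must additionally posit a bottom element for the empty join; I would restrict to nonempty $X, Y$ or dispatch this degenerate case separately.
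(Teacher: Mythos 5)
Your proposal is correct and follows essentially the same route as the paper's proof: reduce the claim, via leastness of the join, to showing that $\sum_{y} \bigsqcup_{x} f(x,y)$ upper-bounds each $\sum_{y} f(x_0,y)$, and then derive this by summing the pointwise bounds $f(x_0,y) \sqsubseteq \bigsqcup_{x} f(x,y)$ using compatibility of $\oplus$ with $\sqsubseteq$. You are merely more explicit than the paper about lifting binary monotonicity to finite sums by induction and about the degenerate empty-set cases, which the paper's terse argument glosses over.
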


\begin{proof}
  It suffices to show that for all $x \in X$, $\sum_{y \in Y} f(x,y) \sqsubseteq
  \sum_{y \in Y} \sqcup_{x \in X} f(x,y).$ It suffices to show that for all
  $y \in Y$, $f(x,y) \sqsubseteq
  \sum_{y \in Y} \sqcup_{x \in X} f(x,y).$ This follows from the definition
  of join being a least comparable upper bound, and we are done.
\end{proof}

\newpage
\section{Supplementary material for Section \ref{sec:bbir}}
\subsection{Proof of Theorem~\ref{thm:ub correctness}}
\label{appendix:proof ub correctness}

\begin{proof}
  We induct on $|X| - |P|$. In the base case, $|X| = |P|$, so $P$ is
  a policy and we are done. For our inductive argument, consider
  $P$ a partial policy such that for all $P' \supsetneq P$,
  Equation~\ref{eq:ub correctness} holds.
  We want to show that $P$ still satisfies Equation~\ref{eq:ub correctness}
  for all of its completions.

  So fix $T$ a completion. Simplifying Equation~\ref{eq:ub correctness}
  we observe it suffices to show
  \begin{equation}
    h(\varphi|_P, w) \sqsupseteq h(\varphi|_T,w)\prod_{x \in P} w(x).
  \end{equation}

  Let $x$ be the variable chosen first by line 4 of Algorithm~\ref{algorithm:h}
  when computing $h(\varphi|_P, w)$.
  We know $x$ must be either a choice (i.e., lie in $T \setminus P$) or not; we case.
  \begin{itemize}
    \item If $x \in T \setminus P$, 
    then in particular $x \in T$; 
    we take the join as per line 5. We observe
    \begin{align*}
      h(\varphi|_P, w)
        &= w(x)h(\varphi|_{P}|_x) 
          \sqcup w(\overline x)h(\varphi|_{P}|_{\overline x},w) \\ 
        &\sqsupseteq w(x)h(\varphi|_T,w)\prod_{y \in P\setminus \{x\}}w(y) 
          \sqcup w(\overline x)h(\varphi|_{P}|_{\overline x},w)&\text{(IH)}\\  
        &\sqsupseteq w(x)h(\varphi|_T,w)\prod_{y \in P\setminus \{x\}}w(x) 
        = \boxed{h(\varphi|_T,w)\prod_{y \in P}w(x),}    
    \end{align*}
    where (IH) can be used as $\varphi|_{P}|_x = \varphi|_{P \cup \{x\}}$,
    and $P \cup \{x\}$ is still a partial policy. The case for 
    $\varphi|_P|_{\overline x}$ is identical.
    \item If $x \notin T\setminus P$, 
    we take the sum as per line 6. 
    We continue recursing until we hit a variable $x' \in T \setminus P$; 
    then we reduce to case 1 and we are done.
  \end{itemize}
\end{proof}

\noindent
\textit{Remark.} 
It is important to note that the above theorem 
cannot be generalized to give a relation between 
any two partial policies $P \subsetneq P'$. 
This is because that the first inductive case 
crucially relies on the fact that 
we are not applying the IH twice. 
Indeed, in general,
\begin{align*}
   w(x)h(\varphi|_T,w)\prod_{y \in P\setminus \{x\}}w(y) 
    \sqcup w(\overline x)h(\varphi|_T,w)\prod_{y \in P\setminus \{\overline x\}} w(y) 
  \\\quad 
   \not\sqsupseteq 
  \prod_{y \in P } w(y) \paren{w(x)h(\varphi|_T,w) \sqcup w(\overline x) h(\varphi|_T,w)}
\end{align*}
when $x \notin P'$; this is a manifestation of the more general phenomena that
$$a(b \sqcup c) \not \sqsubseteq ab \sqcup ac.$$
\subsection{$\mathsf{UB}_f$ for MEU and MMAP}
\label{appendix:ub_f}

$\mathsf{UB}_f$ for MEU is given in Algorithm~\ref{algorithm:ub_f meu}.
For notational simplicity, instead of using the BBIR 
$(\{\varphi \land \gamma_{\pi} : \pi \in \mathcal A\}, A, w)$, 
we will use
the tuple $(\{\varphi, \gamma\}, A, w)$ 
in which $\gamma$ is the formula in which
for all $\pi \in \mathcal A$, $\gamma|_{\pi} = \gamma_{\pi}$. 

We define 
scalar division for $\mathcal S$:
\begin{equation}
  \frac{(a,b)}{r} = \begin{cases}
    \paren{\frac a r , \frac b r} & r \neq 0, \\
    (0, -\infty) & r = 0.
  \end{cases}
\end{equation}
We note that, if utilities are all nonnegative, 
then Lines 3-4 are not needed, and we can instead let the returned value in
Line 5 be $t/\ell$; this follows from 
eliminating the casework done to 
prove Theorem~\ref{appendix:reduction soundness}.
Indeed, in the implementation, this is what happens.

\begin{figure}
  \begin{mdframed}{\footnotesize\begin{algorithmic}[1]
    \Procedure{$bb$}{$(\{\varphi, \gamma\}, A, w), P_{curr}, \ell$}
    \State $t = ub((\{\varphi|_{\ell}, \gamma|_{\ell}\}, A, w),
        (\varphi \land \gamma)|_{\ell} , P_{curr} \cup \{\ell\})$
        \State $x = lb((\{\varphi|_{\ell}, \gamma|_{\ell}\}, A, w),
        \gamma|_{\ell} , P_{curr} \cup \{\ell\})$
        \State $y = ub((\{\varphi|_{\ell}, \gamma|_{\ell}\}, A, w),
        \gamma|_{\ell} , P_{curr} \cup \{\ell\})$
        \State \textbf{return } $\sqcup (t/x_{\Pr}, t/y_{\Pr})$
  \EndProcedure
  \end{algorithmic}}\end{mdframed}
  \caption{$\mathsf{UB_f}$ for MEU.}
  \label{algorithm:ub_f meu}
\end{figure}

The $\mathsf{UB}_f$ for MMAP is omitted as it is known~\citep{huang2006solving} that

\begin{align*}
  \sum_{v \in inst(V)}
  \Pr[\{m \cup v \cup e \models \varphi\} | \{e \models \gamma\}]
  = \frac{\AMC_{\R} (\varphi |_m \land \gamma|_m)}{ \AMC_{\R}(\gamma|_m)}
\end{align*}

where $\AMC_{\R}$ is the algebraic model count taken over the reals.
\subsection{Proof of Theorem~\ref{thm:soundness of bb}}
\label{appendix:soundness of bb proof}

For MEU, we first prove a Lemma.

\begin{lemma}\label{lemma:soundness helper}
  Let $(\{\varphi, \gamma\}, A, w)$ be a BBIR for MEU,
  and let $P$ be a partial policy over $A$. Then 
  let $(a,b) = ub((\{\varphi, \gamma\}, A, w), \varphi \land \gamma, P)$,
  $\ell = lb((\{\varphi, \gamma\}, A, w), \gamma, P)_{\Pr}$, and
  $u = ub((\{\varphi, \gamma\}, A, w), \gamma, P)_{\Pr}$.
  Then if $\ell, u \neq 0$, for all total extensions $T \supsetneq P$ 
  we have
  \begin{equation}
    \frac{\AMC(\varphi\land \gamma|_T, w)}{\AMC(\gamma|_T, w)_{\Pr}}
    \sqsubseteq \paren{\frac a \ell, \frac b \ell}\sqcup \paren{\frac a u, \frac b u}.
  \end{equation}
\end{lemma}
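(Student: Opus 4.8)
The plan is to instantiate the correctness guarantees of $ub$ and $lb$ (Theorem~\ref{thm:ub correctness} together with its lower-bound dual obtained by replacing $\sqcup$ with $\sqcap$) at the two formulae $\varphi \land \gamma$ and $\gamma$, and then reduce the asserted $\sqsubseteq$-inequality in $\mathcal S$ to two scalar inequalities, one per coordinate. The first thing I would record is that every variable of $A$ is a branch (choice) variable, which under the MEU weighting carries the multiplicative unit $\mathbf 1 = (1,0)$ as its weight on both literals; hence for any total extension $T$ the trailing product $\bigotimes_{j \in T} w(j)$ appearing on the right-hand side of Theorem~\ref{thm:ub correctness} collapses to $\mathbf 1$. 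Writing $N \triangleq \AMC((\varphi \land \gamma)|_T, w)$ and $D \triangleq \AMC(\gamma|_T, w)_{\Pr}$, the upper-bound theorem applied to $\varphi \land \gamma$ then gives $N \sqsubseteq (a,b)$, while the upper bound $u$ and lower bound $\ell$ applied to $\gamma$ give $\ell \leq D \leq u$.

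Next I would dispatch well-definedness of the division. Since $\ell \neq 0$ and $\ell$ is a probability, $\ell > 0$, so $D \geq \ell > 0$; the scalar division $N/D$ is therefore the genuine componentwise division $(N_{\Pr}/D,\, N_{\EU}/D)$, and likewise $(a/\ell, b/\ell)$ and $(a/u, b/u)$ are honest divisions. Because $\sqsubseteq$ and $\sqcup$ both act coordinatewise on $\mathcal S$, the goal splits into the two scalar claims $N_{\Pr}/D \leq \max(a/\ell,\, a/u)$ and $N_{\EU}/D \leq \max(b/\ell,\, b/u)$.

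The probability coordinate is routine. From $N \sqsubseteq (a,b)$ we have $N_{\Pr} \leq a$, and both quantities are nonnegative; combined with $0 < \ell \leq D$ this yields $N_{\Pr}/D \leq a/D \leq a/\ell \leq \max(a/\ell,\, a/u)$, using $D \geq \ell$ in the middle step.

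The utility coordinate is the main obstacle, precisely because $b$ (an upper bound on \emph{expected utility}) may be negative, so the direction of the inequality under division by $D$ depends on the sign of $b$. From $N_{\EU} \leq b$ and $D > 0$ we get $N_{\EU}/D \leq b/D$, and it remains to bound $b/D$ by $\max(b/\ell,\, b/u)$ using $\ell \leq D \leq u$. I would split on the sign of $b$: if $b \geq 0$ then $1/D \leq 1/\ell$ gives $b/D \leq b/\ell$, whereas if $b < 0$ then $1/D \geq 1/u$ gives $b/D \leq b/u$; in either case $b/D \leq \max(b/\ell,\, b/u)$, completing the proof. This casework is exactly why the bound is phrased as the join of the two divisions by $\ell$ and by $u$ rather than a single division: the join absorbs the sign ambiguity in $b$, and it is the same casework that Appendix~\ref{appendix:ub_f} alludes to when it observes that the two lower-bound lines can be dropped whenever all utilities are nonnegative.
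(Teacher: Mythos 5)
Your proof is correct and takes essentially the same route as the paper's: apply Theorem~\ref{thm:ub correctness} and its $lb$ dual to $\varphi \land \gamma$ and $\gamma$, split the $\sqsubseteq$-goal coordinatewise, handle the probability coordinate via $0 < \ell \leq D$, and dispatch the utility coordinate by casework on the sign of $b$. Your explicit observation that the trailing factor $\bigotimes_{j \in T} w(j)$ from Theorem~\ref{thm:ub correctness} collapses to $\mathbf{1}$ (because branch variables carry unit weight) makes precise a step the paper's proof leaves implicit.
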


\begin{proof}
  The proof follows from applications of
  Theorem~\ref{thm:ub correctness} and its dual for $lb$.
  Let $(u,v) = \AMC(\varphi\land \gamma|_T, w)$ and $k=\AMC(\gamma|_T, w)_{\Pr}$.
  We have that 
  \begin{equation}\label{eq:lemma soundness eq 1}
    (u,v) = \AMC((\varphi\land \gamma)|_T, w) \sqsubseteq
    ub((\{\varphi, \gamma\}, A, w), \varphi \land \gamma, P)
  \end{equation}
  and 
  \begin{equation}\label{eq:lemma soundness eq 2}
    lb((\{\varphi, \gamma\}, A, w), \gamma, P)
    \sqsubseteq (k, \_) = \AMC(\gamma|_T, w)
    \sqsubseteq ub((\{\varphi, \gamma\}, A, w), \gamma, P).
  \end{equation}
  From Equation~\ref{eq:lemma soundness eq 2}
  we know that $\ell \leq \AMC(\gamma|_T, w)_{\Pr} \leq u$.
  We know that $\ell$ and $u$ are within $(0,1]$ as they are 
  computing probabilities~\citep{darwiche2002knowledge}.

  We want to show that $u/k \leq \max(a/\ell, a/u)$ and 
  $v/k \leq \max(b/\ell, b/u)$. The former follows from
  Equation~\ref{eq:lemma soundness eq 2}. The latter requires casework on $b$:

  \begin{itemize}
    \item If $b$ is nonnegative, then $v/k \leq b/\ell$ and we are done,
    \item if $b$ is negative, then $k \leq u$. Then  $1/k \geq 1/u$
    and thus $b/k \leq b/u$. Then since $v \leq b$ by Equation~\ref{eq:lemma soundness eq 1}
    we have $v/k \leq b/u$ as desired.
  \end{itemize}

\end{proof}

\begin{proof}[Proof of Theorem~\ref{thm:soundness of bb}]
  It suffices to prove that the optimal policy is never pruned. 
  That is, let $T_{\MEU}$ be the total model witnessing 
  $\mathrm{MEU}((\{\varphi, \gamma\}, A, w))$. 
  We claim that $T_{\MEU} = P_{curr}$ at some recursive call of $bb$. 

  Suppose $T_{\MEU}$ is pruned on Line~\ref{line:prune}. Then, 
  there exists a partial policy $P' \subset T_{MSP}$ 
  such that 
  the join $m$ as calculated on Line~\ref{line:join} has
  $m \sqsubseteq b$ for some $b$. 

  By Lemma~\ref{lemma:ub on total policy is amc},
  $b = \AMC(\varphi|_T,w)$ for some total model $T$. 
  Then we have that:
  \begin{align*}
    \AMC( \varphi|_{T_{MSP}},w) 
      &\sqsubseteq m & \text{(by Lemma~\ref{lemma:soundness helper})} \\ 
      & \sqsubseteq \AMC( \varphi|_T , w) & \text{by assumption}.
  \end{align*}
  By compatibility we have that
  \begin{equation*}
    \AMC(\varphi|_{T_{MSP}}) \leq \AMC( \varphi|_T,w).
  \end{equation*}
  If the two sides are equal that means that 
  $MSP(\varphi)$ has multiple witnesses, 
  thus the branch was never pruned. 
  Otherwise, $b > MSP(\varphi)$, which is a contradiction.
\end{proof}

For MMAP, we defer the proof of correctness to~\citet{huang2006solving}.

\newpage
\section{Supplementary material for Section \ref{sec:dappl}}

\subsection{A \dappl{} typesystem}
\label{appendix:typesystem}
The type system of \dappl{} has the types
$\Bool$, $\Giry{\Bool}$, and $\Choice{\alpha_1, \cdots, \alpha_n}$,
for nonempty sets of nonconflicting names
$\{\alpha_1, \cdots, \alpha_n\}$.
The typing is with respect to
a standard context $\Gamma := \cdot \ \| \ x: \tau, \Gamma$.

\begin{figure}[H]
\begin{mdframed}
{\centering
  \begin{align*}
    \infer[\texttt{tp/T}]
      {\cdot \proves \tt : \Bool}
      {}
    \qquad
    \infer[\texttt{tp/F}]
      {\cdot \proves \ff : \Bool}
      {}
    \qquad
    \infer[\texttt{tp/var}]
      {\Gamma \proves x : \tau}
      {x : \tau  \in \Gamma}
  \end{align*}
  \begin{align*}
    \infer[\texttt{tp/and}]
      {\Gamma \proves P_1  \land P_2 : \Bool}
      {\Gamma \proves P_1  : \Bool
      &
      \Gamma \proves P_2  : \Bool}
    \qquad
    \infer[\texttt{tp/or}]
      {\Gamma \proves P_1  \lor P_2 : \Bool}
      {\Gamma \proves P_1  : \Bool
      &
      \Gamma \proves P_2  : \Bool}
  \end{align*}
  \begin{align*}
    \infer[\texttt{tp/neg}]
      {\Gamma \proves \neg P: \Bool}
      {\Gamma \proves P : \Bool}
    \qquad
    \infer[\texttt{tp/ret}]
      {\Gamma \proves \return P : \Giry \Bool}
      {\Gamma \proves P : \Bool}
  \end{align*}
    \begin{align*}
      \infer[\texttt{tp/flip}]
        {\cdot \proves \flip \theta : \Giry \Bool}
        {}
      \qquad
      \infer[\texttt{tp/reward}]
        {\Gamma \proves \reward k e: \tau}
        {\Gamma \proves e : \tau}
    \end{align*}
    \begin{align*}
      \infer[\texttt{tp/ITE}]
        {\Gamma \proves \ite e {e_T} {e_E} : \tau}
        {\Gamma \proves e : \Bool
        & \Gamma \proves e_T : \tau
        & \Gamma \proves e_E : \tau}
    \end{align*}
    \begin{align*}
      \infer[\texttt{tp/<-/G}]
        {\Gamma \proves \bind x e {e'} : \tau}
        {\Gamma \proves e : \Giry \Bool
        & x : \Bool, \Gamma \proves e' : \tau}
    \end{align*}
    \begin{align*}
      \infer[\texttt{tp/<-/Choice}]
        {\Gamma \proves \bind x e {e'} : \tau}
        {\Gamma \proves e : \Choice{\alpha_1, \cdots, \alpha_n}
        & x : \Choice{\alpha_1, \cdots, \alpha_n}, \Gamma \proves e' : \tau}
    \end{align*}
    \begin{align*}
      \infer[\texttt{tp/obsT}]
        {\Gamma \proves \observe P e : \tau}
        {\Gamma \proves P : \Bool
        & \Gamma \proves e : \tau}
    \end{align*}
    \begin{align*}
    \infer[\texttt{tp/[]}]
      {\cdot \proves [\alpha_1, \cdots, \alpha_n] :
        \Choice{\alpha_1, \cdots, \alpha_n}}
      {}
    \end{align*}
    \begin{align*}
    \infer[\texttt{tp/choosewith}]
      {\Gamma \proves \choose{x}{\alpha_i \implies e_i} : \tau}
      {\Gamma \proves x : \Choice{\alpha_1,\cdots, \alpha_n}
      &\forall i \in [n]. \ \Gamma \proves e_i : \tau}
    \end{align*}}
\end{mdframed}
\caption{Typing rules of \dappl{}.
The typing rules of \util{} are all of the above
rules except for \texttt{tp/[]}, \texttt{tp/choosewith}, and \texttt{tp/<-Choice}.
}
\label{fig:dappl typing}

We prove a Lemma:

\begin{lemma}
  Let $\Gamma \proves e : \tau$ a $\util$ expression. Then $\tau$ must be of type $\Giry \Bool$.
\end{lemma}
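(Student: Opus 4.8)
The plan is to proceed by structural induction on the $\util$ expression $e$, equivalently on the derivation of $\Gamma \proves e : \tau$. The key preliminary observation is that the $\util$ grammar of Figure~\ref{fig: util syntax} stratifies terms into atomic expressions, logical expressions $P$, and (effectful) expressions $e$, and the lemma concerns only the last of these. In particular a bare variable is an atomic expression, not an $e$-expression, so the only typing rules that can conclude a judgment $\Gamma \proves e : \tau$ with $e$ ranging over the $e$-grammar are \texttt{tp/ret}, \texttt{tp/flip}, \texttt{tp/reward}, \texttt{tp/ITE}, \texttt{tp/<-/G}, and \texttt{tp/obsT}; the choice-specific rules \texttt{tp/[]}, \texttt{tp/choosewith}, and \texttt{tp/<-/Choice} are by hypothesis absent from $\util$. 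Since each $e$-constructor is associated with a unique applicable rule, I can simply invert the derivation on the head constructor of $e$.

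For the base cases $e = \return P$ and $e = \flip \theta$, the rules \texttt{tp/ret} and \texttt{tp/flip} directly conclude $\tau = \Giry \Bool$, so there is nothing to prove. For the inductive cases $\reward k e'$, $\ite{e_g}{e_T}{e_E}$, and $\observe P e'$, each governing rule (\texttt{tp/reward}, \texttt{tp/ITE}, \texttt{tp/obsT}) propagates the type of an $e$-subexpression unchanged to the conclusion: the reward passes along the type of $e'$, the conditional gives both branches and the whole term a common type $\tau$, and the observe passes along the type of its continuation $e'$. In each case the relevant subexpression is itself an $e$-expression, so the induction hypothesis forces its type to be $\Giry \Bool$, and hence $\tau = \Giry \Bool$. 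The guard of the conditional and the observed predicate are logical expressions of type $\Bool$ and play no role in the conclusion.

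The one case that requires genuine care, and the crux of the argument, is the bind $\bind x {e_1}{e_2}$. In full \dappl{} there are two binding rules: \texttt{tp/<-/G}, which requires $e_1 : \Giry \Bool$ and binds $x : \Bool$, and \texttt{tp/<-/Choice}, which instead requires $e_1 : \Choice{\alpha_1,\cdots,\alpha_n}$. If the latter were available, a $\Choice$ type could in principle be threaded through a term, and the conclusion would fail. The point is that \texttt{tp/<-/Choice} is explicitly excluded from the $\util$ fragment, so the only derivation of a bind judgment uses \texttt{tp/<-/G}, under which the result type $\tau$ is exactly the type of the continuation $e_2$. Since $e_2$ is an $e$-expression, the induction hypothesis again yields $\tau = \Giry \Bool$, completing the induction. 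I expect this bind case to be the only place where one must appeal to the specific list of rules omitted from $\util$; the remaining cases are purely mechanical rule inversions.
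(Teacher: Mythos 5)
Your proof is correct and takes essentially the same approach as the paper, whose entire proof is the single line ``Induction on the typing rules.'' You have simply made explicit the details the paper leaves implicit --- in particular the rule inversion per constructor and the observation that excluding \texttt{tp/<-/Choice} is what makes the bind case go through.
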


\begin{proof}
  Induction on the typing rules.
\end{proof}
\end{figure}

\subsection{Denotational semantics of $\util$}\label{appendix:util semantics}

We specify a Lemma:

\begin{lemma}\label{lemma:util context}
  Let $\Gamma \proves e : \tau$ be a $\util$ expression via the typing rules 
  of~\cref{fig:dappl typing}. Then $\Gamma$ can only be a list of variables of
  type $\Bool$.
\end{lemma}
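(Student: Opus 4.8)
The plan is to proceed by induction on the typing derivation, but strengthened so that the inductive invariant is stable under the context growth that binders cause. Concretely, I would prove the following conditional generalization: \emph{if} $\Gamma$ is a list of variables all of type $\Bool$ and $\Gamma \proves e : \tau$ is derivable using only the $\util$ rules (that is, every rule of Figure~\ref{fig:dappl typing} except \texttt{tp/[]}, \texttt{tp/choosewith}, and \texttt{tp/<-/Choice}), then every context occurring anywhere in that derivation is again a list of $\Bool$-typed variables. The lemma as stated then follows by taking $\Gamma = \cdot$, since a $\util$ program is closed.

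The observation driving the induction is that, among the $\util$ rules, the context in each premise coincides with the context in the conclusion for every rule except \texttt{tp/<-/G}, while the axioms \texttt{tp/T}, \texttt{tp/F}, and \texttt{tp/flip} are stated with the empty context. The only rule that enlarges the context is \texttt{tp/<-/G}, whose right premise is typed under $x : \Bool, \Gamma$; since it appends a variable of type $\Bool$, the invariant ``$\Gamma$ is a $\Bool$-list'' is preserved, so the induction hypothesis applies to that subderivation. I would then discharge the remaining cases mechanically: the logical formers \texttt{tp/and}, \texttt{tp/or}, \texttt{tp/neg}, the lift \texttt{tp/ret}, the effectful \texttt{tp/reward} and \texttt{tp/obsT}, and the branch \texttt{tp/ITE} all pass $\Gamma$ unchanged to their premises, and \texttt{tp/var} adds nothing to the context, so the hypothesis transfers directly to each subderivation.

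The crucial structural point — and where the restriction to $\util$ does the work — is that the three excluded rules are precisely those that could break the invariant: \texttt{tp/<-/Choice} is the only binder that would place a variable of type $\Choice{\cdots}$ into the context, while \texttt{tp/[]} and \texttt{tp/choosewith} are the only rules mentioning choice types at all. Removing them guarantees that no non-$\Bool$ variable is ever bound, which is exactly why a $\util$ context can only be a $\Bool$-list.

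I expect the main (and only mild) obstacle to be the reading of the statement around \texttt{tp/var}: because \texttt{tp/var} permits an arbitrary context at a leaf, the claim is genuinely true only for derivations descending from a closed program, so the $\Bool$-list property must be \emph{threaded from the root} rather than asserted pointwise. Phrasing the induction with the conditional hypothesis ``if the root context $\Gamma$ is a $\Bool$-list'' — anchored at the empty context of the top-level $\util$ program — handles \texttt{tp/var} cleanly, since there the context is just the (assumed $\Bool$-list) $\Gamma$. Everything else is routine case analysis on the last rule applied.
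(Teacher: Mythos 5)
Your proof is correct and is essentially the paper's own argument: the paper's proof consists of the single line ``induction on the typing rules of \util{},'' and your case analysis --- every \util{} rule passes $\Gamma$ unchanged to its premises except \texttt{tp/<-/G}, which only ever extends the context with a $\Bool$-typed variable, and the excluded choice rules are the only ones that could introduce a $\Choice{\cdots}$ binding --- is precisely what that induction amounts to. Your additional care in threading the invariant from the (empty) root context of a closed program so that the \texttt{tp/var} leaf is handled soundly is a rigor refinement the paper leaves implicit, not a different approach.
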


\begin{proof}
  Proof is by induction on the typing rules of util.
\end{proof}

We define the a distribution $\mathcal D((\Bool \times \R) \cup \{\bot\})$
as a function $(\Bool \times \R) + \{\bot\} \to \R$, although we use the notation
$\{v_1 \mapsto p_1, \cdots, v_n \mapsto p_n\}$ 
for explicit values $v_i \in (\Bool \times \R) + \{\bot\}$ mapping to probabilities $p_i$
when it is more convenient, with 
the implicit assumption that any value not present has probability zero.

We use the shorthands $\mathbf{TT} = \{(\tt,0) \mapsto 1\}$, $\mathbf{FF} = \{(\ff,0) \mapsto 1\}$, and
$\pmb{\bot} = \{\bot \mapsto 1\}$.

By~\cref{lemma:util context}, we can say that the denotation for $\Gamma$, $\denote{\Gamma}$, 
are maps from free variables of $e$ to either $\mathbf{TT}$ or $\mathbf{FF}$.
Thus expressions $\Gamma \proves e : \Giry \Bool$ can be denoted as
functions $\denote{e} : \denote{\Gamma} \to \mathcal D((\Bool \times \R) \cup \{\bot\})$

The symbol \fishbone is the monadic bind operation for probability distributions with
finite support. The interpretation of logical operations over pure expressions are 
defined to be the operation lifted to probability distributions.

\begin{align*}
  \denote{x} &= \lambda g . \ g(x)  \\
  \denote{\tt} &= \lambda g . \ \mathbf{TT} \\
  \denote{\ff} &= \lambda g . \ \mathbf{FF} \\
  \denote{\return e} &= \lambda g . \denote{e} g\ \\
  \denote{\flip \theta} &= \lambda g . \ \{(\tt,0) \mapsto \theta , (\ff,0) \mapsto (1 - \theta)\} \\
  \denote{\reward k e} &= \lambda g . \ 
    \lambda v. \ 
    \begin{cases}
      \denote{e}(g)(b, r-k) & v = (b, r) \\
      \denote{e}(g)(v)  & \text{else}
    \end{cases}  \\
  \denote{\ite x {e_1} {e_2}} 
    &= \lambda g . \ \begin{cases}
      \denote{e_1}g & g(x) = \mathbf{TT} \\
      \denote{e_1}g & g(x) = \mathbf{FF} \\
      {\color{red}\texttt{abort}} & \text{else}
    \end{cases} \\
  \denote{\observe x e} 
    &= \lambda g . \ \begin{cases}
      \denote{e}g & g(x) = \mathbf{TT} \\
      \pmb{\bot} & g(x) = \mathbf{FF} \\
      {\color{red}\texttt{abort}} & \text{else}
    \end{cases} \\
  \denote{\bind x e {e'}}
    &= \lambda g . \ \denote{e} g  \ \fishbone \
      \lambda x. \ \begin{cases}
        \lambda y . \ 
          \begin{cases}
            \denote{e'} (g \cup \{x \mapsto \mathbf{TT}\}) (b,(s-r)) & y = (b,s)\\
            \denote{e'} (g \cup \{x \mapsto \mathbf{TT}\}) y     & \text{else}
          \end{cases} & x = (\tt, r) \\
          \lambda y . \ 
          \begin{cases}
            \denote{e'} (g \cup \{x \mapsto \mathbf{FF}\}) (b,(s-r)) & y = (b,s)\\
            \denote{e'} (g \cup \{x \mapsto \mathbf{FFS}\}) y     & \text{else}
          \end{cases} & x = (\ff, r) \\
        \pmb{\bot} & x = \bot 
      \end{cases}
\end{align*}
\subsection{Soundness of reduction from \dappl{} to \util{}}
\label{appendix:reduction soundness}

The transformation of \dappl{} to \util{} programs are given as
equational rules. To set up, let $\Gamma \proves e : \tau$ a \dappl{} 
expression. Let $\mathcal A$ be the policy space of $e$. Additionally we can 
consider policies on the context $\Gamma$, which we precisely define below.

\begin{definition}\label{def:context policy space}
  Let $\Gamma \proves e : \tau$ a \dappl{} expression. 
  For any $x_i : \Choice{\alpha_i} \in \Gamma$, 
  call $\{\alpha_i\}$ the \emph{choice} of $x_i$. 
  The product of all choices of $x \in \Gamma$ is called 
  the \emph{context policy space}, written $\mathcal{A}_{\Gamma}$.
\end{definition}

This will prove useful when we are attempting to reduce expression of form
$\choose x {\alpha_i \implies e_i}$ to \util{}.
Let $\pi \in \mathcal A$ and let $\pi_{\Gamma} \in \mathcal{A}_{\Gamma}$. 
Then we can consider the joint policy $\pi \cup \pi_{\Gamma}$ on which to
reduce $e$ with.
Selected rules are given in Figure~\ref{fig:dappl to util}.
Omitted rules follow the standard recursive application of $|_{\pi \cup \pi_{\Gamma}}$
to subexpressions.

\begin{figure}[H]
  \begin{mdframed}
    \begin{align*}
      [\alpha_1, \cdots, \alpha_n]|_{\pi \cup \pi_{\Gamma}} = \return \tt
    \end{align*}
    \begin{align*}
      {(\choose x {\alpha_i \implies e_i})|_{\pi \cup \pi_{\Gamma}} = e_i|_{\pi \cup \pi_{\Gamma}}}
      \text{ for $i$ s.t. $\alpha_i = \mathrm{proj}_k \pi \cup \pi_{\Gamma}$ for some $k$}
    \end{align*}
  \end{mdframed}
  \caption{Selected reduction rules from \dappl~to \util.}
  \label{fig:dappl to util}
\end{figure}

At this point it is important to prove 
our reduction sound, which we will do so.

\begin{lemma}\label{lemma:dappl to util}
  Let $\Gamma \proves e : \tau$ be a well-typed \dappl~expression, 
  and let $\pi \in \mathcal A$ and $\pi_{\Gamma} \in \mathcal{A}_{\Gamma}.$ 
  Then $e|_{\pi \cup \pi_{\Gamma}}$ is a well-typed \util~expression, and 
  in particular it is well-typed with respect to the context $\Gamma$ with 
  all instances of variables with type $\Choice S$ for some $S$ removed.
\end{lemma}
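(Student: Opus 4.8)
The plan is to proceed by structural induction on the typing derivation of $\Gamma \proves e : \tau$. Write $\Gamma^-$ for the context obtained from $\Gamma$ by deleting every variable whose type has the form $\Choice S$. Since the reduction turns choice constructors into $\return \tt$, a source expression of choice type is mapped to one of type $\Giry \Bool$, so I would strengthen the statement to track the result type explicitly: prove $\Gamma^- \proves e|_{\pi \cup \pi_\Gamma} : \hat\tau$ where $\hat\tau = \Giry \Bool$ when $\tau = \Choice S$ and $\hat\tau = \tau$ otherwise. By Lemma~\ref{lemma:util context}, $\Gamma^-$ contains only $\Bool$-typed variables, so it is a legal \util{} context, which is exactly the content of the displayed conclusion.

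The base and purely structural cases are routine. For $\tt$, $\ff$, $\flip \theta$, and the logical formula formers, the reduction is the identity, and the conclusion follows from the corresponding \util{} typing rule applied over $\Gamma^-$; here one uses that a well-typed pure expression $P : \Bool$ mentions only $\Bool$-typed variables, all of which survive into $\Gamma^-$. For $\reward k e$, $\ite x {e_T} {e_E}$, and the Giry bind $\bind x e {e'}$ (rule \texttt{tp/<-/G}), the reduction commutes with the term former, so I would apply the induction hypothesis to each immediate subexpression and re-apply the matching \util{} rule; in the Giry-bind case the bound $x$ has type $\Bool$ and is carried into $\Gamma^-$ and the reduced subderivation unchanged.

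The genuinely new cases are the three that involve choice types. For the constructor (rule \texttt{tp/[]}), $[\alpha_1, \dots, \alpha_n]|_{\pi \cup \pi_\Gamma} = \return \tt$, which is typed $\Giry \Bool$ in any context, matching $\hat\tau$. For the destructor $\choose x {\alpha_i \implies e_i}$ with $x : \Choice S \in \Gamma$ (rule \texttt{tp/choosewith}), the reduction selects the single branch $e_j$ picked out by the alternative $\pi_\Gamma(x)$; this is well-defined precisely because $x$ is a choice-typed context variable, so it contributes a factor to $\mathcal A_\Gamma$ and $\pi_\Gamma$ assigns it an alternative. Since \texttt{tp/choosewith} forces every branch $e_i$ to share the type $\tau$, the induction hypothesis applied to $e_j$ yields $\Gamma^- \proves e_j|_{\pi \cup \pi_\Gamma} : \hat\tau$, as required. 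The choice-bind $\bind x e {e'}$ (rule \texttt{tp/<-/Choice}) is the crux: here $e : \Choice S$ and $x : \Choice S, \Gamma \proves e' : \tau$. The induction hypothesis on $e$ gives a \util{} expression of type $\Giry \Bool$, while for $e'$ I would move $x$ from the term into the context policy, extending $\pi_\Gamma$ by the alternative that $\pi$ assigns to the choice constructed in $e$, and observe that $(x : \Choice S, \Gamma)^- = \Gamma^-$, so the induction hypothesis applies verbatim.

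The main obstacle I anticipate is precisely this coordination between the program policy $\pi$ and the context policy $\pi_\Gamma$ across the bind/choose interaction: I must verify that whenever a destructor $\choose x \dots$ is reached, the variable $x$ has already been recorded with a definite alternative in $\pi \cup \pi_\Gamma$, so that the branch selection in the reduction rule of Figure~\ref{fig:dappl to util} is total and unambiguous. Establishing this coherence invariant, namely that the accumulated policy always covers every choice-typed variable in scope, is what makes the destructor and choice-bind cases go through; once it is in place, type preservation reduces to the mechanical rule-matching sketched above.
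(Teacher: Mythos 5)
Your proposal takes essentially the same route as the paper's own proof: structural induction on $e$, with the constructor case dispatched by noting $\return \tt$ is well-typed in any context, and the destructor case reduced to totality of branch selection, which holds because the matched name set $\{\alpha_1, \cdots, \alpha_n\}$ is a factor of the (context) policy space as enforced by the type of $x$. If anything your sketch is more careful than the paper's, which omits the choice-bind case and does not explicitly track how the result type changes from $\Choice S$ to $\Giry \Bool$ under reduction; your strengthened induction hypothesis $\hat\tau$ and the $\pi$/$\pi_{\Gamma}$ coordination invariant fill exactly those gaps.
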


\begin{proof}
  We do this by induction on the structure of the expression $e$. 
  Most cases are omitted as they are straightforward; we show the cases for the 
  rules of~\cref{fig:dappl to util}.
  \begin{itemize}
    \item If $e = [\alpha_1, \cdots, \alpha_n]$, clearly $\return \tt$ is a valid
    \util{} expression. 
    Furthermore it is well-typed in any context, concluding the case.
    \item If $e = \choose x {\alpha_i \implies e_i}$, then we first need to prove 
    that there exists $i$ such that 
    $\alpha_i = \mathrm{proj}_k \pi$ for some $k$.
    We show that the set of names $\{\alpha_i, \cdots, \alpha_n\}$ 
    that we are matching on is a factor of the policy space 
    $\mathcal A$, as the result follows since $\mathcal A$ is a 
    finite product and $\{\alpha_i, \cdots, \alpha_n\}$ is a finite set.
    Indeed, this is enforced by the type of the subexpression $x$
    as seen in Figure~\ref{fig:dappl typing}. At this point, the 
    IH shows that $e_i|_{\pi \cup \pi_{\Gamma}}$ can be well-typed 
    with respect to the context $\Gamma$ with 
    all instances of variables with type $\Choice S$ for some $S$ removed,
    concluding the proof.
  \end{itemize}

\end{proof}

\subsection{\dappl~Ergonomics and Syntactic Sugar}\label{appendix:sugar}
We extend the \dappl~core calculus with several ergonomic features that makes
the modeling of decision scenarios easier.

\subsubsection{Ending an expression with \dapplcode{reward}.}

Instead of $\reward k {\return \tt}$ one can write $\dapplcode{reward k}$.

\subsubsection{Support for discrete distributions.}\label{sugar:discrete}

We give \dappl~support for explicit categorical distributions over
a set of variables. For example, the expression \texttt{disc[a : 0.5, b: 0.3, c: 0.2]}
defines a probability distribution over the set of names \texttt{\{a,b,c\}} in which \texttt{a}
has probability 0.5, \texttt{b} has probability 0.3, and \texttt c has probability 0.2.
Discrete distributions de-sugar into a style of \textit{one-hot encoding}, in which a 
distribution over $n$ categorical variables are represented $n$ Boolean variables~\citep{holtzen2020scaling}.

\subsubsection{Overloading of if-then-else and choose-with.}\label{sugar:overloading}

We allow the guard of an if-then-else statement to be a decision with one choice.
Intuitively this would represent the decision of choosing to do something or not.
Symmetrically, we allow use of the choose-with statement over categorical distributions
as outlined above in \ref{sugar:discrete}.
We can do this as for a decision with one choice $c$, 
the expression $\text{ExactlyOne}(c)=c$, and analogously, we can check that for
a categorical distribution \texttt{disc[x1 : p1 , ... , xn : pn]}, the one-hot encoding
will enforce the exactly-one constraint.

\subsubsection{Bounded loops.}\label{sugar:loops}

We allow bounded loops; that are, loops that terminate after a specified number of times.
This avoids the potential of infinite computation while maintaining exactness, 
which has been implemented in several existing PPLs. 
The syntax is $\texttt{loop } n \texttt{ \{} e \texttt{\}}$, on which an expression $e$
is run $n$ many times.
In the case that a decision is within the loop, as an optimization we can pull the decision out of the loop, at which point the expected utility becomes $n$ times that of $e$. This is proved sound in the following Lemma.

\begin{lemma}[Soundness of loops]
  For a $\dappl{}$ program $e$,
  \begin{equation}
  \infer
  {\EU(\texttt{loop } n \texttt{ \{} e \texttt{\}}) = nk}
  {\EU(e) = k & n >0}
\end{equation}
\end{lemma}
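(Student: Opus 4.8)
The plan is to reduce the claim to the additivity of expected utility under sequential composition and then induct on $n$. First I would invoke the desugaring of $\texttt{loop } n \texttt{ \{} e \texttt{\}}$ (described in~\citet{cho2025scaling}): it expands to a bind-chain of $n$ copies $e^{(1)}, \dots, e^{(n)}$ of the body $e$, each renamed with fresh variable names, i.e.\ a $\util$ program of the shape $\bind{x_1}{e^{(1)}}{(\bind{x_2}{e^{(2)}}{\cdots e^{(n)}})}$. Because the decision can be pulled out of the loop, within a fixed policy each $e^{(i)}$ is the same $\util$ program up to renaming, so $\EU(e^{(i)}) = \EU(e) = k$ for every $i$ by Definition~\ref{def:eu util}.

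Next I would isolate the one structural fact that does all the work: for two $\util$ subprograms $e_1, e_2$ whose variable names are disjoint, $\EU(\bind{x}{e_1}{e_2}) = \EU(e_1) + \EU(e_2)$. This rests on two observations about the denotational semantics of $\texttt{bind}$ in~\cref{appendix:util semantics}. The reward bookkeeping in the bind clause (the $s - r$ shift) means the real component attached to a joint non-aborting trace is exactly the sum of the reward accrued in $e_1$ and the reward accrued in $e_2$; and since $\bot$ is absorbing, a trace avoids $\bot$ iff neither subcomputation aborts. Because $e_1$ and $e_2$ share no variables, their denotations are probabilistically independent, so the joint non-abort event factors as the product of the two per-copy non-abort events. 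Conditioning on this product event and applying linearity of expectation then splits the expected summed reward into $\EU(e_1) + \EU(e_2)$.

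With this additivity lemma in hand, the main result follows by induction on $n$. The base case $n = 1$ is immediate, since $\texttt{loop } 1 \texttt{ \{} e \texttt{\}}$ reduces to $e$ and hence $\EU = k = 1 \cdot k$. For the inductive step I would write $\texttt{loop } (n+1) \texttt{ \{} e \texttt{\}}$ as $\bind{x}{e^{(1)}}{(\texttt{loop } n \texttt{ \{} e \texttt{\}})}$, with $e^{(1)}$ using names disjoint from the remaining copies, apply the additivity lemma to obtain $\EU = \EU(e^{(1)}) + \EU(\texttt{loop } n \texttt{ \{} e \texttt{\}})$, and combine $\EU(e^{(1)}) = k$ with the induction hypothesis $\EU(\texttt{loop } n \texttt{ \{} e \texttt{\}}) = nk$ to conclude $(n+1)k$.

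I expect the main obstacle to be the careful treatment of the conditioning in the additivity lemma. Linearity of expectation is unconditional, so to move it past the conditional expectation of Definition~\ref{def:eu util} I must verify that the non-abort event — and, since the bind-chain discards every intermediate copy's returned Boolean and keeps only the last one, the return-value event as well — genuinely factors across the fresh-renamed copies. The disjointness of variable names is what guarantees the combined distribution is the product of the per-copy distributions; the delicate bookkeeping is to show that the reward attached to the final $(\tt, r)$ outcome decomposes additively over the copies and that each copy's contribution equals its full $\EU$ independently of the later copies' return values. This factorization is the crux on which the whole argument hinges.
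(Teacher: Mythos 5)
Your proposal takes essentially the same route as the paper's proof: induction on $n$, unrolling one iteration of the loop as a bind of variable-disjoint copies, and concluding by additivity of expected utility across that bind. The paper's own proof is in fact terser---it unrolls as $\bind{x}{\texttt{loop } (n-1) \texttt{ \{} e \texttt{\}}}{e}$ and simply asserts that, since $x$ does not occur free in $e$ (and vice versa), ``we can add utilities via our semantics'' to obtain $(n-1)k + k = nk$---so the additivity lemma you isolate, together with the bookkeeping caveat you flag about how rewards attached to non-$\tt$ traces interact with the conditional-expectation definition of $\EU$, is precisely the step the paper leaves implicit.
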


\begin{proof}
We prove this by induction.
As a base case we have that 
$\texttt{loop } n \texttt{ \{} e \texttt{\}} = e$,
and $\EU(e) =k $, so $\texttt{loop } n \texttt{ \{} e \texttt{\}} \Downarrow_{\EU} k$ 
as desired. 

In our inductive case, we observe that
$\texttt{loop } n \texttt{ \{} e \texttt{\}} = 
\bind x {\texttt{loop } (n-1) \texttt{ \{} e \texttt{\}}} e$. We see that $x$ will not occur free in $e$, 
or vice versa; thus we can add utilities via our semantics to get $\EU(\texttt{loop } n \texttt{ \{} e \texttt{\}}) = (n-1)k +k = nk$ as desired.  
\end{proof}
\subsection{Full Boolean compilation rules of \dappl{}}
\label{appendix:dappl bc}

See~\cref{fig:dappl full bc}.

\begin{figure}
  \begin{mdframed}
  {\footnotesize 
  \begin{align*}
    \infer[\texttt{bc/var}]
    {x \leadsto (x, T, \eset, \eset)}
    {}
  \end{align*}
  \begin{align*}
    \infer[\texttt{bc/true}]
      {\tt \leadsto (T, T,\eset, \eset)}{}
    \qquad
    \infer[\texttt{bc/false}]
      {\ff \leadsto (F,T, \eset, \eset)}{}
  \end{align*}
  \begin{align*}
    \infer[\texttt{bc/flip}]
      {\flip{\theta} \leadsto 
        (f_{\theta}, 
        T, 
        (f_{\theta} \mapsto (\theta,0), \overline{f_{\theta}} \mapsto (1-\theta, 0)),
        \eset)}
      {\text{fresh } f_{\theta}}
  \end{align*}
  \begin{align*}
    \infer[\texttt{bc/ret}]
      {\return P \leadsto 
        (\varphi, T, \eset, \eset)}
      {P \leadsto (\varphi, T, \eset, \eset)}
  \end{align*}
  \begin{align*}
    \infer[\texttt{bc/reward}]
    {
      \reward k  e \leadsto 
      (\varphi, \gamma, R \cup \{r_k\}, 
        w \cup \{r_k \mapsto (1,k), \overline{r_k} \mapsto (1,0)\})}
    {
      \text{fresh } r_k
      & e \leadsto (\varphi, \gamma, R,w)
    }
  \end{align*}
  \begin{align*}
    \infer[\texttt{bc/obs}]
    {\observe x e \leadsto (\varphi, \gamma \land x, w, R)}
    {x \leadsto (x, T, \eset, \eset)
    & e \leadsto (\varphi, \gamma, w, R)}
  \end{align*}
  \begin{align*}
    \infer[\texttt{bc/[]}]
    {[a_1, \cdots, a_n] \leadsto 
    (\exactlyone{(v_1,\cdots, v_n)}, T, \{v_i \mapsto (1,0), \overline{v_i} \mapsto (1,0)\}_{i \leq n}, \eset )}
    {\text{fresh }v_1, \cdots, v_n}
  \end{align*}
  \begin{align*}
    \infer[\texttt{bc/ite}]
    {
      \ite{x}{e_t}{e_e} \leadsto 
      \begin{gathered}
        \big(
          (x \land \varphi_t \land R_t \land \conjneg{R_e}) 
            \lor (\overline{x} \land \varphi_e \land R_e \land \conjneg{R_t}), \\
          (x \land \gamma_t) \lor (\overline{x} \land \gamma_e),
          w_t \cup w_e,
          \eset) 
      \end{gathered}
    } {
      x \leadsto 
      (x, T, \eset, \eset) 
      & 
      e_t \leadsto 
      (\varphi_t, \gamma_t, w_t, R_t) 
      & 
      e_e \leadsto 
      (\varphi_e, \gamma_e, w_e, R_e)
    }
  \end{align*}
  \begin{align*}
    \infer[\texttt{bc/choose}]
    {\choose x {a_i \implies e_i} 
    \leadsto 
    \begin{gathered}
      \Big(\varphi \land \bigvee (a_i \land e_i 
      \land \bigwedge_{j \neq i} \conjneg{R_j}), 
      x \land \bigvee (a_i \land \gamma_i), \\
      \bigcup w_i, \bigcup R_i)
    \end{gathered}
  }
    {e \leadsto 
    (\varphi, T, \eset, \eset) 
    & 
    \forall \ i.  \ e_i \leadsto (\varphi_i, \gamma_i, w_i, R_i)}
  \end{align*}
  \begin{align*}
    \infer[\texttt{bc/<-}]
    {\bind{x}{e}{e'} \leadsto 
    (\varphi'[x \mapsto \varphi], 
    \gamma \land \gamma'[x \mapsto \varphi],
    w \cup w', 
    R \cup R')}
    {
      e \leadsto (\varphi, \gamma, w, R) 
      & 
      e' \leadsto (\varphi', \gamma', w', R')
    }
  \end{align*}
  }
  \end{mdframed}
  \caption{Boolean compilation rules of \texttt{dappl}. Compilation rules for
  $\land, \lor, \neg$ are omitted as they are straightforward.}
  \label{fig:dappl full bc}
  \end{figure}
\subsection{Proof of Theorem~\ref{thm:compiler correctness}}\label{appendix:dappl correctness}

The architecture of the proof is as follows. First, we prove that~\cref{thm:compiler correctness}
reduces to the following:

\begin{theorem}\label{thm:util correspondence}
  Let $\Gamma \proves e : \Giry \Bool$ a $\util$ expression. Let 
  $e \leadsto (\varphi, \gamma, w, R)$ via~\cref{fig:dappl full bc}.
  Let $\NoWt{\varphi}$ be the variables in $\varphi$ without a defined 
  weight; that is, variables whose literals are not in the domain of $w$, 
  and $\mathcal W (\varphi)$ be the set of maps $lits(\NoWt{\varphi}) \to \mathcal S$.

  Then, there exists a function $f : \denote{\Gamma} \to\mathcal W (\varphi)$
  making the following diagram commute:

  \[\begin{tikzcd}
	\llbracket\Gamma\rrbracket & {\mathcal W(\varphi)} \\
	& {\mathbb R}
	\arrow["f", from=1-1, to=1-2]
	\arrow["{\text{EU}_{(\varphi,\gamma, R)}}", from=1-2, to=2-2]
	\arrow["{\EU \circ \llbracket e \rrbracket}"'{pos=0.2}, from=1-1, to=2-2]
\end{tikzcd}\]
where, for $\overline w \in \mathcal W (\varphi)$,
\begin{equation}\label{eq:util correspondence normalized}
  \text{EU}_{(\varphi,\gamma, R)}(\overline w) 
    = \frac{\AMC(\varphi \land \gamma \land R, w \cup \overline w)_{\EU}}
      {\AMC(\gamma, w \cup \overline w)_{\Pr}}.
\end{equation}
In particular, if there are no \texttt{observe}s (conditioning) 
in the program,~\cref{eq:util correspondence normalized} reduces to
\begin{equation}
  \text{EU}_{(\varphi,T, R)}(\overline w) 
    = \AMC(\varphi\land R, w \cup \overline w)_{\EU}.
\end{equation}
\end{theorem}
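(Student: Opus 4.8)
The plan is to prove the statement by \emph{structural induction} on the typing derivation of the \util{} expression $e$, following the compilation rules of~\cref{fig:dappl full bc} that drive $e \leadsto (\varphi, \gamma, w, R)$. The function $f$ admits an explicit definition: for an assignment $g \in \denote{\Gamma}$, let $f(g)$ send each free-variable literal to the multiplicative unit $\mathbf 1 = (1,0)$ if it agrees with $g$ and to the additive unit $\mathbf 0 = (0,0)$ otherwise. Since $\mathbf 0$ annihilates under $\otimes$, this choice \emph{forces} every free variable of $\varphi$ to the Boolean value dictated by $g$: any model disagreeing with $g$ contributes $\mathbf 0$ to the algebraic model count and is thereby zeroed out. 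By~\cref{lemma:util context} the only unweighted variables $\NoWt{\varphi}$ are exactly these free variables, all of Boolean type (there are no choice variables in \util{}), so $f(g) \in \mathcal W(\varphi)$ is well-defined.

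Directly inducting on $\text{EU}$ will not go through, because it is a \emph{normalized} quantity that does not compose under monadic bind. I would therefore strengthen the induction hypothesis to track the full expectation-semiring value rather than a projection of it. Concretely, I would prove that for every $g$,
\begin{equation*}
  \AMC(\varphi \land \gamma \land R,\, w \cup f(g)) = \Big(\textstyle\sum_{r}\denote{e}(g)(\tt, r),\ \ \textstyle\sum_{r} r \cdot \denote{e}(g)(\tt, r)\Big),
\end{equation*}
so that the first component is the unnormalized probability of a successful trace returning $\tt$ and the second is its probability-weighted reward, together with the separate identity $\AMC(\gamma,\, w \cup f(g))_{\Pr} = 1 - \denote{e}(g)(\bot)$ and an analogous identity for the false outcome via $\overline{\varphi}\land\gamma$ that is needed to make the bind case compose. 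The target equation then follows by dividing the first identity by the second and invoking the $\AMC$ invariant of~\cref{thm:amc invariant}, specializing to $\gamma = T$ and dropping the denominator when there is no conditioning.

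The base cases are routine: for $\flip\theta$ the weights $(\theta,0)$ and $(1-\theta,0)$ reproduce the flip distribution with zero reward, and for $\return P$ the formula is a pure Boolean combination whose $\AMC$ under $f(g)$ evaluates to $(1,0)$ or $(0,0)$ exactly as $P$ is true or false under $g$. Among the inductive steps, \texttt{bc/reward} is where the second component first becomes nontrivial: the fresh $r_k$ with $w(r_k) = (1,k)$ adds $k$ to the reward of every surviving model, since $(p,u) \otimes (1,k) = (p,\, u + pk)$, matching the $+k$ reward shift in $\denote{\reward k e}$. The \texttt{bc/ite} case uses that $f(g)$ pins the guard $x$, so the disjunction collapses to the selected branch, while the $\conjneg{R_E}$ and $\conjneg{R_T}$ conjuncts exclude the rewards of the untaken branch --- exactly the reset discharged in the semantics.

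The main obstacle is the \texttt{bc/<-} (bind) case, where the compiled formula is the substitution $\varphi'[x \mapsto \varphi]$ while the semantics is the monadic composition $\denote{e}(g) \fishbone (\lambda x.\,\cdots)$ that re-bases rewards by $s - r$. Here I must show that the algebraic model count factors through the substitution: summing over models of $\varphi'[x\mapsto\varphi]$ should decompose, via the $\oplus$/$\otimes$ structure of $\mathcal S$ and the disjointness of the two variable blocks, into $\AMC(\varphi)\otimes\AMC(\varphi'|_{x=\top}) \oplus \AMC(\overline{\varphi})\otimes\AMC(\varphi'|_{x=\bot})$, which is why I must carry both the $\tt$- and $\ff$-branch identities through the induction. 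The delicate points are that the reward sets $R$ and $R'$ are disjoint and combine additively, so that $\otimes$ on the reward component realizes precisely the reward addition demanded by $\fishbone$, and that substituting $\varphi$ for the \emph{unweighted} placeholder $x$ coincides with instantiating the inductive hypothesis for $e'$ at $x \mapsto \mathbf{TT}$ and $x \mapsto \mathbf{FF}$. Once this distributivity bookkeeping is established, \texttt{bc/obs} merely conjoins the observed literal onto $\gamma$, which restricts the accepted models and rescales the denominator $\AMC(\gamma)_{\Pr}$ in lock-step with the $\bot$-redirection of $\denote{\observe x e}$, completing the induction.
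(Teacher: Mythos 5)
Your proposal is correct and takes essentially the same route as the paper: the same weight function $f$ (unit weight $(1,0)$ on literals agreeing with $g$, annihilating weight $(0,0)$ otherwise), the same strengthening from the normalized $\EU$ to unnormalized expectation-semiring identities for both the $\tt$ and $\ff$ outcomes together with the separate identity $\AMC(\gamma, w \cup f(g))_{\Pr} = \Pr[\text{not } \bot]$, and the same structural induction whose crux is the Shannon-expansion-plus-disjointness decomposition of the bind case. The only difference is organizational: the paper proves the probability components first as a standalone simultaneous induction (\cref{thm:util pr correspondence}, with \cref{lemma:util pr bot} handling normalization) and then runs a second induction for the utility components that cites the first, whereas you carry both semiring components through a single induction --- the content is identical.
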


Then, we prove~\cref{thm:util correspondence} to complete the proof.

\subsubsection{Reduction of Theorem~\ref{thm:compiler correctness}
to Theorem~\ref{thm:util correspondence}}

The key observation is the following Lemma:

\begin{lemma}[Policy space correspondence]\label{lemma:policy space}
  Let $\cdot \proves e : \tau$ be a \dappl{} program and 
  let $\mathcal A = C_1 \times \cdots C_k$ 
  be the policy space of $e$. 
  Let $e \leadsto \target$. 
  Let $X$ be the set of Boolean variables
  representing choices in $\varphi$. 
  Then:
  \begin{enumerate}
    \item There is an bijective correspondence $\cup_i C_i \to X$,
    \item which lifts into a canonical injective map $\iota :\mathcal A \to inst(X)$,
    \item such that for which for all $\pi\in \mathcal A$, $\iota(\pi)$ satisfies all
    ExactlyOne clauses.
  \end{enumerate} 
\end{lemma}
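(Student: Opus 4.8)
The plan is to induct on the structure of the well-typed \dappl{} program $e$, following its compilation derivation via the rules of~\cref{fig:dappl full bc}, while tracking the choice variables as they are introduced. The crucial observation is that the \emph{only} rule creating choice variables is \texttt{bc/[]}, which, on a choice construct $[\alpha_1, \ldots, \alpha_n]$, emits exactly $n$ fresh Boolean variables $v_1, \ldots, v_n$ together with the clause $\exactlyone{(v_1, \ldots, v_n)}$; every other rule merely threads through, substitutes into, or conjoins the formulae produced by its premises and never introduces a fresh choice variable. First I would establish part~(1): for each occurrence of a choice construct I define the local correspondence $\alpha_i \mapsto v_i$, and then argue that the freshness side-condition of \texttt{bc/[]}, together with the unique-naming discipline of the type system (distinct choices carry nonconflicting alternative names), makes the per-choice variable groups pairwise disjoint. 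Since the $C_i$ are themselves pairwise disjoint as sets of fresh names, $\bigcup_i C_i$ has $\sum_i \lvert C_i\rvert$ elements, exactly matching the size of $X$, so gluing the local correspondences yields a single bijection $\bigcup_i C_i \to X$.

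For part~(2), I would lift this bijection to policies. A policy $\pi \in \mathcal A = C_1 \times \cdots \times C_k$ selects one alternative $\alpha_{i,j_i}$ from each choice $C_i$, and I define $\iota(\pi)$ to be the instantiation of $X$ that sets the variable $v_{i,j_i}$ corresponding to the selected alternative to $\top$ and every other variable in the $i$-th group to $\bot$. Because $X$ is partitioned into the choice groups, this assigns a truth value to every variable of $X$, so $\iota(\pi) \in inst(X)$ is well defined. Injectivity is immediate: distinct policies differ in their selection for some choice $C_i$, which forces a different assignment within the $i$-th group, hence $\iota(\pi) \neq \iota(\pi')$. Note that $\iota$ is deliberately \emph{not} surjective, since $inst(X)$ also contains assignments that set zero or several variables true within a group. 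Part~(3) then follows directly from the construction: for each $i$, the clause $\exactlyone{(v_{i,1}, \ldots, v_{i,n_i})}$ demands that exactly one variable of the $i$-th group be true, and $\iota(\pi)$ sets precisely $v_{i,j_i}$ true and the rest false, so it satisfies every ExactlyOne clause of $\varphi$.

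The main obstacle I anticipate is the bookkeeping in part~(1): proving that $X$ decomposes \emph{exactly} as the disjoint union of the per-choice groups, with no collisions and no stray variables, as compilation recurses through the compound constructs \texttt{bc/ite}, \texttt{bc/<-}, and the destructor \texttt{bc/choose}. The delicate point is verifying that the substitution in \texttt{bc/<-}, where the formula for a choice construct is substituted for the bound variable, preserves the identity of the freshly introduced choice variables rather than duplicating or renaming them, and that the freshness side-conditions are maintained globally across the whole derivation rather than only locally. Once the disjoint-union structure of $X$ is secured, parts~(2) and~(3) are routine consequences of the explicit definition of $\iota$.
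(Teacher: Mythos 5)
Your proposal is correct and follows essentially the same route as the paper's proof: both observe that \texttt{bc/[]} is the sole source of choice variables (one fresh variable per alternative, giving the bijection via freshness), lift the bijection factor-wise to the one-hot instantiation $\iota(\pi)$, and discharge the ExactlyOne condition by tracking the compilation rules. Your version is somewhat more explicit than the paper's (which leaves the one-hot completion of $\iota(\pi)$ and the preservation of clauses under \texttt{bc/<-} substitution implicit), but the underlying argument is identical.
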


\begin{proof}
The bijective correspondence is the map assigning $\alpha \in \cup_i C_i$
  to the Boolean variable generated by 
  Boolean compilation of $C_i = [\alpha, \cdots, \alpha_n]$.
  This is injective, as for $\alpha,\beta \in \cup_i C_i$ such that $\alpha \neq \beta$,
  the Boolean compilation rules will always introduce fresh variable names for $\alpha$
  and $\beta$ that cannot coincide. It is surjective, as variables in $X$ are only 
  introduced in the \texttt{bc/[]} rule, which also introduces choices. We see in the 
  \texttt{bc/[]} rule that for $n$ many alternatives in a choice $C$, $n$ many variables
  are generated.

  Call such a bijection $b$. Then the canonical injective map $\iota$ simply maps $b$
  on each factor of a policy $\pi \in \mathcal A$. The fact that 
  $\iota(\pi)$ satisfies all ExactlyOne clauses is an induction on the Boolean
  compilation rules.
\end{proof}

We can generalize~\cref{lemma:policy space} to general judgements $\Gamma \proves e : \tau$,
in particular get a map from policies $\pi_{\Gamma}$ 
in the context policy space $\mathcal{A}_{\Gamma}$ (recall~\cref{def:context policy space})
to variables in the compiled Boolean formula corresponding to the free variables.

\begin{lemma}\label{lemma:context policy space}
  Let $\Gamma \proves e : \tau$ be a $\dappl$ expression.
  Let $\mathcal{A}_{\Gamma}$ be the context policy space. 
  Let $e \leadsto \target$.
  Let $\mathsf{Var}_{\mathsf{Choice}}(\varphi)$
  be the variables in $\varphi$ that correspond to names of type 
  $\Choice S$ for some $S$ in $\Gamma$; that is,
  $$
  \mathsf{Var}_{\mathsf{Choice}}(\varphi) = \prod_{\Choice S \in \Gamma} S.
  $$
  Then for each $\pi_{\Gamma} \in \mathcal{A}_{\Gamma}$ there is a bijective map $\rho_{\pi_{\Gamma}}:  \pi_{\Gamma} \to \mathsf{Var}_{\mathsf{Choice}}(\varphi)$ 
  on which the inverse is a valid substitution of $\varphi$.
\end{lemma}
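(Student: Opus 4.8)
The plan is to lift the proof of \cref{lemma:policy space} from the empty context to an arbitrary $\Gamma$, exploiting the observation that a free variable $x : \Choice S$ in $\Gamma$ plays, during the compilation of $e$, exactly the role that an inline choice introduced by \texttt{bc/[]} plays in a closed program. Concretely, such an $x$ is bound by some enclosing \texttt{bc/[]} that generates a one-hot block of fresh Boolean variables $\{v_\alpha\}_{\alpha \in S}$ carrying an $\mathrm{ExactlyOne}$ constraint, and these variables are precisely the ones that occur free in the $\varphi$ compiled from $e$ and are recorded in $\mathsf{Var}_{\mathsf{Choice}}(\varphi)$. So the skeleton of the argument — build a name-to-variable bijection $b$, lift it to a map on policies as the canonical $\iota$ of \cref{lemma:policy space} does, and verify that the induced substitution respects $\mathrm{ExactlyOne}$ — carries over, with the per-choice blocks now indexed by the choice-typed entries of $\Gamma$ rather than by the \texttt{bc/[]} occurrences inside $e$.

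First I would establish the block decomposition: for each $\Choice S \in \Gamma$ I associate the Boolean variables of $\varphi$ corresponding to the names of $S$, and show by induction on the compilation rules of \cref{fig:dappl full bc} that these blocks are disjoint, cover $\mathsf{Var}_{\mathsf{Choice}}(\varphi)$, and are each in bijection $b_S : S \to \{v_\alpha\}_{\alpha \in S}$ with $S$. The only rules that introduce or consume a reference to a context choice variable are \texttt{bc/var} (which threads $x$ through unchanged), \texttt{bc/choose} (which conjoins the disjunction $\bigvee_i (a_i \land \varphi_i \land \cdots)$), and \texttt{bc/<-}; every other rule either leaves the free-variable footprint untouched or combines subformulae under the non-aliasing union of weight maps guaranteed by the compilation invariant.

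Next I would assemble $\rho_{\pi_\Gamma}$ by restricting, for each factor of the context policy space, the bijection $b_S$ to the single name selected by $\pi_\Gamma$, exactly as $\iota$ of \cref{lemma:policy space} is assembled from $b$; bijectivity onto $\mathsf{Var}_{\mathsf{Choice}}(\varphi)$ is then immediate from the block decomposition. To show the inverse is a \emph{valid} substitution, I would prove by the same induction that setting the selected variable of each block to $T$ and the remaining variables of that block to $F$ both satisfies every $\mathrm{ExactlyOne}$ clause attached to the block and reduces $\varphi$ to the formula obtained by compiling $e|_{\pi_\Gamma}$ under \cref{fig:dappl to util}. The key case is \texttt{bc/choose}, where substituting the chosen $a_i \mapsto T$ and $a_j \mapsto F$ for $j \neq i$ collapses $\bigvee_i (a_i \land \varphi_i \land \cdots)$ to the single surviving branch, mirroring the reduction rule $(\choose x {\alpha_i \implies e_i})|_{\pi \cup \pi_\Gamma} = e_i|_{\pi \cup \pi_\Gamma}$.

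The main obstacle I anticipate is the bookkeeping that ties the free choice variables of $e$ to $\prod_{\Choice S \in \Gamma} S$ without aliasing: I must verify that the substitution stays well-defined and $\mathrm{ExactlyOne}$-consistent as it propagates through \texttt{bc/<-}, where a compiled subformula is substituted for a bound variable, and through nested \texttt{bc/choose} applications, with no block accidentally shared between two distinct context variables. This is precisely where the freshness side-conditions and the non-aliasing property of the weight-map union do the work, so the heart of the proof is checking that these invariants are preserved under the inductive combination of subderivations rather than any genuinely new construction beyond \cref{lemma:policy space}.
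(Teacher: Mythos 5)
Your overall skeleton --- set up a name-to-variable correspondence by induction on the compilation rules of \cref{appendix:dappl bc}, assemble $\rho_{\pi_\Gamma}$ componentwise, and let freshness/non-aliasing do the work --- is the same as the paper's proof, which is only two sentences: it sends each component of $\pi_\Gamma$ to its corresponding Boolean variable and cites the fresh variable names ``assumed WLOG in the \texttt{bc/choose} rule'' for validity of the substitution. However, your justification of validity routes through structure that is not present in $\varphi$, and that is a genuine gap.

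You assert that each free $x : \Choice{S}$ in $\Gamma$ ``is bound by some enclosing \texttt{bc/[]}'' whose one-hot, $\mathrm{ExactlyOne}$-constrained block of fresh variables appears in $\varphi$. It is not: the judgment $e \leadsto (\varphi, \gamma, w, R)$ compiles $e$ \emph{alone}, with $x$ free, so no \texttt{bc/[]} ever fires for $x$, and $\varphi$ contains no $\mathrm{ExactlyOne}$ clause over the names of $S$. The variables counted by $\mathsf{Var}_{\mathsf{Choice}}(\varphi)$ are introduced solely by \texttt{bc/choose} destructing $x$ --- which is exactly why the paper's proof appeals to freshness in \texttt{bc/choose} rather than \texttt{bc/[]}, and why this lemma is stated separately from \cref{lemma:policy space}, whose third item (satisfaction of $\mathrm{ExactlyOne}$ clauses) genuinely concerns choices constructed \emph{inside} $e$. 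Consequently the part of your argument that verifies ``every $\mathrm{ExactlyOne}$ clause attached to the block'' is reasoning about constraints that do not exist, and your per-block claim that $b_S : S \to \{v_\alpha\}_{\alpha \in S}$ is a bijection fails whenever $e$ never destructs $x$: in that case no variable corresponding to any name of $S$ occurs in $\varphi$ at all. Two smaller points: restricting each $b_S$ to the single name selected by $\pi_\Gamma$ cannot be ``bijective onto $\mathsf{Var}_{\mathsf{Choice}}(\varphi)$'' under your own decomposition (such a map hits one variable per block while you claim every block is full); and your final step --- that the substitution collapses $\varphi$ to the compilation of $e|_{\pi_\Gamma}$, with \texttt{bc/choose} reducing to the surviving branch --- is not this lemma's obligation but the content of \cref{lemma:util square}, which the paper establishes afterwards by a separate induction.
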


\begin{proof}
  The map simply maps each component of $\pi_{\Gamma}$ to its corresponding variable.
  This is a valid substitution as we can always generated fresh Boolean variable names corresponding to the component, which has already been assumed WLOG in the \texttt{bc/choose} rule.
\end{proof}

We can now state the following Lemma:

\begin{lemma}\label{lemma:util square}
  Let $\Gamma \proves e : \tau$ a well-typed $\dappl$ program and let $\iota$ 
  be the canonical injective map from the policy space 
  as described in~\cref{lemma:policy space} and $\rho$ the canonical map 
  from the context policy space to $\mathsf{Var}_{\mathsf{Choice}}(\varphi)$
  as described in ~\cref{lemma:context policy space}. 
  Let $\mathcal{A}_{\Gamma}$ be the context policy space and $\pi_{\Gamma} \in \mathcal{A}_{\Gamma}$.
  Let $\pi$ be a valid policy of $e$ and let 
  $e \leadsto \target$ and 
  $e|_{\pi} \leadsto (\varphi_{\pi}, \gamma_{\pi}, w_\pi, R_\pi)$.

  Let $\mathsf{subst}$ denote the operation that takes in a formula $\varphi$,
  and substitutes
  variables in $\varphi$
  corresponding to variables $x$ of type $\Choice S$ for some $S$ 
  with either $\iota^{-1}(x)$ or $\rho^{-1}(x)$.

  Then the following square commutes up to equisatisfiability of Boolean formulae:
  \[\begin{tikzcd}
	{e} & {(\varphi, \gamma)} \\
	{e|_{\pi \cup \pi_{\Gamma}}} & {(\varphi_{\pi}, \gamma_{\pi})}
	\arrow["\text{Reduction, see~\cref{appendix:reduction soundness}}"', from=1-1, to=2-1]
	\arrow[squiggly, from=1-1, to=1-2]
	\arrow[squiggly, from=2-1, to=2-2]
	\arrow["{(\mathsf{subst}(-), \mathsf{subst}(-))}", from=1-2, to=2-2]
  \end{tikzcd}\]
  in which the $w,R$ are elided as $w \supseteq w_{\pi}$ and $R \supseteq R_\pi$. 
\end{lemma}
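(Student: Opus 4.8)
The plan is to prove the lemma by structural induction on the \dappl{} expression $e$, showing that each syntactic form makes the square commute up to equisatisfiability. The induction hypothesis is exactly the statement of the lemma applied to each immediate subexpression, carried out with its own context and its own restriction of the joint policy $\pi \cup \pi_{\Gamma}$. Throughout, the key fact I would exploit is that $\mathsf{subst}$ installs precisely the truth values dictated by $\pi$ and $\pi_{\Gamma}$ through the bijective correspondences established in \cref{lemma:policy space,lemma:context policy space}, so that substituting a choice variable is the formula-level shadow of choosing an alternative during reduction.

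First I would dispatch all the cases in which no choice is constructed or destructed — namely $x$, $\tt$, $\ff$, $\flip \theta$, $\return P$, $\reward k e$, $\observe P e$, the logical connectives, and a bind $\bind x e {e'}$ whose bound variable has type $\Giry \Bool$. For these, reduction either acts as the identity or recurses structurally, and $\mathsf{subst}$ introduces no substitution at the current node; the two legs therefore agree directly or by a one-step appeal to the induction hypothesis on the subexpressions, using that the variable substitution $\varphi'[x \mapsto \varphi]$ of rule $\texttt{bc/<-}$ commutes with $\mathsf{subst}$ because the two operate on disjoint variable domains.

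The two essential cases are the choice constructor and destructor. For $e = [\alpha_1, \dots, \alpha_n]$, the reduction leg gives $\return \tt$, which compiles to the formula $T$, while the substitution leg compiles via $\texttt{bc/[]}$ to $\exactlyone{(v_1, \dots, v_n)}$ and then, by \cref{lemma:policy space,lemma:context policy space}, sets the selected choice variable $v_k$ to true and the rest to false; the resulting instance of $\exactlyone{}$ is satisfied exactly by that assignment and is hence equisatisfiable to $T$. For $e = \choose x {\alpha_i \implies e_i}$, the reduction leg yields $e_k|_{\pi \cup \pi_{\Gamma}}$ for the unique selected alternative $k$ (existence and uniqueness as argued in \cref{lemma:dappl to util}), whereas the substitution leg compiles via $\texttt{bc/choose}$ to the guarded disjunction $x \land \bigvee_i (a_i \land \varphi_i \land \bigwedge_{j \neq i} \conjneg{R_j})$ and then pins $a_k$ true and $a_j$ false for $j \neq k$; under this assignment every disjunct but the $k$-th collapses to $F$, leaving $\mathsf{subst}(\varphi_k) \land \bigwedge_{j \neq k}\conjneg{R_j}$, whose $\mathsf{subst}(\varphi_k)$ factor matches $\varphi_{k,\pi}$ by the induction hypothesis.

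I expect the main obstacle to be exactly in this last case, in reconciling the residual conjuncts $\bigwedge_{j \neq k}\conjneg{R_j}$: the reduced program $e_k|_{\pi\cup\pi_{\Gamma}}$ contains none of the reward variables of the discarded branches, whereas the substituted formula retains them pinned to false. This is precisely where ``up to equisatisfiability'' earns its keep — since those variables occur nowhere else and are forced false, they may be projected away without changing the satisfying assignments on the shared variables. A second point requiring care is threading the two substitution maps: in a bind $\bind x e {e'}$ where the bound variable has a choice type (rule $\texttt{tp/<-/Choice}$), a variable that is a bound choice of $e$ becomes a context variable of $e'$, so the induction hypothesis for $e'$ must be invoked with the context policy extended by that choice, matching the switch from $\iota^{-1}$ to $\rho^{-1}$ inside $\mathsf{subst}$. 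The freshness discipline of $\texttt{bc/[]}$ guarantees that these variables can always be aligned between the $e$- and $e|_{\pi\cup\pi_{\Gamma}}$-compilations, so no name clashes obstruct the argument.
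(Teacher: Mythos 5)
Your proposal is correct and takes essentially the same approach as the paper's proof: structural induction on $e$, with the only substantive cases being the choice constructor, where the policy assignment satisfies the ExactlyOne clause by \cref{lemma:policy space} so the compiled formula collapses to $\top$ just like the reduction of $[\alpha_1,\cdots,\alpha_n]$ to $\return \tt$, and the choice destructor, where substitution collapses the guarded disjunction to the selected disjunct, which is equisatisfiable (via the inductive hypothesis) to the compilation of the reduced branch. You are in fact somewhat more explicit than the paper about discharging the residual $\conjneg{R_j}$ conjuncts and about extending the context policy through choice-typed binds; the only case the paper treats separately that you fold into the ``easy'' cases is a bare variable of type $\Choice S$, whose reduction yields the empty program rather than acting as the identity, making the square trivially commute.
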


\begin{proof}
  The proof follows from an induction on the syntax of $e$. All cases are straightforward except for three cases:
  \begin{itemize}
    \item If $e = x$ and $x$ is of type $\Choice S$ for some $S$ in $\Gamma$, then 
    the reduction yields the empty program so the square is trivially satisfied.
    \item If $e = [\alpha_1, \cdots, \alpha_n]$, then $e|_{\pi \cup \pi_{\Gamma}}$ is 
    $\return \tt$, which compiles to $\top$.
    There are no free variables in $e$; so the substitution must come from the 
    policy space. By~\cref{lemma:policy space} we know this satisfies the ExactlyOne
    clause of $\varphi$; so it is $\top$ as well. $\gamma$ and $\gamma|_{\pi}$ are both
    $\top$. So we are done.
    \item If $e = \choose x {\alpha_i \implies e_i}$, then WLOG assume that $x$ is substituted for $\alpha_1$. Then $\varphi$ will simplify to $a_i \land e_i|_{\pi \cup \pi_{\Gamma}} \land \bigwedge_{j \neq 1} \conjneg{R_j}$. This is equisatisfiable to $e_i|_{\pi \cup \pi_{\Gamma}} $, at which point the IH kicks in and we are done.
  \end{itemize}
\end{proof}

To prove Theorem 4, consider a $\dappl$ program $e$ (that is, $\cdot \proves e : \tau$) and let $\pi$ be an arbitrary policy. We need not consider context policy spaces as the context is empty. 
Then by~\cref{lemma:util square} we can reduce to a valid $\util$ program. 
Onto this $\util$ program we can apply~\cref{thm:util correspondence}
to know that this is the correct expected utility. Then, by knowing that this is true in 
particular for the optimal policy, and knowing that $\text{bb}$ (Algorithm~\ref{algorithm:bb}) finds this optimal policy
via~\cref{thm:soundness of bb}, we are done.

\subsubsection{Proof of Theorem~\ref{thm:util correspondence}}

We state helpful lemmata, some of which are 
applications of Theorem~\ref{thm:amc invariant} to Propositions
proven in~\citet{holtzen2020scaling}.

\begin{lemma}[Independent conjunction of probabilities]
\label{lemma:ind conj prob}
  For $\varphi, \psi$ Boolean formulas that share no variables
  and any weight function $w : lits(\varphi) \cup lits(\psi) \to \mathcal S$,
  $\AMC(\varphi,w)_{\Pr}\times\AMC(\psi,w)_{\Pr} = \AMC(\varphi \land \psi,w)_{\Pr}$
\end{lemma}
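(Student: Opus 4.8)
The plan is to exploit that the probability projection $(\cdot)_{\Pr}$ -- the first coordinate of the expectation semiring $\mathcal S$ -- is a semiring homomorphism onto the real semiring $\R$. First I would verify this directly from Definition~\ref{def:expectation semiring}: since $(p,u)\oplus(q,v)=(p+q,u+v)$ and $(p,u)\otimes(q,v)=(pq,pv+qu)$, the first coordinate satisfies $((p,u)\oplus(q,v))_{\Pr}=p+q=(p,u)_{\Pr}+(q,v)_{\Pr}$ and $((p,u)\otimes(q,v))_{\Pr}=pq=(p,u)_{\Pr}\times(q,v)_{\Pr}$, with $\mathbf 1_{\Pr}=1$ and $\mathbf 0_{\Pr}=0$. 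Hence projecting onto the first coordinate commutes with arbitrary $\bigoplus$ and $\bigotimes$.

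Applying this homomorphism to Definition~\ref{def:amc}, I would obtain that for any formula $\chi$ we have $\AMC(\chi,w)_{\Pr}=\sum_{m\models\chi}\prod_{\ell\in m}w(\ell)_{\Pr}$; that is, the probability projection of an algebraic model count over $\mathcal S$ is exactly the ordinary real-valued weighted model count with the projected weights $w(\cdot)_{\Pr}$. This reduces the lemma to a purely real-valued factorization statement about weighted model counts over disjoint variable sets.

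The core combinatorial step is then to use that $\varphi$ and $\psi$ share no variables. I would establish a bijection between the models of $\varphi\land\psi$ and the pairs $(m_\varphi,m_\psi)$ of models of $\varphi$ and $\psi$: because $\mathrm{vars}(\varphi\land\psi)=\mathrm{vars}(\varphi)\cup\mathrm{vars}(\psi)$ is a \emph{disjoint} union, every total satisfying assignment $m$ restricts uniquely to a model $m_\varphi$ of $\varphi$ and a model $m_\psi$ of $\psi$, and conversely any such pair glues to a unique model $m=m_\varphi\cup m_\psi$. Under this bijection the literal product splits, $\prod_{\ell\in m}w(\ell)_{\Pr}=\bigl(\prod_{\ell\in m_\varphi}w(\ell)_{\Pr}\bigr)\bigl(\prod_{\ell\in m_\psi}w(\ell)_{\Pr}\bigr)$, and distributivity of $\times$ over $+$ in $\R$ lets the resulting double sum factor into a product of two sums, which are precisely $\AMC(\varphi,w)_{\Pr}$ and $\AMC(\psi,w)_{\Pr}$.

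The main obstacle I anticipate is the bookkeeping around this bijection rather than any deep difficulty: one must be careful that ``model'' always denotes a \emph{total} assignment to the variables of the formula under consideration, so that the restriction and gluing maps are well defined and inverse to each other, and so that the literal product partitions cleanly along the two disjoint variable sets with no literal shared between $\varphi$ and $\psi$. Once disjointness has been used to guarantee this clean partition, the remaining manipulation is routine distributivity, and the homomorphism step of the first paragraph is what licenses transporting the whole computation into $\R$ in the first place.
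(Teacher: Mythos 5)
Your proposal is correct and takes essentially the paper's route: the paper states this lemma without a standalone proof, treating it as the composition of the coordinate projection (which turns the $\Pr$-component of an $\mathcal S$-valued AMC into an ordinary real-valued weighted model count) with the factorization of WMC over variable-disjoint formulas proven in \citet{holtzen2020scaling}. Your model-restriction/gluing bijection is precisely the argument underlying that cited factorization, and it is the same decomposition the paper does spell out for the companion result on expected utilities (\cref{lemma:ind conj eu}), where models of $\varphi \land \psi$ are identified with pairs $m_\varphi \cup m_\psi$ of models of the two conjuncts.
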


\begin{lemma}[Inclusion-exclusion of probabilities]
\label{lemma:inc exc prob}
  For $\varphi, \psi$ Boolean formulas
  and any weight function $w : lits(\varphi) \cup lits(\psi) \to \mathcal S$,
  $\AMC(\varphi,w)_{\Pr}+ \AMC(\psi,w)_{\Pr} -
  \AMC(\varphi \land \psi, w)_{\Pr} = \AMC(\varphi \lor \psi,w)_{\Pr}$.
\end{lemma}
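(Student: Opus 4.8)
The plan is to collapse the expectation semiring down to the reals and then invoke the ordinary inclusion--exclusion principle for weighted model counting. First I would observe that the probability projection $(\cdot)_{\Pr} : \mathcal S \to \R$ is a semiring homomorphism into the real semiring: by Definition~\ref{def:expectation semiring}, $((p,u)\oplus(q,v))_{\Pr} = p+q$ and $((p,u)\otimes(q,v))_{\Pr} = pq$, and the units $\mathbf 0,\mathbf 1$ map to $0,1$ respectively. Since a semiring homomorphism commutes with the finite $\bigoplus$ and $\bigotimes$ appearing in Definition~\ref{def:amc}, we obtain $\AMC(\chi,w)_{\Pr} = \sum_{m \models \chi} \prod_{\ell \in m} w(\ell)_{\Pr} = \mathrm{WMC}(\chi, w_{\Pr})$ for every formula $\chi$, where $w_{\Pr}(\ell) \triangleq w(\ell)_{\Pr}$ is the induced real-valued weight. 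This reduces the claim to the purely real statement $\mathrm{WMC}(\varphi) + \mathrm{WMC}(\psi) - \mathrm{WMC}(\varphi\land\psi) = \mathrm{WMC}(\varphi\lor\psi)$, which is exactly the fact borrowed from~\citet{holtzen2020scaling}.

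For the real statement I would fix the common variable universe $V = \vars(\varphi)\cup\vars(\psi)$ and interpret every model count as a sum over total assignments $m : V \to \{\top,\bot\}$, weighting each by $\mu(m) \triangleq \prod_{v \in V} w_{\Pr}(m(v))$. Writing $A$ and $B$ for the sets of such assignments satisfying $\varphi$ and $\psi$, the key structural facts are $A\cup B = \{m : m \models \varphi\lor\psi\}$ and $A \cap B = \{m : m \models \varphi\land\psi\}$. The conclusion is then the additive inclusion--exclusion identity $\sum_{m \in A\cup B}\mu(m) = \sum_{m\in A}\mu(m) + \sum_{m\in B}\mu(m) - \sum_{m\in A\cap B}\mu(m)$, which holds for any weight $\mu$ on a finite set and follows immediately by splitting the left-hand sum according to whether $m$ lies in $A$ only, $B$ only, or both.

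The main obstacle is the bookkeeping of variable sets: the raw definition of $\AMC(\chi,w)$ quantifies only over $\vars(\chi)$, and these sets differ across the four formulas, so the four model counts are a priori sums over different assignment spaces. The fix is to lift each count to the common universe $V$. Because $\chi$ does not constrain the variables in $V\setminus\vars(\chi)$, its $V$-level count factors as $\AMC(\chi,w)_{\Pr}$ times the marginal $\prod_{v \in V\setminus\vars(\chi)}\bigl(w(v)_{\Pr}+w(\overline v)_{\Pr}\bigr)$; this factor equals $1$ exactly when the extra variables carry probability-normalized weights (as the \texttt{flip} literals produced by our compilation do), so the $V$-level and native counts agree and the argument above goes through unchanged. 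I would therefore state this normalization condition (or, equivalently, the assumption that $\varphi$ and $\psi$ already range over a shared variable set, as holds in every application of the lemma) as the precise hypothesis under which the displayed equation is asserted.
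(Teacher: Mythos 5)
Your proof is correct, but note that the paper itself contains no proof of this lemma: it appears in a block introduced only by the remark that these lemmata are ``applications of Theorem~\ref{thm:amc invariant} to Propositions proven in \citet{holtzen2020scaling}'', so the identity is imported from the weighted-model-counting setting of prior work rather than derived. Your argument makes that implicit reduction explicit and self-contained: projection onto the first coordinate is indeed a semiring homomorphism $\mathcal S \to \R$, so $\AMC(\chi,w)_{\Pr} = \texttt{WMC}(\chi, w_{\Pr})$, and the real-valued identity follows from splitting the sum over a common set of assignments. The most valuable part of your proposal is the final paragraph: the hypothesis you add is not optional. As literally stated---each $\AMC$ sums over models of its own formula's variables, and $w$ is arbitrary---the lemma is false: take $\varphi = x$, $\psi = y$, and $w(\ell) = (1,0)$ for every literal; then the left-hand side is $1 + 1 - 1 = 1$ while $\AMC(x \lor y, w)_{\Pr} = 3$. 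One needs either that all four counts range over assignments to $\vars(\varphi) \cup \vars(\psi)$, or that every variable missing from one of the formulas satisfies $w(v)_{\Pr} + w(\overline v)_{\Pr} = 1$. This is exactly what holds in the paper's uses of the lemma (the variables in the symmetric difference are always \texttt{flip} variables, whose weights are normalized, while the unnormalized reward variables occur in both formulas), so your version with the normalization/common-universe side condition is the statement the downstream compiler-correctness proofs actually rely on.
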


The following additional Lemma extends Lemma~\ref{lemma:ind conj prob}
to expected utilities.

\begin{lemma}[Independent conjunction of expected utilities]
\label{lemma:ind conj eu}
  For $\varphi, \psi$ Boolean formulas that share no variables
  and any weight function $w : lits(\varphi) \cup lits(\psi) \to \mathcal S$,
  \begin{align*}
      \AMC(\varphi,w)\times\AMC(\psi,w) &= \AMC(\varphi \land \psi,w)\\
    &=\AMC(\varphi,w)\cdot\AMC(\psi,w)_{\Pr}+\AMC(\psi,w)\cdot\AMC(\varphi,w)_{\Pr}  
  \end{align*}
  where $\times$ is multiplication in the expectation semiring $\mathcal S$ and $\cdot$
  is scalar multiplication distributing over $\mathcal S$. In particular, if 
  $\AMC(\varphi, w)_{\EU} = 0$,
  \begin{align*}
    [\AMC(\varphi,w)\times\AMC(\psi,w)]_{\EU} &= [\AMC(\varphi \land \psi,w)]_{\EU}\\
    &=\AMC(\varphi,w)_{\EU}\cdot\AMC(\psi,w)_{\Pr}.
  \end{align*}
\end{lemma}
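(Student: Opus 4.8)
The plan is to prove the clean semiring-level identity $\AMC(\varphi,w)\otimes\AMC(\psi,w)=\AMC(\varphi\land\psi,w)$ first, and then read off the expected-utility decomposition simply by unfolding the definition of multiplication in $\mathcal S$. The argument parallels the proof of Lemma~\ref{lemma:ind conj prob}, but carries the full semiring weight rather than only its probability projection, so the only genuinely new ingredient is checking that distributivity and commutativity in $\mathcal S$ license the same factorization that ordinary multiplication did for probabilities.

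First I would exploit the hypothesis that $\varphi$ and $\psi$ share no variables, i.e.\ $\vars(\varphi)\cap\vars(\psi)=\emptyset$. This yields a bijection between models of $\varphi\land\psi$ and pairs $(m_1,m_2)$ with $m_1\models\varphi$ and $m_2\models\psi$: every satisfying assignment $m$ of the conjunction splits uniquely as a disjoint union $m=m_1\cup m_2$ over the two variable sets, and conversely any such pair recombines into a model of $\varphi\land\psi$. Because the literal sets $m_1$ and $m_2$ are disjoint, the model weight factors, $w(m)=\bigotimes_{\ell\in m_1}w(\ell)\otimes\bigotimes_{\ell\in m_2}w(\ell)=w(m_1)\otimes w(m_2)$, using associativity and commutativity of $\otimes$ in $\mathcal S$ to regroup the product.

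Next I would push this factorization through the outer $\bigoplus$. Writing $\AMC(\varphi\land\psi,w)=\bigoplus_{m_1\models\varphi}\bigoplus_{m_2\models\psi}w(m_1)\otimes w(m_2)$ and invoking distributivity of $\otimes$ over $\oplus$ — two-sided, since $\mathcal S$ is commutative — the double sum collapses to $\big(\bigoplus_{m_1\models\varphi}w(m_1)\big)\otimes\big(\bigoplus_{m_2\models\psi}w(m_2)\big)=\AMC(\varphi,w)\otimes\AMC(\psi,w)$, giving the first equality. To obtain the utility decomposition I would set $\AMC(\varphi,w)=(p,u)$ and $\AMC(\psi,w)=(q,v)$ and apply Definition~\ref{def:expectation semiring}: $(p,u)\otimes(q,v)=(pq,\,pv+qu)$. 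Hence $\AMC(\varphi\land\psi,w)_{\EU}=pv+qu$, which is exactly the $\EU$-projection of $\AMC(\varphi,w)\cdot\AMC(\psi,w)_{\Pr}\oplus\AMC(\psi,w)\cdot\AMC(\varphi,w)_{\Pr}$. The ``in particular'' case then follows by zeroing the summand carrying the vanishing utility factor, leaving a single product of one formula's probability projection with the other's expected-utility projection.

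The step I expect to need the most care is the reading of the scalar decomposition: expanding $\AMC(\varphi,w)\cdot q\oplus\AMC(\psi,w)\cdot p$ with scalar multiplication acting coordinatewise produces probability coordinate $2pq$, so that displayed equality is exact only on the $(-)_{\EU}$ projection, which is precisely the component consumed downstream (in Theorem~\ref{thm:amc invariant} and the \dappl{} correctness argument). I would therefore foreground the semiring identity $\AMC(\varphi,w)\otimes\AMC(\psi,w)=\AMC(\varphi\land\psi,w)$ as the primary claim and present the scalar form explicitly as an identity of expected-utility projections, being careful to pair the surviving utility term with the correct probability factor in the degenerate case.
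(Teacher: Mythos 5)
Your proof is correct and takes essentially the same route as the paper's: the paper's entire proof is the observation that, since the formulas share no variables, the models of $\varphi \land \psi$ are exactly the disjoint unions $m_\varphi \cup m_\psi$ of models of the conjuncts, from which the factorization ``follows''; you have simply filled in the weight factorization, the distributivity step, and the coordinate computation in $\mathcal S$. Your closing caveats are also well taken and go beyond the paper's one-line proof: the displayed scalar identity indeed holds only on the $\EU$ projection (the probability coordinate of the right-hand side is $2pq$, not $pq$), and in the degenerate case $\AMC(\varphi,w)_{\EU}=0$ the surviving term is $\AMC(\varphi,w)_{\Pr}\cdot\AMC(\psi,w)_{\EU}$, so the subscripts in the paper's ``in particular'' clause appear to be transposed.
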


\begin{proof}
  We observe that if $\varphi, \psi$ are disjoint, then 
  the models $m \models \varphi \land \psi$ are exactly the set
  $\{m_{\varphi} \cup m_{\psi}: 
  m_{\varphi} \models \varphi \texttt{ and } m_\psi \models \psi\}$;
  the proof follows.
\end{proof}

This Lemma extends Lemma~\ref{lemma:inc exc prob} for
expected utilities, but specifically for compiled formulas.

\begin{lemma}[Additive expected utility.]\label{lemma:inc exc eu}
  Let $\varphi, \psi$ be two programs such that
  the variables of each formula can be partitioned into 
  disjoint sets of probabilistic and reward variables
  $vars(\varphi) = P_X \cup R_X$ and $vars(\psi) = P_Y \cup R_Y$.

  Let $w$ be a weight function such that for literals
  $p \in lits(P_X) \cup lits(P_Y)$, $w(p)_{\EU} = 0$, 
  and for $r \in lits(R_X) \cup lits(R_Y)$,
  $w(r)_{\Pr} = 1$, identifying that probabilistic
  variables carry no utility and reward variables carry
  probability 1. 

  If $R_X, R_Y$
  are disjoint, then
  \begin{align*}
    [\AMC((\varphi \land \conjneg{R_Y} )
    \lor (\psi \land \conjneg{R_X}), w)]_{\EU}
    &= [\AMC(\varphi,w)]_{\EU} + [\AMC(\psi, w)]_{\EU}.
  \end{align*}
\end{lemma}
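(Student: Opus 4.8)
The plan is to expand the left-hand side using the definition of $\AMC$ and the distributivity of the join-free compiled formulae, reducing everything to a statement about disjoint model sets. The key structural fact I would exploit is that $\varphi$ and $\psi$ are compiled formulae whose variable sets partition into probabilistic and reward variables, and that the conjuncts $\conjneg{R_Y}$ and $\conjneg{R_X}$ force the reward variables of the \emph{other} branch to be false, so the two disjuncts $\varphi \land \conjneg{R_Y}$ and $\psi \land \conjneg{R_X}$ are mutually exclusive: no model can satisfy both, since that would require some $r \in R_X$ to be simultaneously set by $\varphi$ and negated by $\conjneg{R_X}$. This mutual exclusivity is what lets me split the $\AMC$ over the disjunction into a sum of $\AMC$s over each disjunct.

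\textbf{The main steps, in order.}
First, I would establish the mutual exclusivity of the two disjuncts and conclude, via an inclusion--exclusion argument analogous to Lemma~\ref{lemma:inc exc prob} but applied to the full expectation semiring, that
\[
  \AMC\big((\varphi \land \conjneg{R_Y}) \lor (\psi \land \conjneg{R_X}), w\big)
  = \AMC(\varphi \land \conjneg{R_Y}, w) \oplus \AMC(\psi \land \conjneg{R_X}, w),
\]
using that the meet term $\AMC$ of the conjunction vanishes because the conjunction is unsatisfiable. Second, I would take the $\EU$ projection of both sides, applying Lemma~\ref{lemma:pull.eu} to pull the projection through $\oplus$ as an ordinary real sum. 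Third, I would handle each summand separately: for $\AMC(\varphi \land \conjneg{R_Y}, w)_{\EU}$, I would invoke Lemma~\ref{lemma:ind conj eu} with the observation that $\conjneg{R_Y}$ is a conjunction purely of reward literals disjoint from $\vars(\varphi)$, and that $\AMC(\conjneg{R_Y}, w)_{\EU} = 0$ since all these literals carry weight $(1,0)$ by hypothesis on $R_Y$. This gives
\[
  \AMC(\varphi \land \conjneg{R_Y}, w)_{\EU}
  = \AMC(\varphi, w)_{\EU} \cdot \AMC(\conjneg{R_Y}, w)_{\Pr}
  = \AMC(\varphi, w)_{\EU},
\]
the last equality because $\AMC(\conjneg{R_Y}, w)_{\Pr} = \prod_{r \in R_Y} w(\overline r)_{\Pr} = 1$. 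The symmetric computation handles the $\psi$ term, and adding the two yields the claim.

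\textbf{Where the difficulty lies.}
The routine calculations are the projection-pulling and the probability-one check on the negated reward conjunctions; these follow mechanically from the stated lemmata and the weight hypotheses. The genuinely delicate step is the first one: justifying that $\AMC$ over a \emph{disjunction} decomposes additively in the expectation semiring. Lemma~\ref{lemma:inc exc prob} is stated only for the $\Pr$ projection, so I cannot cite it directly; I must argue the full-semiring inclusion--exclusion identity, and crucially that the intersection term contributes $\mathbf 0$ because $(\varphi \land \conjneg{R_Y}) \land (\psi \land \conjneg{R_X})$ has no models. The subtle point is that mutual exclusivity genuinely requires $R_X$ and $R_Y$ to be disjoint (the stated hypothesis): if they shared a reward variable, a model could satisfy both disjuncts and the additive decomposition would fail. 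I would therefore spell out carefully why disjointness of the reward-variable sets, combined with the negation conjuncts, forces unsatisfiability of the conjunction, and only then invoke the semiring inclusion--exclusion to finish.
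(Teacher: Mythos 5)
Your proof has a genuine gap, and it sits exactly where you located the ``genuinely delicate step'': the mutual exclusivity claim is false. Nothing in the lemma's hypotheses forces a model of $\varphi$ to set \emph{any} reward variable in $R_X$ to true --- compiled programs routinely have models in which every reward variable is false (e.g., the ``no umbrella, not raining'' trace in the paper's running example, whose disjunct is $\overline{r} \land \overline{R_{10}} \land \overline{R_{-5}} \land \overline{R_{-100}}$). Consequently a model that satisfies $\varphi$ and $\psi$ on the probabilistic variables while assigning false to every variable of $R_X \cup R_Y$ satisfies \emph{both} disjuncts, so $(\varphi \land \conjneg{R_Y}) \land (\psi \land \conjneg{R_X})$ is satisfiable in general. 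Your step-one identity
\begin{equation*}
  \AMC\big((\varphi \land \conjneg{R_Y}) \lor (\psi \land \conjneg{R_X}), w\big)
  = \AMC(\varphi \land \conjneg{R_Y}, w) \oplus \AMC(\psi \land \conjneg{R_X}, w)
\end{equation*}
is therefore wrong at the level of the full expectation semiring: the overlap carries strictly positive probability mass, so the $\Pr$ components of the two sides disagree (the right side double-counts). Your diagnosis of the role of disjointness is also inverted --- disjointness of $R_X$ and $R_Y$ does not create exclusivity; it is needed later so that $\conjneg{R_Y}$ constrains only variables fresh to $\varphi$ and can be factored out as the unit $(1,0)$.

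The repair is the paper's actual argument: partition the models of the disjunction into those satisfying only the first disjunct, only the second, and both, and observe that the overlap class contributes zero in the $\EU$ \emph{projection} (though not in $\Pr$), because every such model assigns false to all of $R_X \cup R_Y$ and hence has weight of the form $(q,0)$. The additive decomposition thus holds only after projecting to $\EU$, not before. Your remaining steps --- pulling the projection through $\oplus$ via Lemma~\ref{lemma:pull.eu}, then using Lemma~\ref{lemma:ind conj eu} together with $\AMC(\conjneg{R_Y}, w)_{\Pr} = 1$ to strip the negated reward conjuncts --- are sound and essentially mirror the paper's step $(\dagger)$, so with the first step replaced by the three-way model split, your proof goes through.
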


\begin{proof}
  Consider the models $m$ such that
    $m \models (\varphi \land \conjneg{R_Y} )
    \lor (\psi \land \conjneg{R_X}).$
  The models will will either:
  \begin{enumerate}
    \item model $\varphi \land \conjneg{R_Y}$
    but not $\psi \land \conjneg{R_X}$,
    \item model $\psi \land \conjneg{R_X}$
    but not $\varphi \land \conjneg{R_Y}$, or
    \item model both $\varphi \land \conjneg{R_Y}$
    and $\psi \land \conjneg{R_X}$.
  \end{enumerate}
  In Cases (1) and (2), $\psi \land \conjneg{R_X}$
  and $\varphi \land \conjneg{R_Y}$ respecitvely will not
  contribute any expected utility as they are not modeled.
  In Case (3), as any model will make all reward variables in
  $R_X$ and $R_Y$ false, it will contribute no expected utility.
  Thus in summary

  \begin{align*}
    [\AMC((\varphi \land \conjneg{R_Y})
    \lor (\psi \land \conjneg{R_X}), w)]_{\EU}
    &= \left[\sum_{m \models \varphi \land \conjneg{R_Y},
        m \not\models \psi \land \conjneg{R_X}} w(m)\right]_{\EU} \\
    &\quad + \left[\sum_{m \not\models \varphi \land \conjneg{R_Y},
        m \models \psi \land \conjneg{R_X}} w(m)\right]_{\EU} \\
    &\quad + \left[\sum_{m \models \varphi \land \conjneg{R_Y},
        m \models \psi \land \conjneg{R_X}} w(m)\right]_{\EU} \\
    &= \left[\sum_{m \models \varphi \land \conjneg{R_Y}} w(m)\right]_{\EU}
      + \left[\sum_{
        m \models \psi \land \conjneg{R_X}} w(m)\right]_{\EU} &(\star)\\
    &= \left[\sum_{m \models \varphi} w(m)\right]_{\EU}
      + \left[\sum_{
        m \models \psi} w(m)\right]_{\EU} &(\dagger)\\
    &= [\AMC(\varphi,w)]_{\EU} + [\AMC(\psi, w)]_{\EU}, 
  \end{align*}
  where $w(m)$ denotes the weight of a model defined as the product of its literals. $(\star)$ is the usage of the fact that if 
  $m \models \varphi \land \conjneg{R_Y}$,
  then either $m \models \psi \land \conjneg{R_X}$ or it does not.
  If it does, then $w(m)$ is zero as all reward variables are negated.
  If not, then $m \not\models \psi \land \conjneg{R_X}$ so the formula 
  $\psi \land \conjneg{R_X}$ contributes nothing. The analogous is true 
  for when $m \models \psi \land \conjneg{R_X}$.
  $(\dagger)$ uses the fact that the weight of a negated
  reward literal is $(1,0)$, the multiplicative unit, so
  it can be factored out when calculating $w(m)$.
\end{proof}

It is worthwhile to note that $[\AMC(\varphi,w)]_{\EU} = \EU[\varphi]$ as mentioned in Lemma~\ref{thm:amc invariant}. 
Since expected utility is indeed an expectation, we can use 
techniques such as taking conditional expectations $\EU[\varphi | \gamma]$. We reap the benefits of this observation in the proof of
Theorem~\ref{thm:compiler correctness}.

Now we prove intermediate results about the distribution of a $\util$ program.

\begin{definition}
  Let $\Gamma \proves e : \Giry \Bool$ a $\util$ program. 
  Then we can define a probability distribution $\Pr : \{\tt, \ff, \bot\} \to [0,1]$
  by:
  \begin{align*}
    \Pr(\tt) = \sum_{r \in \R} \denote{e} \denote{\Gamma}((\tt,r))
    &&
    \Pr(\ff) = \sum_{r \in \R} \denote{e} \denote{\Gamma}((\ff,r)),
  \end{align*}
  identically we can write
  \begin{align*}
    \Pr(\tt) = \sum_{v = (\tt,r) \in \R} \denote{e} \denote{\Gamma}(v)
    &&
    \Pr(\ff) = \sum_{v = (\ff,r) \in \R} \denote{e} \denote{\Gamma}(v).
  \end{align*}
  With an abuse of notation we write $\Pr[\denote{e} \denote{\Gamma}]$ for this.
\end{definition}

\begin{theorem}\label{thm:util pr correspondence}
  Let $\Gamma \proves e : \Giry \Bool$ a $\util$ expression. 
  Let 
  $e \leadsto (\varphi, \gamma, w, R)$ via~\cref{fig:dappl full bc}.
  Let $\NoWt{\varphi}$ be the variables in $\varphi$ (hence, in $\gamma$ as well) 
  without a defined 
  weight; that is, variables whose literals are not in the domain of $w$, 
  and $\mathcal W (\varphi)$ be the set of maps $lits(\NoWt{\varphi}) \to \mathcal S$.

  Then, there exists a function $f : \denote{\Gamma} \to\mathcal W (\varphi)$
  making the following diagrams commute:

  \[\begin{tikzcd}
	\llbracket\Gamma\rrbracket & {\mathcal W(\varphi)} \\
	& {\mathbb R}
	\arrow["f", from=1-1, to=1-2]
	\arrow["{\mathsf{Prob_{\varphi,\gamma, R}}}", from=1-2, to=2-2]
	\arrow["{\Pr \circ \llbracket e \rrbracket (-)(\tt)}"'{pos=0.2}, from=1-1, to=2-2]
  \end{tikzcd}
  \qquad 
  \begin{tikzcd}
    \llbracket\Gamma\rrbracket & {\mathcal W(\varphi)} \\
    & {\mathbb R}
    \arrow["f", from=1-1, to=1-2]
    \arrow["{\mathsf{Prob_{\overline{\varphi},\gamma, R}}}", from=1-2, to=2-2]
    \arrow["{\Pr \circ \llbracket e \rrbracket (-)(\ff)}"'{pos=0.2}, from=1-1, to=2-2]
  \end{tikzcd}\]
  where, for $\overline w \in \mathcal W(\varphi)$,
  \begin{equation}
    \mathsf{Prob}_{\varphi, \gamma, R} (\overline w)
       = \AMC(\varphi \land \gamma \land R, w \cup \overline w)_{\Pr}.
  \end{equation}
  and
  \begin{equation}
    \mathsf{Prob}_{\overline\varphi, R} (\overline w)
       = \AMC(\overline \varphi \land \gamma \land R, w \cup \overline w)_{\Pr}.
  \end{equation}
  That is, computes the \emph{unnormalized probabilities of $e$ returning $\tt$ or $\ff$.}
\end{theorem}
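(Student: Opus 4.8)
The plan is to prove both diagrams simultaneously by structural induction on the derivation of $\Gamma \proves e : \Giry\Bool$, since the $\bind$ case of the $\tt$-diagram consumes both halves of the induction hypothesis. By \cref{lemma:util context} every variable in $\Gamma$ is Boolean, so I define the witness $f$ uniformly: for $g \in \denote\Gamma$, let $f(g)$ be the element of $\mathcal W(\varphi)$ assigning each free (context) variable its \emph{indicator weight}, $x \mapsto (1,0)$ and $\overline x \mapsto (0,0)$ when $g(x) = \mathbf{TT}$, and the reversed assignment when $g(x) = \mathbf{FF}$. The recurring fact I would isolate first is that, under such an indicator weight, the $\Pr$-projection of an $\AMC$ is unchanged whether one conjoins the literal $x$ into the formula or conditions the formula on $x=\top$, because the complementary literal is annihilated; this lets me translate every formula-level conditioning into a denotation-level case split on $g$.

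The base cases are direct evaluations of $\AMC$. For $\flip\theta$ the compiled formula is the single weighted variable $f_\theta$ with $\AMC(f_\theta)_{\Pr} = \theta$ and $\AMC(\overline{f_\theta})_{\Pr} = 1-\theta$, matching the two point masses; for $\return P$ the formula $\varphi_P$ carries only indicator weights, so its $\Pr$-count is $1$ exactly when $g \models \varphi_P$ and $0$ otherwise, matching the deterministic denotation of the pure expression $P$. The $\reward k e$ case is immediate from \cref{lemma:ind conj prob}: the fresh variable $r_k$ is conjoined into $R$ with $\Pr$-weight $1$ on both literals, so it changes neither projection, exactly as the reward shift leaves the return-value marginals of $e$ untouched. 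For $\observe x e$ the accepting formula becomes $\gamma \land x$; the indicator weight of $x$ makes the new conjunct contribute a factor $1$ when $g(x) = \mathbf{TT}$ (recovering the hypothesis for $e$) and collapse the whole count to $0$ when $g(x) = \mathbf{FF}$, matching $\pmb\bot$.

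The two structural cases do the real work. For $\bind x e {e'}$ I would first unfold the monadic bind in the denotational semantics into a law of total probability,
\begin{equation*}
  \Pr[\denote{\bind x e {e'}}g](b)
  = \Pr[\denote e g](\tt)\cdot\Pr[\denote{e'}(g_{x=\tt})](b)
  + \Pr[\denote e g](\ff)\cdot\Pr[\denote{e'}(g_{x=\ff})](b),
\end{equation*}
writing $g_{x=\tt}$ for $g \cup \{x\mapsto\mathbf{TT}\}$ and noting that reward shifts and the $\bot$-branch leave return-value marginals untouched. On the compilation side the substitution $\varphi'[x\mapsto\varphi]$ (and likewise inside $\gamma'$) replaces the free variable $x$ of $e'$ by the formula $\varphi$ of $e$; because compilation always introduces fresh variables, the variable sets of $e$ and $e'$ are disjoint once the shared context is fixed by $f(g)$, so splitting the model count on whether $\varphi$ holds and invoking \cref{lemma:ind conj prob} gives
\begin{equation*}
  \AMC(\psi[x\mapsto\varphi]\land\cdots)_{\Pr}
  = \AMC(\varphi\land\cdots)_{\Pr}\cdot\AMC(\psi|_{x=\top}\land\cdots)_{\Pr}
  + \AMC(\overline\varphi\land\cdots)_{\Pr}\cdot\AMC(\psi|_{x=\bot}\land\cdots)_{\Pr}.
\end{equation*}
Identifying indicator-forcing with conditioning lets me match the $\psi|_{x=\top}$ and $\psi|_{x=\bot}$ factors with the hypotheses for $e'$ under $g_{x=\tt}$ and $g_{x=\ff}$, and the $\varphi$/$\overline\varphi$ factors with the two hypotheses for $e$. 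For $\ite x {e_T} {e_E}$ the indicator weight of the guard annihilates one disjunct of both $\varphi$ and $\gamma$, and every variable of the $\conjneg{R_E}$ (resp.\ $\conjneg{R_T}$) conjunct is a fresh reward variable of $\Pr$-weight $1$, so \cref{lemma:ind conj prob} discharges it and reduces the count to the hypothesis for the taken branch.

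The step I expect to be the main obstacle is the $\ff$-diagram, specifically its interaction with the reward variables that \texttt{bc/ite} and \texttt{bc/choose} \emph{discharge} into $\varphi$ while resetting the reward component $R$ to $\eset$. A reward variable carries $\Pr$-weight $1$ on \emph{both} polarities, so it is only counted correctly while it remains pinned; the $\tt$-formula $\varphi$ pins each such variable (via its $R_T$ and $\conjneg{R_E}$ conjuncts), but the negation $\overline\varphi$ appearing in $\mathsf{Prob}_{\overline\varphi,\gamma,R}$ releases that constraint and can double-count models in which a discharged reward variable takes its non-determined value. Closing the $\ff$-case therefore requires an auxiliary invariant that each reward variable is a deterministic function of the probabilistic assignment and the control-flow path, so that the released assignments are either excluded or contribute zero; making this precise — and checking it is preserved through $\bind$, where the conjoined $R \cup R'$ re-pins the rewards — is where the careful bookkeeping lives, and it is the part I would develop most carefully before the routine cases.
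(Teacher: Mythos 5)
Your proof plan coincides with the paper's own proof in every case that actually goes through: the same indicator-weight witness $f$, the same simultaneous induction on syntax (needed, as you say, because the $\bind$ case consumes both halves of the induction hypothesis), and the same appeals to \cref{lemma:ind conj prob} and \cref{lemma:inc exc prob} to discharge the \texttt{reward}, \texttt{observe}, if-then-else, and bind cases. Where you diverge is in taking the $\ff$-diagram seriously: the paper's proof disposes of every $\ff$-case with ``the case is identical,'' and you are right that it is not.

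However, the repair you sketch cannot close the case, because the ``released'' models are genuinely counted, not excluded. Concretely, let $e_1 = \bind{z}{\flip{0.5}}{\ite{z}{\reward{1}{\return \tt}}{\return \tt}}$. Rule \texttt{bc/ite} discharges the reward variable into the formula and resets $R$ to $\eset$, and \texttt{bc/<-} then yields $\varphi_1 = (f_{0.5} \land r_1) \lor (\overline{f_{0.5}} \land \overline{r_1})$, $\gamma_1 = \top$, $R_1 = \eset$. Your invariant holds here --- $r_1$ is a deterministic function of $f_{0.5}$ --- and the program returns $\tt$ with probability $1$, so the $\ff$-diagram demands $\AMC(\overline{\varphi_1} \land \gamma_1 \land R_1, w)_{\Pr} = 0$. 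But $\overline{\varphi_1}$ is satisfied by exactly the two deviating models $\{f_{0.5}, \overline{r_1}\}$ and $\{\overline{f_{0.5}}, r_1\}$, and since both reward literals carry $\Pr$-weight $1$ these contribute $0.5 + 0.5 = 1 \neq 0$. The deviating assignments are neither excluded nor zero-weighted, so no determinism invariant rescues the equality. The defect even propagates to the $\tt$-diagram through the very step you planned to use it in: binding $x$ to $e_1$ and continuing with $\ite{x}{\return \ff}{\return \tt}$ makes \texttt{bc/<-} produce the compiled formula $\overline{\varphi_1}$, whose $\Pr$-count is $1$ although this program never returns $\tt$. So the obstacle you identified is real, but it is an obstacle to the statement rather than to your bookkeeping: the $\ff$-event must be represented by something other than the literal negation $\overline{\varphi}$ --- for instance a separately compiled formula in which the reward variables discharged by \texttt{bc/ite} and \texttt{bc/choose} are re-pinned, or a negation taken only after projecting reward variables out --- and the induction must then carry that corrected formula through $\bind$. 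As stated, neither your proof nor the paper's establishes the theorem.
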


\begin{proof}
  Recall that $\Gamma$ must only hold variables of 
  type $\Bool$ by~\cref{lemma:util context}. \
  So $\denote{\Gamma}$ will hold maps of variables to distributions $\mathbf{TT}$ or $\mathbf{FF}$. 
  Also note that 
  variables in $\NoWt{\varphi}$ are precisely the free variables of $e$; 
  these variables must be defined in $\Gamma$. 
  Thus we can define $f$ to be the map mapping $\{x \mapsto \mathbf{TT}\}$ to $\{x \mapsto (1,0), \overline x \mapsto (0,0)\}$ and $\{x \mapsto \mathbf{FF}\}$ to $\{x \mapsto (0,0), \overline x \mapsto (1,0)\}$.

  We prove that this is exactly what we need. This is done by simultaneous induction on syntax. 
  \begin{itemize}[leftmargin=*]
    \item If $e = \tt, \ff, \flip \theta$, then we are done after a simple evaluation.
    
    \item If $e = x$, then $x \leadsto (x, \top, \eset, \eset)$. 
    If $x \mapsto \mathbf{TT} \in \denote{\Gamma}$ then $\Pr \circ \denote{x} (\Gamma) (\tt,0) = 1$. Identically 
    \begin{equation}
      \AMC(x \land \top, w \cup (x \mapsto (1,0), \overline x \mapsto (0,0)))_{\Pr} 
      \AMC(x, w \cup (x \mapsto (1,0), \overline x \mapsto (0,0)))_{\Pr} 
      = 1.
    \end{equation}
    If $x \mapsto \mathbf{FF} \in \denote{\Gamma}$ then $\Pr \circ \denote{x} (\Gamma) (\ff,0) = 1$. Identically 
    \begin{equation}
      \AMC(\overline x, w \cup (x \mapsto (0,0), \overline x \mapsto (1,0)))_{\Pr} = 1.
    \end{equation}

    \item For $e = \return P$, for $g \in \denote{\Gamma}$,
    $\denote{e}g = \denote{P}g$. 
    Identically $e$ and $P$ compiles to the same Boolean formula. By the IH we are done.

    \item For $e = \reward k {e'}$, for $g \in \denote{\Gamma}$, we observe that
      $$\Pr \circ (\denote{e} g) (\tt) 
        = \sum_{r \in R} (\denote{e}g)(r)
        = \sum_{r \in R} (\denote{e'}g)(r) = \Pr \circ (\denote{e'} g)(\tt).$$
    Identically, we observe that
    $$
      \AMC(\varphi \land \gamma \land R \land r_k, w)_{\Pr}
      = \AMC(\varphi \land \gamma \land R, w)_{\Pr}
    $$
    as a straightforward application of~\cref{lemma:ind conj prob}. By the IH we are done. The case is identical for $\Pr \circ \denote e \denote \gamma  (\ff)$.

    \item For $e = \ite x {e'} {e''}$, for $g \in \denote{\Gamma}$, we case on $g(x)$. Assume it is $\mathbf{TT}$; the other case is symmetrical. Then 
    $\denote{e} g = \denote{e'} g$. 

    On the other hand this implies that $f(x \mapsto \mathbf{TT}) = \{x \mapsto (1,0), \overline x \mapsto (0,0)\}$. Consider that, using notation from \texttt{bc/ite}
    and writing 
    $$\varphi = (x \land \varphi_t \land R_t \land \conjneg{R_e}) 
            \lor (\overline{x} \land \varphi_e \land R_e \land \conjneg{R_t})
            \land (x \land \gamma_t) \lor (\overline x \land \gamma_e),$$
    we get, after simplification,
    \begin{align}
      &\AMC(\varphi, w_t \cup w_e \cup f(x \mapsto \mathbf{TT}))_{\Pr} \\
      &=\AMC(x \land \varphi_t \land \gamma_t \land R_t \land \conjneg{R_e})_{\Pr} \\
      &=\AMC(\varphi_t \land R_t \land \gamma_t \land \conjneg{R_e} ) \\
      &=\AMC(\varphi_t \land R_t \land \gamma_t)
    \end{align}
    where (22) is due to~\cref{lemma:inc exc prob} and (23), (24) is due to~\cref{lemma:ind conj prob}. At this point the IH works and we are done. The case is identical for $\Pr \circ (\denote e g)  (\ff)$.

  \item For $e = \observe x {e'}$, for $g \in \denote{\Gamma}$, we case on $g(x)$.
 
  If $g(x) = \mathbf{TT}$, then $\denote{e} g = \denote{e'} g$. Also, by following \texttt{bc/obs}, we get that 
      $$\AMC(\varphi \land \gamma \land x, w_t \cup w_e \cup f(x \mapsto \mathbf{TT}))_{\Pr}
      =\AMC(\varphi \land \gamma, w_t \cup w_e)_{\Pr}$$
      by~\cref{lemma:ind conj prob} and we are done by IH.

  If $g(x) = \mathbf{FF}$, then $\denote{e} g = \pmb{\bot}$. Identically we get
    $$\AMC(\varphi \land \gamma \land x, w_t \cup w_e \cup f(x \mapsto \mathbf{FF}))_{\Pr}
      =0$$
  concluding the case.
  \item For $e = \bind x {b} {e'}$, for $g \in \denote{\Gamma}$, we define some custom notation:
  \begin{itemize}
    \item $\mathcal D = \denote{e} g$,
    \item $\mathcal B = \denote{b} g$,
    \item $\mathcal{E}_{\mathbf{TT}} = (\denote{e'} g) \cup (x \mapsto \mathbf{TT})$,
    \item $\mathcal{E}_{\mathbf{FF}} =( \denote{e'} g )\cup (x \mapsto \mathbf{FF})$,
    \item $b \leadsto (\varphi_b, T, R_b)$, and
    \item $e' \leadsto (\varphi_{e'}, T, R_{e'})$.
  \end{itemize}
  Then, we can identify, via our denotational semantics (refer to~\cref{appendix:util semantics}), 
  expand 
  $\Pr \circ \mathcal D$:
  \begin{align*}
    \Pr \circ \mathcal D (\tt)
      &= \sum_{r \in \R} \mathcal D (\tt,r) 
      &\text{[Definition]}\\
      &=\sum_{r \in \R} 
        \left[
          \sum_{s \in \R} \mathcal B [(\tt,s)] \mathcal{E}_{\mathbf{TT}} (\tt, s-r)
          + \sum_{s \in \R} \mathcal B [(\ff,s)] \mathcal{E}_{\mathbf{FF}} (\tt, s-r)
        \right]
      &\text{[Unfolding]}\\
      &=\sum_{r \in \R} 
        \left[
          \sum_{s \in \R} \mathcal B [(\tt,s)] \mathcal{E}_{\mathbf{TT}} (\tt, r)
          + \sum_{s \in \R} \mathcal B [(\ff,s)] \mathcal{E}_{\mathbf{FF}} (\tt, r)
        \right]
      &\text{[Invariance]}\\
      &= 
          \sum_{s \in \R} \mathcal B [(\tt,s)] \sum_{r \in \R}\mathcal{E}_ {\mathbf{TT}} (\tt, r)
          + \sum_{s \in \R} \mathcal B [(\ff,s)] \sum_{r \in \R}\mathcal{E}_{\mathbf{FF}} (\tt, r)
      &\text{[Distributivity]} \\
      &= 
        \Pr \circ \mathcal B (\tt)
        \Pr \circ \mathcal{E}_ {\mathbf{TT}} (\tt)
        + \Pr \circ \mathcal B (\ff)
        \Pr \circ \mathcal{E}_ {\mathbf{FF}} (\tt)
      &\text{[Definition]}\\
      &= 
        \AMC(\varphi_b \land \gamma_b)_{\Pr}
        \AMC(\varphi_{e'}[x / T] \land \gamma_{e'}[x/T])_{\Pr}
      &\\
      &\quad
        + \AMC(\overline{\varphi_b} \land \gamma_b)_{\Pr}
        \AMC(\varphi_{e'}[x / F] \land \gamma_{e'}[x/F])_{\Pr}
      &\text{[IH]}\\
      &= 
        \AMC((\varphi_b \land \gamma_b \land \varphi_{e'}[x / T] \land \gamma_{e'}[x/T])
      &\\
      &\quad 
          \lor 
          (\overline{\varphi_b}\land \varphi_{e'}[x / F]\land \varphi_{e'}[x / F] \land \gamma_{e'}[x/F]))_{\Pr}
      &\text{[Lemmas]}\\
      &= 
        \AMC(\varphi_{e'}[x/\varphi_b] \land \gamma_b \land \gamma_e'[x/\varphi_b])_{\Pr}
      &\text{[Equisatisfiability]}
  \end{align*}
    which concludes the proof.
  \end{itemize}
\end{proof}

We can also additionally prove a similar Lemma.

\begin{lemma}\label{lemma:util pr bot}
  Let $\Gamma \proves e : \Giry \Bool$ a $\util$ expression. Let 
  $e \leadsto (\varphi, \gamma, w, R)$ via~\cref{fig:dappl full bc}.
  Let $f$ be the function defined in~\cref{thm:util pr correspondence}. 
  Then for $g \in \denote{\Gamma}$,
  \begin{equation}
    \Pr \circ (\denote{e}g)[\tt \texttt{ or } \ff] = \AMC(\gamma, w \cup \overline w)_{\Pr}.
  \end{equation}
\end{lemma}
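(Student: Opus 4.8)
The plan is to reduce this lemma almost entirely to the two probability-correspondence triangles of Theorem~\ref{thm:util pr correspondence}, leaving only a short Boolean-algebraic cleanup. First I would observe that $\Pr \circ (\denote{e}g)[\tt \text{ or } \ff]$ is by definition $\Pr \circ (\denote{e}g)(\tt) + \Pr \circ (\denote{e}g)(\ff)$, since $\tt$, $\ff$, and $\bot$ exhaust the support of the distribution and we are simply summing the non-$\bot$ mass. Instantiating the two commuting diagrams of Theorem~\ref{thm:util pr correspondence} with $\overline w = f(g)$ then rewrites each summand as an algebraic model count, giving
\[
  \Pr \circ (\denote{e}g)[\tt \text{ or } \ff]
  = \AMC(\varphi \land \gamma \land R, w \cup \overline w)_{\Pr}
  + \AMC(\overline\varphi \land \gamma \land R, w \cup \overline w)_{\Pr}.
\]

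Next I would collapse the two model counts using inclusion--exclusion (Lemma~\ref{lemma:inc exc prob}) applied to $A = \varphi \land \gamma \land R$ and $B = \overline\varphi \land \gamma \land R$. The crossing term vanishes because $A \land B$ contains $\varphi \land \overline\varphi$ and is therefore unsatisfiable, so $\AMC(A \land B, w \cup \overline w)_{\Pr} = 0$; meanwhile $A \lor B$ simplifies by distributivity and $\varphi \lor \overline\varphi = \top$ to exactly $\gamma \land R$. This yields $\AMC(\gamma \land R, w \cup \overline w)_{\Pr}$.

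Finally I would strip off the reward conjunction $R = \bigwedge_{r \in R} r$. Since reward variables are introduced fresh and hence share no variables with $\gamma$, independent conjunction (Lemma~\ref{lemma:ind conj prob}) splits the count as $\AMC(\gamma, w \cup \overline w)_{\Pr} \times \AMC(R, w \cup \overline w)_{\Pr}$; and because every reward literal is weighted with probability component $1$ (that is, $w(r_k) = (1,k)$ and $w(\overline{r_k}) = (1,0)$ by \texttt{bc/reward}), the factor $\AMC(R, w \cup \overline w)_{\Pr} = \prod_{r \in R} w(r)_{\Pr} = 1$. Hence $\AMC(\gamma \land R, w \cup \overline w)_{\Pr} = \AMC(\gamma, w \cup \overline w)_{\Pr}$, which is the desired equality.

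The step I expect to need the most care is the second one, and specifically the bookkeeping that $\varphi$ and $\overline\varphi$ are genuinely complementary over the models of $\gamma \land R$: I want the $\tt$- and $\ff$-branches to be both mutually exclusive (so inclusion--exclusion contributes no overlap) and jointly exhaustive (so no accepting model is missed). This is tautological at the level of $\varphi \lor \overline\varphi = \top$, but I would state it explicitly so the reduction to $\gamma \land R$ is unambiguous, and I would also record that $R$ is disjoint from $\gamma$ to license the final independent-conjunction split. Everything else is a direct invocation of already-established results, so no new induction is required beyond what Theorem~\ref{thm:util pr correspondence} already supplies.
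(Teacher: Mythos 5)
Your proposal is correct and takes essentially the same route as the paper's own proof: decompose $[\tt \text{ or } \ff]$ into the disjoint $\tt$ and $\ff$ events, rewrite each via the two triangles of Theorem~\ref{thm:util pr correspondence}, then collapse the sum using inclusion--exclusion (Lemma~\ref{lemma:inc exc prob}) and the tautology $\varphi \lor \overline{\varphi}$, with $R$ dropping out because reward literals carry probability component $1$ (Lemma~\ref{lemma:ind conj prob}). Your write-up is in fact more explicit than the paper's, which compresses the last two steps into a single ``[Lemmas]'' annotation and elides the removal of $R$ entirely.
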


\begin{proof}
  We observe that $\Pr \circ (\denote{e}g)[\tt]$ 
  and $\Pr \circ (\denote{e}g)[\ff]$
  are disjoint events. Then by~\cref{thm:util pr correspondence}
  we get that 
  \begin{align*}
    &\Pr \circ (\denote{e}g)[\tt] +  \Pr \circ (\denote{e}g)[\tt]&\\
    &= \AMC(\varphi \land \gamma \land R)_{\Pr} + \AMC(\overline{\varphi} \land \gamma \land R)_{\Pr}&\\
    &\AMC(\varphi \land \gamma \lor \overline{\varphi} \land \gamma)_{\Pr}
    &\text{[Lemmas]}\\
    &\AMC(\gamma)_{\Pr}
    &[\Pr[\varphi \lor \overline{\varphi}] = 1]
  \end{align*}
  which concludes the proof.
\end{proof}

With this we prove, automatically, a corollary:

\begin{corollary}\label{cor:util pr correspondence}
  Let $\Gamma \proves e : \Giry \Bool$ a $\util$ expression. Let 
  $e \leadsto (\varphi, \gamma, w, R)$ via~\cref{fig:dappl full bc}.
  Let $f$ be the function defined in~\cref{thm:util pr correspondence}. 
  Let $g \in \denote{\Gamma}$.
  Then
  \begin{equation}
    \Pr \circ (\denote{e}g)[\tt \mid \texttt{not }\bot]
    = \frac
      {\Pr \circ (\denote{e}g)[\tt]}
      { \Pr \circ (\denote{e}g)[\tt \texttt{ or } \ff]}
     = \frac{\AMC(\varphi \land \gamma, w \cup \overline w)_{\Pr}}{\AMC(\gamma, w \cup \overline w)_{\Pr}}
  \end{equation}
\end{corollary}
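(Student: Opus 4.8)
.

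The plan is to derive this corollary essentially for free from the two results immediately preceding it, \cref{thm:util pr correspondence} and \cref{lemma:util pr bot}, combined with the bare definition of conditional probability; this is exactly why it is billed as a corollary. There are precisely two equalities to discharge: the leftmost one (the conditional probability equals the ratio of the unnormalized $\tt$-probability to the unnormalized ``not $\bot$''-probability), which is purely a statement about the distribution $\denote{e}g$, and the rightmost one (that this ratio equals the quotient of the two algebraic model counts), which is just substituting in the previously established correspondences.

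For the first equality I would unfold the definition of conditional probability. Writing $A$ for the event $\{(\tt,r) : r \in \R\}$ and $B$ for the event $\texttt{not }\bot$, we have $\Pr[A \mid B] = \Pr[A \cap B]/\Pr[B]$. The crucial observation, read off from the denotational semantics, is that the support of $\denote{e}g$ partitions into three disjoint outcomes — returning $\tt$, returning $\ff$, and $\bot$ — so that the event $\texttt{not }\bot$ coincides exactly with the event $\tt \texttt{ or } \ff$, and moreover $A \subseteq B$ since any trace that returns $\tt$ did not abort. Hence $A \cap B = A$, and so $\Pr \circ (\denote{e}g)[\tt \mid \texttt{not }\bot] = \Pr \circ (\denote{e}g)[\tt] / \Pr \circ (\denote{e}g)[\tt \texttt{ or } \ff]$, which is the middle expression.

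For the second equality I would treat numerator and denominator separately. The denominator is handled verbatim by \cref{lemma:util pr bot}, which already states $\Pr \circ (\denote{e}g)[\tt \texttt{ or } \ff] = \AMC(\gamma, w \cup \overline w)_{\Pr}$. For the numerator, the first commuting diagram of \cref{thm:util pr correspondence} gives $\Pr \circ (\denote{e}g)[\tt] = \mathsf{Prob}_{\varphi,\gamma,R}(f(g)) = \AMC(\varphi \land \gamma \land R, w \cup \overline w)_{\Pr}$, with $\overline w = f(g)$. The one remaining bookkeeping step is to discharge the stray $\land R$: the reward variables comprising $R$ are fresh and thus share no variables with $\varphi \land \gamma$, and each carries probability component $1$ on both of its literals, so $\AMC(R, w \cup \overline w)_{\Pr} = 1$ and \cref{lemma:ind conj prob} yields $\AMC(\varphi \land \gamma \land R, w \cup \overline w)_{\Pr} = \AMC(\varphi \land \gamma, w \cup \overline w)_{\Pr} \cdot \AMC(R, w \cup \overline w)_{\Pr} = \AMC(\varphi \land \gamma, w \cup \overline w)_{\Pr}$. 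Combining numerator and denominator gives the claimed quotient.

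The proof presents no genuine obstacle; the only subtle points are the two I have flagged. The first is confirming that the $\tt/\ff/\bot$ outcomes form a true partition, so that conditioning on $\texttt{not }\bot$ is meaningful and the intersection $A \cap B$ collapses to $A$. The second is discharging the $\land R$ factor via the probability-$1$ weighting of reward literals — the same reconciliation that \cref{lemma:util pr bot} already performs implicitly for the denominator. I would additionally remark on the degenerate case where $\AMC(\gamma, w \cup \overline w)_{\Pr} = 0$ (a program that aborts almost surely): there the conditional probability is undefined, and the identity is to be read under the same division-by-zero convention adopted elsewhere in the development, so no separate argument is needed.
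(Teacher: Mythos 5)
Your proof is correct and takes essentially the same route as the paper: the paper gives no explicit argument, deriving the corollary ``automatically'' from \cref{thm:util pr correspondence} and \cref{lemma:util pr bot}, and your two-step derivation (unfolding conditional probability over the disjoint $\tt$/$\ff$/$\bot$ outcomes, then substituting the two prior results) is exactly that intended automatic derivation. Your explicit discharging of the stray $\land R$ conjunct---via the probability-$1$ weights on reward literals and \cref{lemma:ind conj prob}---spells out a step the paper leaves implicit (it performs the same move inside the proof of \cref{lemma:util pr bot}), so if anything your write-up is slightly more careful than the original.
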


This Corollary is essentially a denotational version of the main theorem proven in~\citet{holtzen2020scaling}.
Now onto the good part.

\begin{proof}[Proof of~\cref{thm:util correspondence}]

  We claim that the same $f$ used for~\cref{thm:util pr correspondence} suffices. 
  Let $g \in \denote{\Gamma}$.
  By an application of~\cref{lemma:util pr bot}, it suffices to prove the unnormalized
  case. That is, let $\EU_{\mathsf{unn}}$ be the unnormalized expected utility, where,
  in contrast to~\cref{def:eu util},
  \begin{equation}
    \EU_{\mathsf{unn}}(\denote{e} g) (b)
      = \sum_{r \in \R} r \times (\denote{e}g)(b,r).
  \end{equation}
  It suffices to prove 
  \begin{equation}
    \EU_{\mathsf{unn}}(\denote{e} g)(\tt)
      = \AMC(\varphi \land \gamma \land R)_{\EU},
    \quad
    \EU_{\mathsf{unn}}(\denote{e} g)(\ff)
      = \AMC(\overline \varphi \land \gamma \land R)_{\EU}.    
  \end{equation}
  We induct on syntax once more.

  \begin{itemize}[leftmargin=*]
    \item If $e = \tt, \ff, \flip \theta, x,$ and $\return P$, 
    the expected utility is always zero,
    which proves the theorem.

    \item For $e = \reward k {e'}$, we observe that
    $$\denote{e}g = \lambda v. \begin{cases}
      (\denote{e'}g)(b,s-k) & v = (b,s)\\
      (\denote{e'}g)(v)  & \text{else}
    \end{cases} $$
    so in particular, writing $\mathcal D = \denote{e'}g$,
    \begin{align*}
      \EU_{\mathsf{unn}} \circ( \denote{e} g) (\tt)
        &= \sum_{r \in \R} r \times \mathcal D [(\tt, r - k)] &\\
        &= \sum_{r \in \R} (r + k) \times \mathcal D [(\tt, r)] &\text{[Rewriting]} \\
        &= \sum_{r \in R} r \times \mathcal D [(\tt, r)] 
          + k \sum_{r \in R}\mathcal D [(\tt, r)]. &\text{[Arithmetic]}
    \end{align*}
    Let $e \leadsto (\varphi, \gamma, w, R \cup r_k)$ as per \texttt{bc/reward}. Let $\overline w = f \denote{\Gamma}$. We see that
    \begin{align*}
      EU_{(\varphi, \gamma, R \cup \{r_k\})} (\overline w) 
        &=\AMC(\varphi \land \gamma \land R \land r_k, w \cup \overline w)_{\EU} &\\ 
        &=\AMC(\varphi \land \gamma \land R)_{\EU} \times \AMC(r_k)_{\Pr}
        + \AMC(\varphi \land \gamma \land R)_{\Pr} \times \AMC(r_k)_{\EU}
          &\text{[\cref{lemma:ind conj eu}]}\\
        &=\AMC(\varphi \land \gamma \land R)_{\EU}
          + \AMC(\varphi \land \gamma \land R)_{\Pr} \times k  
          &\text{[Evaluation]}\\
        &=\sum_{r \in R} r \times \mathcal D [(\tt, r)] + \AMC(\varphi \land \gamma \land R)_{\Pr} \times k
          &\text{[IH]}\\
        &=\sum_{r \in R} r \times \mathcal D [(\tt, r)] + k\sum_{r \in R} \mathcal D [(\tt, r)]  &\text{[\cref{thm:util pr correspondence}]}
    \end{align*}
    and the case is identical for $\ff$. 
    
    \item For $e = \ite x {e'} {e''}$, we case on $g(x)$. 
    WLOG assume $g(x) = \mathbf{TT}$ as the other case is symmetric. 
    Then 
    $$
    \denote{e} g = \denote{e'}g.
    $$
    Furthermore 
    $$e \leadsto ((x \land \varphi_t \land R_t \land \conjneg{R_e}) 
    \lor (\overline{x} \land \varphi_e \land R_e \land \conjneg{R_t}),
    (x \land \gamma_t) \lor (\overline x \land \gamma_e),
    w_t \cup w_e,
          \eset);$$
    writing $\mathcal D = \denote{e'} g$, $\overline w = f g$, and $\varphi, \gamma$ for the unnormalized and normalizing Boolean formulae we see that
    \begin{align*}
      EU_{(\varphi, \gamma, \eset)}(\overline w) 
        &=\AMC(\varphi \land \gamma)_{\EU}&\\
        &=\AMC((x \land \varphi_t \land \gamma_t \land R_t \land \conjneg{R_e}))
          &\text{[Evaluation, Lemmas]}\\
        &=\AMC((\varphi_t \land R_t \land \gamma_t))_{\EU}
        &\text{[\cref{lemma:ind conj eu},\cref{lemma:ind conj prob}]}\\
        &=\EU_{\mathsf{unn}}(\denote{e'} g) (\tt)
    \end{align*}
    which concludes the case via~\cref{appendix:util semantics}. 
    The case for $\ff$ is identical.

  \item For $e = \observe x {e'}$, we again case on $g(x).$ For the remainder of this case let $\mathcal D = \denote{e'} g$, $\overline w = f g$.
  
  \begin{itemize}
    \item If $g(x) = \mathbf{TT}$, then $\denote{e} g = \denote{e'} g$. Also, by following \texttt{bc/obs}, we get that 
    $$EU_{(\varphi, \gamma, R)}(\overline w) 
      = \AMC(\varphi \land R \land \gamma \land x)_{\EU} = \AMC(\varphi \land \gamma\land R)_{\EU}
      $$
    by~\cref{lemma:ind conj prob} and we are done by IH.
    \item If $g(x) = \mathbf{FF}$, then $g \denote{\Gamma} = \pmb{\bot}$. So $\EU \circ (\denote{e} g) (\tt) = 0$. Also, by following \texttt{bc/obs}, we get that 
    $$\AMC(\varphi \land x \land R \land \gamma )_{\Pr} 
    = 0$$
    by~\cref{lemma:ind conj prob} and we are done. 
  \end{itemize}
  the proof for $\ff$ is identical.
  \item For $e = \bind x {b} {e'}$, assume $\mathcal D, \mathcal B, \mathcal{E}_{\mathbf{TT}}, \mathcal{E}_{\mathbf{FF}}$ from the proof of~\cref{thm:util pr correspondence}.
  Then we can derive 
  \begin{align*}
    \EU_{\mathsf{unn}} \circ \mathcal D (\tt)
      &= \sum_{r \in \R} r \times \mathcal D (\tt,r) 
      &\text{[Definition]}\\
      &=\sum_{r \in \R} r \times
        \left[
          \sum_{s \in \R} \mathcal B [(\tt,s)] \mathcal{E}_{\mathbf{TT}} (\tt, s-r)
          + \sum_{s \in \R} \mathcal B [(\ff,s)] \mathcal{E}_{\mathbf{FF}} (\tt, s-r)
        \right]
      &\text{[Unfolding]}\\
      &=\sum_{r \in \R} \sum_{s \in \R} r \times
         \mathcal B [(\tt,s)] \mathcal{E}_{\mathbf{TT}} (\tt, s-r)
      &\\
      &\quad + 
        \sum_{r \in \R }\sum_{s \in \R} r \times \mathcal B [(\ff,s)] \mathcal{E}_{\mathbf{FF}} (\tt, s-r)
      &\text{[Rewriting]}\\
      &=\sum_{r \in \R} \sum_{s \in \R} (r + k) \times
         \mathcal B [(\tt,r)] \mathcal{E}_{\mathbf{TT}} (\tt, k)
      &\\
      &\quad + 
        \sum_{r \in \R }\sum_{s \in \R} (r + k) \times \mathcal B [(\ff,r)] \mathcal{E}_{\mathbf{FF}} (\tt, k)
      &\text{[Rewriting]}\\
      &=\sum_{r \in \R} \sum_{s \in \R} r \times 
         \mathcal B [(\tt,r)] \mathcal{E}_{\mathbf{TT}} (\tt, k)
      + \sum_{r \in \R} \sum_{s \in \R} k \times 
         \mathcal B [(\tt,r)] \mathcal{E}_{\mathbf{TT}} (\tt, k)
      &\\
      &\quad + 
        \sum_{r \in \R} \sum_{s \in \R} r \times 
          \mathcal B [(\ff,r)] \mathcal{E}_{\mathbf{FF}} (\tt, k)
        + \sum_{r \in \R} \sum_{s \in \R} k \times 
          \mathcal B [(\ff,r)] \mathcal{E}_{\mathbf{FF}} (\tt, k)
      &\text{[Rewriting]}\\
      &=\AMC(\varphi_b \land R_b \land \gamma_b)_{\EU} 
        \times \AMC(\varphi_{e'}[x/T] \land \gamma_{e'}[x/T] \land R_{e'})_{\Pr}
      &\\
      & \quad + \AMC(\varphi_b \land R_b \land \gamma_b)_{\Pr} 
        \times \AMC(\varphi_{e'}[x/T] \land \gamma_{e'}[x/T] \land R_{e'})_{\EU}
      &\\
      &\quad + \AMC(\overline{\varphi_b} \land R_b \land \gamma_b)_{\EU} 
        \times \AMC(\varphi_{e'}[x/F] \land \gamma_{e'}[x/F] \land R_{e'})_{\Pr}
      &\\
      & \quad + \AMC(\overline{\varphi_b} \land R_b \land \gamma_b)_{\Pr} 
        \times \AMC(\varphi_{e'}[x/F] \land \gamma_{e'}[x/F] \land R_{e'})_{\EU}
      &\text{[IH]}
  \end{align*}
  at which point routine applications of~\cref{lemma:inc exc eu,lemma:ind conj eu}
  complete the proof. The case for $\ff$ is identical.
  \end{itemize}

\end{proof}

\newpage
\section{Supplementary material for Section \ref{sec:pineappl}}

\subsection{$\pineappl$ Ergonomics and Syntactic Sugar}
\label{appendix:pineappl-sugar}

We endow the core of $\pineappl$ with a few pieces of syntactic sugar to aid in
modeling.

\subsubsection{Support for discrete distributions} 
Similar to $\dappl$, we extend $\pineappl$ with support for discrete
distributions, rather than a simple flip, a program can sample from a discrete
set of events, with either a uniform prior, or custom priors on each event
(provided that those priors sum to 1). This is accomplished via a one-hot
encoding, similar to \citep{holtzen2020scaling}. This also introduces a
predicate $is$ which tests for the presence of a categorical-variable.  

\subsubsection{Support for multiple queries} 
As a result of $\pineappl$ compiling the statements of a program to a boolean
formula, it is trivial to extend the program with the ability to make multiple
queries over the same set of statements. Full $\pineappl$ programs can end with 
any number of query expressions, and the results are returned as a list. 

\subsubsection{Support for MMAP as a terminal query}
In addition to using MMAP as a first-class primitive, it can also be useful to 
obtain the map state of some variables at the end of the program. This can easily
done using BBIR directly at the end of the program.

\subsubsection{Support for bounded loops}
We implement bounded loops in $\pineappl$ as a hygenic macro expansion of the
code inside the loop. Loops relax $\pineappl$'s demand for entirely fresh names
at the source-level; after expansion, the compiler will enforce the freshness
constraint with hygiene, potential introduction of join points for loops that
occurr within the branches of an if-statement, and a global renaming pass to
ensure that all names bound in the loop are referenced appropriately in later
code. \cref{fig:loop-pineappl-sugar} is a simple $\pineappl$ program that uses
a loop and \cref{fig:loop-pineappl-desugar} shows its expansion. Note, that
\pineapplcode{pr(a)} is rewritten to \pineapplcode{pr(a2)} in the renaming pass
to ensure that the query refers to the ``latest'' value of \pineapplcode{a}.
\cref{fig:loop-pineappl-ite-sugar} is a $\pineappl$ program containing an
if-statement where each branch has a loop. Since the loop expansion binds fresh
names, variables bound in the loop must be explicitly joined at the end of the
if-statement, and then global renaming pass utilizes the joined name for
subsequent uses of the variable.  Clearly, loops expand to syntactically valid
$\pineappl$ programs. Determining the last binding introduced by a loop for
join-points and rewriting can be done as a lightweight analysis at expansion time.

\begin{figure}
  \begin{subfigure}[t]{0.45\textwidth}
    \begin{pineapplcodeblock}
      a = flip 0.5;
      loop 3 {
        tmp = flip 0.1;
        a = a || tmp; 
      }
      pr(a)
    \end{pineapplcodeblock}
    \caption{A simple $\pineappl$ program with a loop}
    \label{fig:loop-pineappl-sugar}
  \end{subfigure}
  \begin{subfigure}[t]{0.45\textwidth}
    \begin{pineapplcodeblock}
      a = flip 0.5;
      tmp0 = flip 0.1;
      a0 = a || tmp0; 
      tmp1 = flip 0.1;
      a1 = a0 || tmp1;
      tmp2 = flip 0.1;
      a2 = a1 || tmp2;
      pr(a2)
    \end{pineapplcodeblock}
    \caption{An expansion and renaming of the program from (a).}
    \label{fig:loop-pineappl-desugar}
  \end{subfigure}
  \begin{subfigure}[t]{0.45\textwidth}
    \begin{pineapplcodeblock}
      x = flip 0.5;
      y = flip 0.5;
      if x {
        loop 2 {
          tmp = flip 0.3;
          y = y && tmp;
        }
      }
      else {
        loop 3 {
          tmp = flip 0.7;
          y = y || tmp; 
        }
      }
      pr(y)
    \end{pineapplcodeblock}
    \caption{A $\pineappl$ program with loops in both branches of an if statement.}
    \label{fig:loop-pineappl-ite-sugar}
  \end{subfigure}
  \begin{subfigure}[t]{0.45\textwidth}
    \begin{pineapplcodeblock}
      x = flip 0.5;
      y = flip 0.5;
      if x {
        tmp0 = flip 0.2;
        y0 = y && tmp0;
        tmp1 = flip 0.2;
        y1 = y0 && tmp1;
      }
      else {
        tmp2 = flip 0.7;
        y2 = y || tmp2;
        tmp3 = flip 0.7;
        y3 = y2 || tmp3;
        tmp4 = flip 0.7;
        y4 = y3 || tmp4;
      }
      tmp_j = (x && tmp1) || (!x && tmp4);
      y_j = (x && y1) || (!x && y4);
      pr(y_j)
    \end{pineapplcodeblock}
    \caption{An expansion, introduction of join-points, and renaming of the program
    from (c).}
    \label{fig:loop-pineappl-ite-desugar}
\end{subfigure}
\caption{Examples of loop expansion in $\pineappl$}
\label{fig:loop-pineappl}
\end{figure}


\subsection{The $\leadsto_E$ relation for $\pineappl{}$}
\label{appendix:pineappl bc}

See~\cref{fig:pineappl-e}.

\begin{figure}
\begin{mdframed}
  {\footnotesize
  \begin{align*}
    \infer[\texttt{e/x}]
    {
      \texttt{x} \leadsto_E x
    }
    {\mathrm{fresh} \; x}
    \qquad
    \infer[\texttt{e/tt}]
    {
      \texttt{tt} \leadsto_E \top
    }
    {}
    \qquad
    \infer[\texttt{e/ff}]
    {
      \texttt{ff} \leadsto_E \bot
    }
    {}
  \end{align*}
  \begin{align*}
    \infer[\texttt{e/}\land]
    {
      \texttt{e1} \land \texttt{e2}
    \leadsto_E e_1 \land e_2
    }
    { \texttt{e1} \leadsto_E e_1 & \texttt{e2} \leadsto_E e_2}
    \qquad
    \infer[\texttt{e/}\lor]
    {
      \texttt{e1} \lor \texttt{e2} \leadsto_E e_1 \lor e_2
    }
    { \texttt{e1} \leadsto_E e_1 & \texttt{e2} \leadsto_E e_2}
    \qquad
    \infer[\texttt{e/}\neg]
    {
      \neg \texttt{e} \leadsto_E \neg e
    }
    { \texttt{e} \leadsto_E e}
  \end{align*}
  }
\end{mdframed}
\caption{The $\leadsto_E$ relation for $\pineappl{}$ expressions.}
\label{fig:pineappl-e}
\end{figure}

\subsection{Full Boolean compilation rules for $\pineappl{}$}
\label{appendix:pineappl-compl-complete}

See~\cref{fig:pineappl-compl-complete}.

\begin{figure}
\begin{mdframed}
  {\footnotesize
  \begin{align*}
    \infer[\texttt{bc/flip}]
    {
        ((\texttt{x = flip} \ \theta, \mathcal{F}, w) \leadsto
        (\{(x, f)\} \cup \mathcal{F},
        w \cup \{x \mapsto (1, 1), f \mapsto (\theta, 1-\theta)\}))
    }
    {\mathrm{fresh} \; f}
  \end{align*}
  \begin{align*}
    \infer[\texttt{bc/assn}]
    {
      (\texttt{x = e}, \mathcal{F}, w) \leadsto
      (\{(x,  \varphi)\} \cup \mathcal{F}, w \cup \{x \mapsto (1, 1)\})
    }
    {
      \texttt e \leadsto_E \varphi
    }
  \end{align*}
  \begin{align*}
    \infer[\texttt{bc/seq}]
    {
        (\texttt{$s_1$; $s_2$}, w) \leadsto (\mathcal{F}'', w'')
    }
    {
        (s_1, \mathcal{F}, w) \leadsto (\mathcal{F}', w')
        & (s_2, \mathcal{F}', w') \leadsto (\mathcal{F}'', w'')
    }
  \end{align*}
  \begin{align*}
    \infer[\texttt{bc/if}]
    {
      (\texttt{if e \{$s_1$\} else \{$s_2$\}}, \mathcal{F}, w) \leadsto
    (\{(x_i, (\chi \land \varphi_i \lor \neg\chi \land \psi_i))\} \cup \mathcal{F}, w_1 \cup w_2)
    }
    {
        \texttt e \leadsto_E \chi
        & (s_1, \mathcal{F}, w) \leadsto (\{(x_i, \varphi_i)\} \cup \mathcal{F}, w_1)
        & (s_2, \mathcal{F}, w) \leadsto (\{(x_i, \psi_i)\} \cup \mathcal{F}, w_2)
    }
  \end{align*}
  \begin{align*}
    \infer[\texttt{bc/mmap}]
    {
      (\vec{\texttt{m}}\texttt{ = mmap }\vec{\texttt{x}}, \mathcal{F}, w) \leadsto
      (\mathcal{F} \cup \{(m_i, k_i) \}, w \cup w_{M})
    }
    {
      \mathrm{fresh} \; k_i
      & \vec A = MMAP(\{\bigwedge_{(x, \varphi) \in \mathcal{F}} x \leftrightarrow \varphi, \eset\}, \vec{\texttt{x}}, w)
      &
      w_{M} = \{m_i \mapsto (1,1), k_i \mapsto A_i \}
    }
  \end{align*}
  \begin{align*}
    \infer[\texttt{bc/mmap/with}]
    {
      (\vec{\texttt{m}}\texttt{ = mmap }\vec{\texttt{x}} \ \pineapplcode{ with \{e\}}, \mathcal{F}, w) \leadsto
      (\mathcal{F} \cup \{(m_i, k_i) \}, w \cup w_{M})
    }
    {
      \mathrm{fresh} \; k_i
      & \texttt e \leadsto_E \psi
      & \vec A = MMAP(\{\bigwedge_{(x, \varphi) \in \mathcal{F}} x \leftrightarrow \varphi, \psi\}, \vec{\texttt{x}}, w)
      &
      w_{M} = \{m_i \mapsto (1,1), k_i \mapsto A_i \}
    }
  \end{align*}
  \begin{align*}
    \infer[\texttt{bc/pr}]
    {
      \texttt{s; Pr(e)} \leadsto_P (\varphi \land \left( \bigwedge_{(x, \varphi) \in \mathcal{F}} x \leftrightarrow \varphi \right), \top,w)
    }
    {
      (\texttt{s}, \eset, \eset) \leadsto (\mathcal F, w)
      &
      \texttt{e} \leadsto_E \varphi
    }
  \end{align*}
  \begin{align*}
    \infer[\texttt{bc/pr/with}]
    {
      \pineapplcode{s; Pr(e1) with \{e2\}} \leadsto_P (\varphi \land \left(\bigwedge_{(x, \varphi) \in \mathcal{F}} x \leftrightarrow \varphi \right), \psi, w)
    }
    {
      (s, \eset, \eset) \leadsto (\mathcal F, w)
      &
      e \leadsto_E \varphi
    }
  \end{align*}
  }
\end{mdframed}
\caption{Full Boolean compilation rules for $\pineappl{}$ statements and programs.}
\label{fig:pineappl-compl-complete}
\end{figure}

\subsection{Proof of Theorem~\ref{thm:pineappl correctness}}
\label{appendix:proof-pineappl-correctness}

The proof is by way of simulation.

\begin{definition}[$\sim$]
Let $\mathcal D$ be a distribution over assignments to variables,
$\mathcal F$ a set of formulae of shape $x \leftrightarrow \varphi$,
and $w$ a weight function
of literals in $\mathcal F$ to the reals.
Let the variables in $\mathcal F$ be a superset of those in $\mathcal D$.
Then we define $D \sim (F,w)$ if and only if
for all $\sigma \in \dom(D)$,
\begin{equation}
  D(\sigma) = \left[ \prod_{\ell \in \sigma} w(\ell) \right]
    \AMC_{\R} \paren{
      \bigwedge_{(x, \varphi) \in \mathcal F} (x \leftrightarrow \varphi)|_{\sigma} , w
    }
  .
\end{equation}
\end{definition}

On this relation we can define a helpful Lemma:

\begin{lemma}\label{lemma:pineappl simulate expressions}
  Let $D \sim (F,w)$.
  Let $e$ be a Boolean expression in $\pineappl$ on which $\Pr_{\mathcal D} (e)$ is well-defined.
  Let $e \leadsto_E \varphi$. Then the following holds:
  \begin{equation}
    \Pr_{\mathcal D} (e) = \AMC\left(\varphi \land \paren{\bigwedge_{(x, \varphi) \in \mathcal F} x \leftrightarrow \varphi}, w\right).
  \end{equation}
\end{lemma}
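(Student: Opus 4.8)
The plan is to prove the identity by a single direct computation rather than by induction on $e$, reducing both sides to the same sum indexed by assignments to the program variables. Write $\psi = \bigwedge_{(x,\varphi) \in \mathcal F}(x \leftrightarrow \varphi)$ for the model formula, let $S$ be the set of program variables---the left-hand sides occurring in $\mathcal F$, which is exactly the set of variables that assignments $\sigma \in \dom(\mathcal D)$ range over---and let $A$ collect the auxiliary variables introduced by \texttt{bc/flip}. Every variable of $\varphi \land \psi$ lies in $S \cup A$.

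First I would record the homomorphism property of $\leadsto_E$. Every rule of \cref{fig:pineappl-e} sends a $\pineappl$ connective to the identically shaped propositional connective and each variable to its program-variable counterpart, so a routine structural induction on $e$ shows that the compiled formula $\varphi$ mentions only variables in $S$ and that, for every total assignment $\sigma \in inst(S)$, we have $e[\sigma] = \top$ if and only if $\sigma \models \varphi$. Hence $\Pr_{\mathcal D}(e) = \sum_{\sigma \in \dom(\mathcal D),\, \sigma \models \varphi} \mathcal D(\sigma)$.

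The second and central step is an AMC factorization over the split $S \cup A$. Since $\varphi$ depends only on $S$, each model $m$ of $\varphi \land \psi$ decomposes uniquely as $m = \sigma \cup \tau$ with $\sigma \in inst(S)$ and $\tau$ an assignment to $A$, subject to $\sigma \models \varphi$ and $\tau \models \psi|_\sigma$ (conditioning $\psi$ on all of $S$ leaves a formula over $A$ alone); moreover the model weight factors multiplicatively as $\prod_{\ell \in m} w(\ell) = \big(\prod_{\ell \in \sigma} w(\ell)\big)\big(\prod_{\ell \in \tau} w(\ell)\big)$. Grouping the models by their restriction $\sigma$ to $S$---a Shannon expansion over $S$---therefore gives
\begin{equation*}
  \AMC(\varphi \land \psi, w) = \sum_{\sigma \models \varphi} \left(\prod_{\ell \in \sigma} w(\ell)\right)\AMC(\psi|_{\sigma}, w).
\end{equation*}
By the hypothesis $\mathcal D \sim (\mathcal F, w)$, the summand $\big(\prod_{\ell \in \sigma} w(\ell)\big)\AMC(\psi|_{\sigma}, w)$ is by definition exactly $\mathcal D(\sigma)$, so the right-hand side collapses to $\sum_{\sigma \models \varphi} \mathcal D(\sigma)$, which is precisely $\Pr_{\mathcal D}(e)$ from the first step.

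I expect the factorization step to be the main obstacle: one must fix the partition of $vars(\varphi \land \psi)$ into $S$ and $A$ and verify that conditioning $\psi$ on a total $S$-assignment $\sigma$ yields a formula over $A$ only, so that both the weight and the model count genuinely factor. The homomorphism step and the final substitution are then immediate. A minor bookkeeping point is that assignments $\sigma$ with $\mathcal D(\sigma) = 0$ contribute nothing to either side, so whether the index ranges over $\dom(\mathcal D)$ or all of $inst(S)$ is immaterial.
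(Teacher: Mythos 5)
Your proof is correct in substance, and it takes a genuinely different (and far more explicit) route than the paper's, whose entire proof is the sentence that the claim follows by ``straightforward induction on the syntax of $e$.'' Your decomposition --- (i) a structural induction showing that $\leadsto_E$ is a homomorphism, so that $e[\sigma] = \top$ iff $\sigma \models \varphi$ for total assignments $\sigma$ to the program variables $S$, followed by (ii) a Shannon expansion of $\AMC(\varphi \land \psi, w)$ over $S$ and substitution of the $\sim$ hypothesis --- is arguably the honest version of that induction. Indeed, the displayed identity is not compositional in $e$: knowing $\Pr_{\mathcal D}(e_1) = \AMC(\varphi_1 \land \psi, w)$ and $\Pr_{\mathcal D}(e_2) = \AMC(\varphi_2 \land \psi, w)$ does not yield the identity for $e_1 \land e_2$, because the events are correlated through $\mathcal D$. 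So any induction on $e$ must first strengthen to a per-assignment statement, which is exactly your step (i); all remaining content lives in your factorization (ii). Two small points there: the auxiliary set $A$ should contain not only the \texttt{bc/flip} variables but also the $k_i$ introduced by \texttt{bc/mmap} and \texttt{bc/mmap/with} (anything in $vars(\psi) \setminus S$); and if some auxiliary variable disappears from $\psi|_\sigma$ after conditioning, the factorization implicitly uses that its two literal weights sum to $1$, which holds for both kinds of auxiliary variables.

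The one place your argument is genuinely too quick is the closing ``bookkeeping'' remark. The assignments you must discard are not those with $\mathcal D(\sigma) = 0$ but those with $\sigma \in inst(S) \setminus \dom(\mathcal D)$, about which the hypothesis $\mathcal D \sim (\mathcal F, w)$ says nothing at all: your Shannon expansion ranges over all $\sigma \in inst(S)$ with $\sigma \models \varphi$, while $\sim$ only converts the summands indexed by $\dom(\mathcal D)$. This matters because the operational semantics really does produce distributions whose domain is a proper subset of $inst(S)$ (the rules \texttt{s/assn}, \texttt{s/mmap}, and \texttt{s/mmap/with} extend each trace with exactly one value of the newly bound variable), and $\sim$ alone does not force the stray summands to vanish --- e.g., take $\mathcal F = \{(x, f)\}$ with $w(f) = w(\overline f) = 0.5$ and $\mathcal D$ supported only on $[x \mapsto \top]$ with mass $0.5$; then $\mathcal D \sim (\mathcal F, w)$, yet $\Pr_{\mathcal D}(\neg x) = 0$ while $\AMC(\overline x \land (x \leftrightarrow f), w) = 0.5$. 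The gap is fillable precisely because $(\mathcal F, w)$ arises from compilation in lockstep with $\mathcal D$: for $\sigma$ inconsistent with a deterministic assignment, the clause $(x \leftrightarrow \varphi_e)|_\sigma$ reduces to a contradiction, so $\AMC(\psi|_\sigma, w) = 0$; for $\sigma$ on the non-selected branch of an \texttt{mmap}, the weight of the forced $k_i$ literal is $0$; and flip-bound variables have both extensions in $\dom(\mathcal D)$. You should add this observation (or an equivalent normalization argument), since it is the only step that cannot be recovered from the stated hypothesis alone --- a defect your write-up shares with, but at least surfaces more visibly than, the paper's one-line proof.
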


\begin{proof}
  The proof is a straightforward induction on the syntax of $e$.
\end{proof}

Now, we have the necessary machinery to prove the theorem.
Let $s;q$ be a $\pineappl{}$ program. We first prove the following.

\begin{theorem}\label{thm:pineappl statements invariant}
  Let $D, \mathcal F, w$ such that $D \sim (\mathcal F, w)$.
  Let $(s, D) \Downarrow D'$ and $(s, \mathcal F, w) \leadsto (\mathcal F', w')$.
  Then $D' \sim (\mathcal F', w')$.
\end{theorem}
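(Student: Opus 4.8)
The plan is to induct on the derivation of $(s, \mathcal D) \Downarrow \mathcal D'$, equivalently on the structure of $s$, carrying the matching compilation derivation $(s, \mathcal F, w) \leadsto (\mathcal F', w')$ alongside. Unfolding the definition of $\sim$, in each case it suffices to check, for every $\sigma' \in \dom(\mathcal D')$, that $\mathcal D'(\sigma')$ equals $[\prod_{\ell \in \sigma'} w'(\ell)]\,\AMC_{\R}(\bigwedge_{(y,\psi) \in \mathcal F'}(y \leftrightarrow \psi)|_{\sigma'}, w')$. The workhorse in every case is a factorization observation: each compilation rule extends $\mathcal F$ with a biconditional $x \leftrightarrow \psi$ whose fresh variable $x$ (and any fresh auxiliary, such as the flip variable) does not occur in $\bigwedge_{(y,\psi)\in\mathcal F}$, so once $\sigma'$ is substituted the new conjunct shares no variables with the inherited conjunction and the AMC splits as a product by the real-valued analogue of~\cref{lemma:ind conj prob}. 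The inherited factor is exactly $\AMC_{\R}(\bigwedge_{\mathcal F}(\cdots)|_{\sigma})$, which the induction hypothesis $\mathcal D \sim (\mathcal F, w)$ identifies with $\mathcal D(\sigma)/\prod_{\ell \in \sigma} w(\ell)$.

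First I would dispatch the base cases. For \texttt{s/flip} with \texttt{bc/flip}, the added conjunct $x \leftrightarrow f$ restricted to $\sigma' = \sigma \cup [x \mapsto b]$ forces $f = b$, contributing the factor $w'(f = b)$, which is $\theta$ when $b = \top$ and $1-\theta$ when $b=\bot$; since $w'(x) = (1,1)$ this reproduces exactly the $\theta \times \mathcal D(\sigma)$ and $(1-\theta)\times\mathcal D(\sigma)$ weights of \texttt{s/flip}. For \texttt{s/assn} with \texttt{bc/assn}, the conjunct $x \leftrightarrow \varphi$ (where $\texttt e \leadsto_E \varphi$) pins $x$ to the truth value of $\texttt e$ under $\sigma$; here I would invoke \cref{lemma:pineappl simulate expressions} to relate $p = \Pr_{\mathcal D}[\texttt e]$ to the AMC-mass of $\varphi$, so that the guarded masses $p\times\mathcal D(\sigma)$ and $(1-p)\times\mathcal D(\sigma)$ match the two sides of $\sim$.

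The inductive cases are \texttt{s/seq}, \texttt{s/if}, and the two \texttt{mmap} rules. Sequencing is immediate: applying the hypothesis to $s_1$ gives an intermediate $\sim$, and applying it to $s_2$ from that intermediate pair yields the result. The conditional is the main obstacle. Here the operational rule forms the mixture $\mathcal D'(\sigma) = p\,\mathcal D_1(\sigma) + (1-p)\,\mathcal D_2(\sigma)$ while \texttt{bc/if} replaces the per-branch biconditionals $x_i \leftrightarrow \varphi_i$ and $x_i \leftrightarrow \psi_i$ by the single guarded conjunct $x_i \leftrightarrow (\chi \land \varphi_i \lor \neg\chi \land \psi_i)$. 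I would prove agreement by casing the AMC of the combined formula on the guard $\chi$ (the compilation of $\texttt e$): on models where $\chi$ holds the combined conjunct collapses to $x_i \leftrightarrow \varphi_i$, and where $\chi$ fails it collapses to $x_i \leftrightarrow \psi_i$, so the real-valued analogues of \cref{lemma:ind conj prob,lemma:inc exc prob}, together with the two induction hypotheses $\mathcal D_1 \sim (\{(x_i,\varphi_i)\}\cup\mathcal F, w_1)$ and $\mathcal D_2 \sim (\{(x_i,\psi_i)\}\cup\mathcal F, w_2)$ and \cref{lemma:pineappl simulate expressions} (to identify $p$ with the AMC-mass of $\chi$), recombine into precisely $p\,\mathcal D_1 + (1-p)\,\mathcal D_2$.

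Finally, for \texttt{s/mmap} and \texttt{s/mmap/with} the effect on both sides is to force each $m_i$ to a fixed value, so the argument is a forcing/factorization step like the flip case; the only substantive point is that the value forced operationally, $\vec A = MMAP_{\mathcal D}(\vec{\texttt x})$, coincides with the value computed by the compiled $MMAP(\{\bigwedge_{(x,\varphi)\in\mathcal F} x \leftrightarrow \varphi, \cdot\}, \vec{\texttt x}, w)$ that \cref{algorithm:bb} solves. This coincidence is where I expect the second difficulty: it requires combining \cref{thm:soundness of bb} (that \cref{algorithm:bb} correctly solves the MMAP objective of \cref{eq:bbir-mmap}) with \cref{lemma:pineappl simulate expressions} and the induction hypothesis $\mathcal D \sim (\mathcal F, w)$, so that the conditional marginals $\mathcal D(\sigma \mid e)$ being maximized operationally are exactly the AMC-ratios being maximized by the compiled query.
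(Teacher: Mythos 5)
Your proposal is correct and follows essentially the same route as the paper's proof: induction on the structure of $s$ (with the matching compilation derivation), using freshness-based factorization via the real-valued analogue of \cref{lemma:ind conj prob} for the \texttt{flip}/\texttt{assn} cases, \cref{lemma:pineappl simulate expressions} to identify guard probabilities with AMC masses, the algebraic recombination of the two branch biconditionals for \texttt{s/if}, and the coincidence of the operational $MMAP_{\mathcal D}$ with the compiled MMAP objective for the \texttt{mmap} cases. The only cosmetic difference is that the paper's \texttt{mmap} case stops at equating the two MMAP objectives (leaving algorithmic correctness to \cref{thm:soundness of bb} separately), whereas you fold that invocation into the case itself, which is harmless.
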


\begin{proof}
  We induct on syntax.
  \begin{itemize}[leftmargin=*]
    \item If $s = \pineapplcode{x = flip }\ \theta$, then observe that $\mathcal F' =\mathcal F \cup \{x \leftrightarrow f_\theta\}$.
    For any trace where $x \mapsto \top$, we get
    \begin{align}
      D'(\sigma \cup \{x \mapsto \top\})
        &= \theta \times D(\sigma)\\
        &= w(f_{\theta}) \times
        \left[ \prod_{\ell \in \sigma} w(\ell) \right]
        \AMC_{\R} \paren{
          \bigwedge_{(x, \varphi) \in \mathcal F} (x \leftrightarrow \varphi)|_{\sigma} , w
        }
        \\
        &= \AMC_{\R}((x \leftrightarrow f_{\theta})\mid_{x = \top}, w) \times
        \left[ \prod_{\ell \in \sigma \cup \{x \mapsto \top\} } w(\ell) \right]
        \AMC_{\R} \paren{
          \bigwedge_{(x, \varphi) \in \mathcal F} (x \leftrightarrow \varphi)|_{\sigma} , w
        }
        \\
        &=
        \left[ \prod_{\ell \in \sigma \cup \{x \mapsto \top\} } w(\ell) \right]
        \AMC_{\R} \paren{
          (x \leftrightarrow f_{\theta}) \land \bigwedge_{(x, \varphi) \in \mathcal F} (x \leftrightarrow \varphi)|_{\sigma \cup \{x \mapsto \top\}} , w
        }
    \end{align}
    where (49) is the inductive hypothesis and (51) used~\cref{lemma:ind conj prob}.
    The case when $x \mapsto \bot$ is symmetrical.

    \item If $s = \pineapplcode{x = e}$, let $e \leadsto_E \chi$.
    Then observe that $\mathcal F' =\mathcal F \cup \{x \leftrightarrow \varphi\}$.
    Then for any trace where $e[\sigma] = \top$, we get
    \begin{align}
      D'(\sigma \cup \{x \mapsto \top\})
        &= \Pr_{\mathcal D}[e] \times D(\sigma)\\
        &= \AMC(\chi \land \bigwedge_{(x, \varphi) \in \mathcal F} (x \leftrightarrow \varphi), w) \nonumber\\
        &\quad \times
        \left[ \prod_{\ell \in \sigma} w(\ell) \right]
        \AMC_{\R} \paren{
          \bigwedge_{(x, \varphi) \in \mathcal F} (x \leftrightarrow \varphi)|_{\sigma} , w
        }
        \\
        &= \AMC_{\R}((y \leftrightarrow \chi)\mid_{y = \top} \land \bigwedge_{(x, \varphi) \in \mathcal F} (x \leftrightarrow \varphi), w) \nonumber\\
        &\quad \times
        \left[ \prod_{\ell \in \sigma \cup \{x \mapsto \top\} } w(\ell) \right]
        \AMC_{\R} \paren{
          \bigwedge_{(x, \varphi) \in \mathcal F} (x \leftrightarrow \varphi)|_{\sigma} , w
        }
        \\
        &=
        \left[ \prod_{\ell \in \sigma \cup \{x \mapsto \top\} } w(\ell) \right]
        \AMC_{\R} \paren{
         \left[(x \leftrightarrow \varphi)| \land \bigwedge_{(x, \varphi) \in \mathcal F} (x \leftrightarrow \varphi)\right]\mid_{\sigma \cup \{x \mapsto \top\}} , w
        }
    \end{align}
    where (53) follows from~\cref{lemma:pineappl simulate expressions}.
    (55) is valid because $\varphi$ consists exclusively of variables occuring in $\sigma$,
    so $\bigwedge_{(x, \varphi) \in \mathcal F} (x \leftrightarrow \varphi)|_{\sigma}$ has no variables in common.
    Furthermore the restriction of $\bigwedge_{(x, \varphi) \in \mathcal F} (x \leftrightarrow \varphi)$ to only those
    that satisfy $\sigma \cup \{x \cup \top\}$ eliminates the larger.x
    The case when $x\mapsto \bot$ is symmetrical.
    \item If $s = \pineapplcode{s1 ; s2}$, the proof is straightforward and is omitted.
    \item If $s = \pineapplcode{if e \{s1\} else  \{s2\}}$, let $e \leadsto_E \varphi$, then let
    \begin{itemize}
      \item $e \leadsto_E \chi$,
      \item $(s_1, \mathcal D) \Downarrow D_1$,
      \item $(s_2, \mathcal D) \Downarrow D_2$,
      \item $(s_1, \mathcal F, w) \leadsto (\mathcal F \cup \{x_i \leftrightarrow \varphi_i\}, w_1)$, and
      \item $(s_2, \mathcal F, w) \leadsto (\mathcal F \cup \{x_i \leftrightarrow \psi_i\}, w_2)$.
    \end{itemize}
    Without loss of generality assume that $D_1$ and $D_2$ are over the same domain.
    This is possible because if there exists some variable $v$ such that $v \in \sigma$ in which $D_1(\sigma)$ is defined but $D_2(\sigma)$ is not,
    then we can extend all $\tau \in \dom(D_2)$ with $v$, and vice versa. Similarly without loss of generality assume that the $w_1$ and $w_2$ have the same domain.

    Let $I = \bigwedge_{(x, \varphi) \in \mathcal F} (x \leftrightarrow \varphi)$.
    Then we can deduce, for some $\sigma$,
    \begin{align}
      \mathcal D' (\sigma)
        &= p \times \mathcal D_1(\sigma) + (1-p) \times \mathcal D_2(\sigma)
        \\
        &= \AMC(\chi \land I, w) \times \mathcal D_1(\sigma) + \AMC(\overline \chi \land I, w) \times \mathcal D_2(\sigma)
        \\
        &= \AMC(\chi \land I, w) \times
          \left[ \prod_{\ell \in \sigma } w(\ell) \right]
            \AMC_{\R} \paren{
              \{x_i \leftrightarrow \varphi_i\} \mid_{\sigma} \land I|_{\sigma} , w
          }
        \\
        &\quad+ \AMC(\overline \chi \land I, w) \times
          \left[ \prod_{\ell \in \sigma} w(\ell) \right]
            \AMC_{\R} \paren{
              \{x_i \leftrightarrow \varphi_i\} \mid_{\sigma} \land I|_{\sigma} , w
          }
        \\
        &= \left[ \prod_{\ell \in \sigma} w(\ell) \right] \nonumber \\
        &\quad \times
            \AMC_{\R} \paren{
              \chi \land I \land \paren{\{x_i \leftrightarrow \varphi_i\} \mid_{\sigma} \land I|_{\sigma}}
              \lor
              \overline \chi \land I \land \paren{\{x_i \leftrightarrow \varphi_i\} \mid_{\sigma} \land I|_{\sigma}} , w
            }.
    \end{align}
    Consider the formula within the AMC in (60). We observe that
    \begin{align}
      &\chi \land I \land \paren{\{x_i \leftrightarrow \varphi_i\} \mid_{\sigma} \land I|_{\sigma}}
        \lor
        \overline \chi \land I \land \paren{\{x_i \leftrightarrow \varphi_i\} \mid_{\sigma} \land I|_{\sigma}}\\
      &\quad =
      \bigwedge_{(x, \varphi) \in \mathcal F} (x \leftrightarrow \varphi)|_{\sigma} \land \paren{\chi \land \paren{\{x_i \leftrightarrow \varphi_i\} \mid_{\sigma}}
        \lor
        \overline \chi \land \paren{\{x_i \leftrightarrow \varphi_i\} \mid_{\sigma}}}\\
      &\quad =
      \bigwedge_{(x, \varphi) \in \mathcal F} (x \leftrightarrow \varphi)|_{\sigma} \land
        \paren{\paren{\{x_i \leftrightarrow \chi \land \varphi_i\} \mid_{\sigma}}
        \lor
        \paren{\{x_i \leftrightarrow \overline \chi \land  \varphi_i\} \mid_{\sigma}}}\\
      &\quad =
      \bigwedge_{(x, \varphi) \in \mathcal F} (x \leftrightarrow \varphi)|_{\sigma} \land
        \paren{\paren{\{x_i \leftrightarrow \chi \land \varphi_i\lor \overline \chi \land  \varphi_i \} \mid_{\sigma}}}
    \end{align}
    as desired by repeated usage of~\cref{lemma:inc exc prob,lemma:ind conj prob}.
    \item If $s = \vec{\texttt{m}}\texttt{ = mmap }\vec{\texttt{x}}$,
    we defer the proof to the next case, with the specialization that $e = \tt$.
    \item If $s = \vec{\texttt{m}}\texttt{ = mmap }\vec{\texttt{x}} \ \pineapplcode{ with \{e\}}$,
    then it suffices to show that, for $e \leadsto_E \psi$,
    \begin{equation*}
      MMAP_{\mathcal D} (\vec x \mid e) = MMAP(\{\bigwedge_{\mathcal F} x_i \leftrightarrow \varphi_i, \psi\}, \vec x, w).
    \end{equation*}
    We observe that
    \begin{align}
      MMAP_{\mathcal D} (\vec x \mid e)
      & = \argmax_{\sigma \in inst(\vec x)} \mathcal D (\sigma  \mid e)\\
      & = \argmax_{\sigma \in inst(\vec x)} \frac{\mathcal D (\sigma \land e)}{\Pr_{\mathcal D}[e]}
      \\
      & = \argmax_{\sigma \in inst(\vec x)} \frac{\AMC(\bigwedge \mathcal F |_\sigma \land \psi, w)}{\AMC_{\mathbb R}(\psi, w)}\\
      & = MMAP(\{\bigwedge_{\mathcal F} x_i \leftrightarrow \varphi_i, \psi\},\vec x, w)
    \end{align}
    as desired.
  \end{itemize}
\end{proof}

Now, we can finally prove~\cref{thm:pineappl correctness}.

\begin{proof}[Proof of~\cref{thm:pineappl correctness}]
  Let $s;q$ be a $\pineappl{}$ program.
  Let $(s, \eset) \Downarrow \mathcal D$ and $(s, \eset,\eset) \leadsto (\mathcal F, w)$.
  By~\cref{thm:pineappl statements invariant} we know that $D \sim (\mathcal F, w)$.
  It suffices to prove correctness for $q = \pineapplcode{Pr(e1) with \{e2\}}$ as the other case is identical with $\texttt{e2}  = \tt$.
  We observe that, as an application of~\cref{lemma:pineappl simulate expressions}
  \begin{align}
    \frac{\Pr_{\mathcal D}[e_1 \land e_2]}{\Pr_{\mathcal D}[e_2]}
    &=\frac{\AMC_{\mathbb R}(\varphi \land (\land_{(x, \varphi) \in \mathcal F} x \leftrightarrow \varphi) \land \psi,w)}
      {\AMC_{\mathbb R}(\psi \land (\land_{(x, \varphi) \in \mathcal F} x \leftrightarrow \varphi),w)}
  \end{align}
  which completes the proof.
\end{proof}


\end{document}